\newif\ifdraft\drafttrue
\documentclass{article} 
\usepackage[text={6.5in,9in},centering]{geometry}
\usepackage{amsmath,amssymb,latexsym,stmaryrd,bbm,eucal}
\usepackage{xcolor,enumerate,multirow,bussproofs}
\usepackage[colorlinks=true,citecolor=red]{hyperref}
\usepackage{tikz,tikz-3dplot}
\usetikzlibrary{chains,arrows,arrows.meta,automata,calc,shadings}
\usepackage{subcaption}
\usepackage{apxproof,cite}
\usepackage{authblk}

\numberwithin{equation}{section}
\renewcommand{\arraystretch}{1.15}
\newtheoremrep{theorem}{Theorem}[section]
\newtheoremrep{lemma}[theorem]{Lemma}
\newtheoremrep{corollary}[theorem]{Corollary}

\usepackage{tcolorbox,color}
\definecolor{cornellred}{RGB}{196,18,48}
\definecolor{dartmouthgreen}{RGB}{0,112,60}
\definecolor{mcgillred}{RGB}{237,27,47}
\definecolor{uclpurple}{RGB}{96,40,153}
\definecolor{ucllightblue}{RGB}{102,204,255}

\usepackage{wrapfig}

\newcommand\eid\id

\newcommand\Rto\Rrightarrow
\newcommand\src{\mathop{\mathsf{src}}\nolimits}
\newcommand\tgt{\mathop{\mathsf{tgt}}\nolimits}
\newcommand\TernaryInfC\TrinaryInfC %ugh
\renewcommand\dim{\mathop{\mathsf{dim}}\nolimits}
\newcommand\id{\mathop{\mathsf{id}}\nolimits}
\newcommand\crc[1]{\mathrel{\circ_{#1}}}
\newcommand\s[2]{^{(#1)}(#2)}
\newcommand\prefix{\mathop{\mathsf{P}}}
\newcommand\gto[1]{\stackrel{\mathrm{#1}}\longrightarrow}
\newcommand\Pic{\mathop{\mathsf{Pic}}\nolimits}
\newcommand\hgt[2]{h_{#1}(#2)}
\newcommand\refl{\mathop{\mathsf{refl}}\nolimits}
\newcommand\ident[1]{1_{[#1]}}
\newcommand\crct[2]{\circ_{#2}^{#1}}
\newcommand\op{^{\mathrm{op}}}
\newcommand\cmp{{\kern2pt;\kern2pt}}
\newcommand\med{\mathrel{\lozenge}}
\newcommand\Otimes{\textstyle\bigotimes}
\newcommand\Id[3]{\mathsf{Id}_{#1}(#2,#3)}
\newcommand{\Top}{\mathsf{Top}}
\newcommand\curry{\mathsf{curry}}
\newcommand\uncurry{\mathsf{uncurry}}
\newcommand\bdry[1]{\sim_{#1}}
\newcommand\rest\restriction
\newcommand\Pinh{P^{\mathrm{inh}}}
\newcommand\EEst{\EE^{st}}

\newcommand\extagqed{\tag*\qedhere}
\newcommand\exqed{}

\newcommand\seq[3]{#1_{#2},\ldots,#1_{#3}}
\newcommand\lam[2]{\lambda#1\kern1pt.\kern1pt#2}
\newcommand\rev[1]{#1\relax{}^-}
\newcommand\naturals{{\mathbb N}}
\newcommand\reals{{\mathbb R}}
\newcommand\rationals{{\mathbb Q}}
\newcommand\set[2]{\{#1 \mid #2\}}
\newcommand\longline{\begin{flushleft}\mbox{}\hrulefill\mbox{}\end{flushleft}}
\newcommand\EE{{\cal E}}
\newcommand\FF{{\cal F}}
\newcommand\subs{\mathrel{\subseteq}}
\newcommand\floor[1]{\lfloor #1 \rfloor}
\newcommand\ceil[1]{\lceil #1 \rceil}
\newcommand\eps{\varepsilon}
\newcommand\Set{\ensuremath{\mathsf{Set}}}
\newcommand\Hom{\mathop{\mathsf{Hom}}}
\newcommand\len[1]{|#1|}
\newcommand\Imp{\Rightarrow}

\newcommand{\theory}{\mathsf{SEq}}

\title{Strict Equality in Higher Identity Types:\\Axioms, Completeness, Decidability}
\author[1]{Keri D'Angelo}
\author[1,2]{Liron Cohen}
\author[1]{Luke Serafin}
\author[1]{Dexter Kozen}
\affil[1]{Cornell University}
\affil[2]{Ben-Gurion University}
\begin{document}

\maketitle

\begin{abstract}

Reasoning about higher identity types uses two equalities, strict equality and equivalence of cells witnessed by a path. Strict equality is a congruence for every operation on cells, but associativity and the identity laws fail for it. 
Equivalence of cells satisfies those two laws, but it is not a congruence. 
Representations of higher identity types often conflate  equational properties that hold strictly with those that hold only up to a path. We separate the two and axiomatize the former, introducing $\theory$, an equational theory generated by eleven axioms for strict equality between cells.
We prove soundness and completeness for homotopy models, and completeness already over finite CW-complexes. Nine of these axioms form a strongly normalizing rewrite system whose normal forms are unique up to picture equivalence, and the resulting word problem is decidable in exponential time. We also give presheaf models over a cubical base category in which composition arises from a pushout in the base category itself, without requiring Kan filling structure. The same pushout structure provides explicit higher-dimensional witnesses for associativity and the identity laws.

%\keywords{higher identity types, equational theory, completeness, decidability, cubical sets, rewriting, presheaf models}
\end{abstract}
 
\section{Introduction}

Recent years have seen the development of many complementary approaches to understanding the structure of higher types, especially higher identity types, in Martin-L\"of type theory and constructive mathematics. This includes topological approaches such as HoTT and univalent foundations \cite{HoTTBook}, categorical approaches such as strong and weak $\omega$-categories and their cubical counterparts
\cite{Futia04,AlAgl02}, type-theoretic approaches such as cubical type theory \cite{Bezem14,Cohen16,Licata18,Bentzen_2021}, and algebraic approaches such as computads \cite{Makkai05}. These approaches are all related on some basic level but may differ in their axiomatizations, choice of primitive concepts, and degree of constructivity.

A fundamental notion, and perhaps the primary motivation for much research in this area, is that of \emph{equality at a type}, where Martin-L\"of's higher identity types play a dominant role. 
A proof of equality is typically a \emph{path} in a space, serving as a witness for an equality judgment. This type of equality is often called \emph{propositional equality} and is denoted $\simeq$ in this paper. This is weaker than strict equality, often called \emph{judgmental equality} and here denoted $=$. 
Paths form spaces themselves, so one may speak of paths between paths, and so on, giving the hierarchy of higher identity types. Many of the structures above were designed to capture the properties of this hierarchy at an abstract level.

In this paper we focus on the model theory of higher identity types in their various incarnations as weak $\omega$-categories, homotopy models, and presheaf models over cubical sets. Our starting point is the observation that standard presentations tend to conflate desirable equational properties that come from strict equality ($=$) and those that come from equivalence as witnessed by a path ($\simeq$), whereas there are subtle but important reasons to treat them separately. 
For example, two $n$-cells of a higher identity type can be composed only when their
boundaries agree on the nose: the composite\footnote{For $\alpha$ and $\beta$ $n$-dimensional cells, $\alpha \crct mk \beta$ means the composition of $\alpha$ and $\beta$ along their $(n-k)$th $(m-1)$-dimensional faces; see section~\ref{sec:homotopycomp} for the definition.} $\alpha\crct n{n-k}\beta$ exists only
if $\src^k\alpha=\tgt^k\beta$, and no path can be asked to witness that equation,
because until it holds there is nothing for a path to connect.  The same is true of the globular laws $\src\tgt\alpha=\src^2\alpha$ and $\tgt\src\alpha=\tgt^2\alpha$.  

Standard presentations, however, do not isolate it and in fact mix laws that hold strictly with
laws that hold only up to equivalence witnessed by a path.  
Strict equality enjoys the advantage that all relevant algebraic operations (e.g., the globular maps $\src$ and $\tgt$, the identity map $\id$, and compositions $\crct nk$) are congruences. However, associativity of composition and the identity laws $\alpha=\alpha\crct n{n-k}\id^k\src^k\alpha$ and $\alpha=\id^k\tgt^k\alpha\crct n{n-k}\alpha$ do not hold strictly, but only up to $\simeq$. On the other hand, although $\simeq$ satisfies associativity and the identity laws, it fails to be a congruence with respect to $\id$. Write $\alpha\simeq\beta$ when some cell $\phi$ one dimension higher has $\alpha$ as source and $\beta$ as target. If $\id\alpha\simeq\id\beta$ via $\phi$, then $\alpha = \src\id\alpha = \src\src\phi = \src\tgt\phi = \src\id\beta = \beta$, using only $\src\id\alpha=\alpha$ and the globular law. So $\simeq$ can be a congruence for $\id$ only if it is equality.

Definitions of weak $\omega$-categories \cite{Batanin98,BaezDolan98,Bergner06,Hermida00,Hermida01,Hermida02,Futia04} and computads \cite{Makkai05} impose both
kinds of law at once.  
The division also  presents itself in the exchange rule that 
is usually presented as a law of $\simeq$, yet under the standard topological
definition of composition it holds strictly.  
Another example is $\id\alpha=\id\alpha\crct n{n-1}\id\alpha$.  
In the strict setting it is an immediate consequence of the identity laws
\cite{BrownHiggins81a,AlAgl02,Futia04}, and its type-theoretic shadow is the
judgmental equality $\refl a\cdot\refl a\equiv\refl a$ of
\cite[\S2.1]{HoTTBook}.  But once the identity laws are
weakened to hold only up to $\simeq$, it seems that it must be taken as a law in its own right.

These observations indicate that there is insight to be gained from a more careful separation of concerns. We propose to consider separately the properties of $=$ and $\simeq$ in models of higher identity types. In this paper, we focus on the former. 
We provide  $\theory$, an equational axiomatization of strict equality ($=$) consisting of eleven equational axioms and explore its consequences. The axiomatization consists of various well-known laws involving the globular maps $\src$ and $\tgt$, the identity map $\id$, and compositions $\crct nm$, including the globular laws and exchange. 
Thus our strict equational theory is quite different from computads \cite{Futia04,Makkai05}, which are assumed to satisfy associativity and identity laws for composition, and from strong $\omega$-categories~\cite{Street87}, in which additionally $\id$ is a congruence. Futia \cite{Futia04} takes a step in this direction and considers the two equational theories separately, but his axiomatization of the strict theory does not mention the equation $\id\alpha = \id\alpha\crct n{n-1}\id\alpha$ and treats the exchange law as a property of $\simeq$. The separation of the two layers also has a semantic counterpart: Swan \cite{Swan26} shows that path types and identity types can be separated in presheaf models of
univalent type theory.

Our main results are that $\theory$ is sound and complete for the strict equational theory of all homotopy models. More strongly, we show that it is sound for a (very wide) class of presheaf models over a certain cubical base category and complete for a (very narrow) class of topological models, namely the finite CW-complexes. We also show that the equational theory admits an exponential-time decision procedure.

That the laws of $\theory$ hold on the nose is a consequence of how composition is
obtained.  Cubical algebra has carried composition as primitive structure since
\cite{BrownHiggins81a,AlAgl02}, where each $\crct nm$ is part of the data.  
The presheaf models of cubical type theory do not. They differ in what the
interval object of the base category carries, a de Morgan algebra
in~\cite{Cohen16} and merely Cartesian structure in~\cite{Angiuli21}, but in
all of them composition of cells is absent from the base category: each type
must instead be equipped with Kan composition structure of its own.
Thus, composition is not primitive and does not respect globularity.  
We here recover it as structure on the base category itself (\S\ref{sec:presheaf}). 
Concretely, we propose a primitive composition mechanism in the form of a certain pushout on the base category of cubes. One can then define compositions natively without recourse to space-filling and in a way that respects the structure of higher identity types.

The construction provides a direct, constructive account of the weak laws. In standard presheaf models of cubical type theory, higher-dimensional witnesses are obtained through Kan filling, whereas we construct them from morphisms of the cubical base category. In particular, witnesses for associativity and identity laws arise from the same pushout structure defining composition, and coherence conditions such as the pentagon identity can be verified equationally. This complements our decidability result for strict equality: equality of terms in our equational theory can be decided algorithmically, while associativity and the identity laws are given explicit witnesses.

\paragraph{Main contributions:}
\begin{itemize}
\item
an equational presentation of the strict equational theory of higher identity types, including a construction of free models and their properties (\S\ref{sec:eqpres});
\item
completeness for homotopy models (\S\ref{sec:completeness});
\item
completeness for finite CW-complexes (\S\ref{sec:CW});
\item
an exponential-time decision procedure for strict equality (\S\ref{sec:decproc});
\item
a formulation of presheaf models with native composition and soundness over these models (\S\ref{sec:presheaf});
\item
a calculus for the morphisms of $\boxdot$, in which witnesses for associativity
and the identity laws are constructed directly (\S\ref{sec:connections}, \S\ref{apx:boxdot1}, and \S\ref{apx:boxdot}).
\item
a homeomorphism between higher identity types as iterated path spaces and as cubical sets (\S\ref{sec:homotopymodels} and \S\ref{apx:ident}).\end{itemize}

\section{Homotopy Models}
\label{sec:homotopymodels}

Our motivating semantic models are the homotopy models of higher identity types~\cite{AwodeyWarren09} as they arise in topological models of HoTT and cubical type theory. There are two equivalent views of these models, one as an iterated hierarchy of spaces of continuous paths generated inductively in a uniform way, the other as a class of homotopies $\alpha:I^n\to X$ satisfying a certain boundary-preservation condition, where $I$ is the closed real unit interval and $X$ is an arbitrary topological space. %
The two views are homeomorphic. Passing from the first to the second is currying: a path between paths is a map $I\to(I\to X)$, which uncurries to a map $I^2\to X$, and iterating gives $I^n\to X$. This is a homeomorphism and not
merely a bijection because $I$ is compact Hausdorff, so that $(-)\times I$ is left adjoint to $(-)^I$ in the compact-open topology. Under the correspondence, the endpoint conditions at each level, $p(0)=a$ and $p(1)=b$, become the
boundary-preservation condition on $I^n\to X$. Further details are in \S\ref{apx:ident}.

In the first view, one has an inductively constructed hierarchy of spaces
\begin{itemize}
\item Level 0: $a : X$
\item Level 1: $p : \Id Xab$, where $a,b : X$
\item Level 2: $\alpha : \Id{\Id Xab}pq$, where $p,q : \Id Xab$
\item Level 3: $\phi : \Id{\Id{\Id Xab}pq}\alpha\beta$, where $\alpha,\beta : \Id{\Id Xab}pq$,
\end{itemize}
etc., where the space of continuous functions $I\to X$ is endowed with the compact-open topology relative to $X$ and $\Id Xab$ denotes the subspace of all paths $p:I\to X$ such that $p(0)=a$ and $p(1)=b$.

In the second view, we have a family of \emph{$n$-cells} for each $n\ge 0$, which are continuous maps $\alpha:I^n\to X$ satisfying a certain \emph{boundary preservation condition}. These will be models of the axioms to be presented later in \S\ref{sec:axioms}.

An \emph{$n$-dimensional homotopy} is a continuous map $\alpha:I^n\to X$. Not all homotopies are $n$-cells, but only those satisfying the boundary-preservation condition. We define $\dim\alpha=n$. The set of $n$-dimensional homotopies is denoted $H_n$.

Given an $n$-dimensional homotopy $\alpha:I^n\to X$ and $1\le k\le n$, let $\alpha^{(k)}$ be $\alpha$ curried on its $k$-th component:
\begin{align*}
& \alpha^{(k)}:I\to I^{n-1}\to X \qquad 
& \alpha\s k{x_k}(\seq x1{k-1},\seq x{k+1}n) = \alpha(\seq x1n).
\end{align*}
We always think of the first component as time and often write $t$ for $x_1$, so
\begin{align*}
& \alpha\s 1t(\seq x2n) = \alpha(t,\seq x2n).
\end{align*}
For $\alpha\in H_n$, $x\in I$, and $1\le k\le n$, $\alpha\s kx$ is called a \emph{section} of $\alpha$. A \emph{face} of $\alpha$ is a section $\alpha\s kb$ with $b\in\{0,1\}$. Two special faces are $\src\alpha=\alpha\s 10$ and $\tgt\alpha=\alpha\s 11$, called the \emph{source} and \emph{target} of $\alpha$, respectively. We thus view $\alpha$ as a continuous deformation of $\src\alpha$ to $\tgt\alpha$, passing through intermediate states $\alpha\s 1t:I^{n-1}\to X$. For $n\ge 2$, the maps $\alpha\s 20$ and $\alpha\s 21$ describe the trajectory of $\src\src\alpha$ and $\tgt\src\alpha$ throughout the deformation.

Lower-dimensional sections and faces can be defined inductively. For $1\le i_1<i_2<\cdots<i_k\le n$ and $\alpha\in H_n$, let $\alpha^{(\seq i1k)}:I^k\to I^{n-k}\to X$ be $\alpha$ curried on components $\seq i1k$:
$\alpha\s{\seq i1k}{\seq x1k} = \alpha\s{i_k}{x_k}\s{\seq i1{k-1}}{\seq x1{k-1}}$. 
The function $\alpha\s{\seq i1k}{\seq x1k}:I^{n-k}\to X$ is an $n-k$-dimensional section and represents $\alpha$ partially evaluated at values $\seq x1k$ for the variables at positions $\seq i1k$, respectively. For example,
$\alpha\s{1,3}{a,b}(x,y,z) = \alpha(a,x,b,y,z)$.
An $\alpha\in H_n$ is completely determined by its sections $\alpha\s{\seq i1k}{\seq x1k}$ for any fixed selection of indices $1\le i_1 < \cdots < i_k\le n$. One can define $\alpha$ by specifying these sections and verifying that the corresponding uncurried version is continuous. An essential tool in this regard is Lemma \ref{lem:compactopen}.

Regarding notation: When using prefix and postfix unary operators together, the postfix takes precedence. For example, $\src\alpha\s1t$ should be read as $\src(\alpha\s1t)$. To obtain $(\src\alpha)\s1t$, parentheses must be used. As before, binary operators have least precedence; thus $\src\alpha\crct nk\tgt\alpha$ should be read as $(\src\alpha)\crct nk(\tgt\alpha)$.

\subsection{Identities and Composition}
\label{sec:homotopycomp}

If $\alpha\in H_n$, define $\id\alpha$ by $(\id\alpha)\s 1t=\alpha$; that is, $(\id\alpha)^{(1)}$ is the constant function $\lam t\alpha$. Then $\id\alpha\in H_{n+1}$, as it is the composition of $\alpha:I^n\to X$ with the continuous projection $\lam{t,\bar x}{\bar x}:I^{n+1}\to I^n$. An \emph{identity} is any homotopy of the form $\id\alpha$; that is, $\beta$ is an identity iff the value of $\beta\s1t$ is independent of $t$. Note that $\beta$ is an identity iff $\beta = \id\src\beta$, and also iff $\beta = \id\tgt\beta$.

If $\alpha,\beta\in H_n$, $1\le k\le n$, and $\alpha\s k1=\beta\s k0$, we can compose them along their $k$-th $(n-1)$-dimensional faces:
{\renewcommand\arraystretch{1}
\begin{align}
(\beta\crct n{n-k}\alpha)\s kx &= \begin{cases}
\alpha\s k{2x}, & \text{if $0\le x\le 1/2$}\\
\beta\s k{2x-1}, & \text{if $1/2\le x\le 1$.}
\end{cases}
\label{eq:kacomp}
\end{align}}
There is no discontinuity at $x=1/2$, because $\alpha\s k1=\beta\s k0$. For $k=1$ this is the familiar concatenation, defined whenever $\tgt\alpha=\src\beta$.

Routine expansion of these definitions shows that $\src$ and $\tgt$ are left inverses of $\id$ (see \eqref{eq:srcofid}), that the source and target of a composite are determined by those of its arguments (see \eqref{eq:srcofcomp1}, \eqref{eq:srcofcompk} and \eqref{eq:srckofcomp}), and that $\id$ distributes over composition (see \eqref{eq:idofcomp}). The calculations are carried out in \S\ref{apx:soundness}.

Viewing $\alpha\in H_n$ for $n\ge 1$ as a deformation from $\src\alpha$ to $\tgt\alpha$, there is an $\rev\alpha\in H_n$ giving a deformation in the reverse direction, namely $(\rev\alpha)\s 1t = \alpha\s 1{1-t}$. This is an involution in the sense that $\rev{(\rev\alpha)} = \alpha$.

\subsection{Cells}
\label{sec:cell}

As mentioned, a \emph{cell} is a homotopy satisfying a certain boundary-preservation condition. All $0$- and $1$-dimensional homotopies are cells by definition. For $n\ge 2$, an \emph{$n$-cell} is an $\alpha\in H_n$ such that (a) for all $t\in I$, $\alpha\s 1t$ is (inductively) an $(n-1)$-cell, and (b) for $b\in\{0,1\}$, $\alpha\s 2b$ is an identity. Condition (b) says that $\alpha\s 2b\s1t$ is independent of $t$, equivalently $\alpha\s 2b = \id\src\alpha\s 2b$; for $b=0$ it is equivalent to $\alpha\s 20 = \id\src^2\alpha$, since $\src\alpha\s 20 = \alpha\s10\s 20 = \alpha\s10\s 10 = \src^2\alpha$.

One thinks of a cell as an open set whose boundary comprises lower-dimensional cells. 
Fig.~\ref{fig:cell-diagram} %in the appendix 
illustrates a cell $\alpha:I^3\to X$ of dimension three. 
%\begin{toappendix}
\newcommand\radius{1.5}
\tdplotsetmaincoords{65}{125}
\begin{figure}
\begin{align*}
&
\begin{tikzpicture}[-, auto, scale = 0.6, inner sep=0pt, minimum size=0pt]
\small
%\draw[help lines] (-4.4,-1.2) grid (8.7,4.9);
\useasboundingbox (-4.4,-1.2) rectangle (8.7,4.9);
\node (000) at (0,0) {};
\node (001) at (0,2) {};
\node (010) at (2,0) {};
\node (011) at (2,2) {};
\node (100) at (2.5,1.5) {};
\node (101) at (2.5,3.3) {};
\node (110) at (4.3,1.5) {};
\node (111) at (4.3,3.3) {};
\draw (000) -- (100);
\draw (001) -- (101);
\draw (010) -- (110);
\draw (011) -- (111);
\draw (000) -- (001);
\draw (000) -- (010);
\draw (011) -- (010);
\draw (011) -- (001);
\draw (100) -- (101);
\draw (100) -- (110);
\draw (111) -- (110);
\draw (111) -- (101);
\fill[red!60, opacity=.4] (2.5,1.5) -- (2.5,3.3) -- (4.3,3.3) -- (4.3,1.5);
\fill[red!60, opacity=.4] (2.5,1.5) -- (2.5,3.3) -- (0,2) -- (0,0);
\fill[red!80, opacity=.45] (0,0) -- (2.5,1.5) -- (4.3,1.5) -- (2,0);
\fill[red!60, opacity=.4] (0,0) -- (0,2) -- (2,2) -- (2,0);
\fill[red!40, opacity=.4] (0,2) -- (2.5,3.3) -- (4.3,3.3) -- (2,2);
\fill[red!60, opacity=.4] (2,0) -- (2,2) -- (4.3,3.3) -- (4.3,1.5);
\path (-1,1.7) edge[->] (-.3,1.7);
\node[anchor=east] at (-1.1,1.7) {$\alpha\s20 = \id\src^2\alpha$};
\path (5.3,1.7) edge[->] (4.6,1.7);
\node[anchor=west] at (5.4,1.7) {$\alpha\s21 = \id\tgt^2\alpha$};
\path (2.5,-.7) edge[->] (2.5,0);
\node[anchor=north] at (2.5,-.8) {$\alpha\s30 = \id^2\src^3\alpha$};
\path (2.5,4.3) edge[->] (2.5,3.6);
\node[anchor=south] at (2.5,4.4) {$\alpha\s31 = \id^2\tgt^3\alpha$};
\path (-1,-.1) edge[->] (-.3,.3);
\node[anchor=east] at (-1.3,-.3) {$\alpha\s10 = \src\alpha$};
\path (5.3,3.4) edge[->] (4.6,3);
\node[anchor=west] at (5.5,3.6) {$\alpha\s11 = \tgt\alpha$};
\path (000) edge[->, >=stealth', auto, inner sep=2pt, thick] node {$x_3$} (001);
\path (000) edge[->, >=stealth', auto, inner sep=2pt, thick] node[swap] {$x_2$} (010);
\path (000) edge[->, >=stealth', auto, inner sep=2pt, thick, pos=.6] node {$x_1$} (100);
\end{tikzpicture}
&\hspace{5mm}&
\begin{tikzpicture}[-, auto, scale = 0.7, inner sep=0pt, minimum size=0pt]
%\draw[help lines] (-1.5,-2.2) grid (1.5,2);
\useasboundingbox (-1.5,-2.2) rectangle (1.5,2);
 \coordinate (O) (0,0);
 \draw[ball color=cyan!50!blue, opacity=0.4, draw opacity=0] (O) circle[radius={\radius}];
 \begin{scope}[tdplot_main_coords]
  \begin{scope}[canvas is xy plane at z={\radius}] 
  \fill (0,0) circle[radius={3pt}];
 \end{scope}
 \begin{scope}[canvas is xy plane at z={-\radius}] 
  \fill (0,0) circle[radius={3pt}];
 \end{scope}
 \begin{scope}[canvas is yz plane at x=0]
  \draw[dashed] (\tdplotmainphi:\radius) 
  arc[start angle={\tdplotmainphi}, 
  end angle={\tdplotmainphi+180}, 
  radius={\radius}];
  \draw[thick] (\tdplotmainphi:\radius) 
  arc[start angle={\tdplotmainphi}, 
  end angle={\tdplotmainphi-180}, 
  radius={\radius}];
 \end{scope}
 \end{scope}
\end{tikzpicture}
\end{align*}
\caption{A 3-cell $\alpha:I^3\to X$ in cubical and globular representations.}
\label{fig:cell-diagram}
\end{figure}
%\end{toappendix}

In the rectilinear diagram, each point of $I^3$ maps to a point of $X$, and the source and target of $\alpha$ are the front and back faces. Along the left face every vertical line has the same image as $\src^2\alpha$, so that
two-dimensional face represents a one-dimensional line in $X$; likewise $\alpha$ is constant on the top face, which therefore represents a single point. The map $\alpha:I^3\to X$ thus factors through the spherical diagram beside it. The
rectilinear representation comes from cubical type theory and it is more convenient for formalizing compositions and is in general more amenable to constructions.
Further properties of these models, including the soundness of the axioms of \S\ref{sec:axioms}, are developed in \S\ref{apx:soundness}.

\begin{toappendix}
\section{Soundness over Homotopy Models}
\label{apx:soundness}

We record here the calculations underlying the equations cited in \S\ref{sec:homotopycomp} and prove soundness of the axioms of \S\ref{sec:eqpres}.
 
Both $\src$ and $\tgt$ are left inverses of $\id$:
\begin{align}
\src\id\alpha &= (\id\alpha)\s10 = \alpha &
\tgt\id\alpha &= (\id\alpha)\s11 = \alpha.\label{eq:srcofid}
\end{align}
For $\beta\crct n{n-1}\alpha$, evaluating \eqref{eq:kacomp} at $k=1$ and $x\in\{0,1\}$ gives
\begin{align}
\src(\beta\crct n{n-1}\alpha) &= (\beta\crct n{n-1}\alpha)\s 10 = \alpha\s 10 = \src\alpha,\nonumber\\
\tgt(\beta\crct n{n-1}\alpha) &= (\beta\crct n{n-1}\alpha)\s 11 = \beta\s 11 = \tgt\beta.\label{eq:srcofcomp1}
\end{align}
For $k>1$,
\begin{align*}
(\src\alpha)\s{k-1}1 &= \alpha\s 10\s{k-1}1 = \alpha\s k1\s 10\\
&= \beta\s k0\s 10 = \beta\s 10\s{k-1}0\\
&= (\src\beta)\s{k-1}0,
\end{align*}
so $\src\beta\crct n{n-k}\src\alpha$ exists, and
\begin{align}
\src(\beta\crct n{n-k}\alpha) &= (\beta\crct n{n-k}\alpha)\s10\\
&= \begin{cases}
\alpha\s k{2x}\s10, & \text{if $0\le x\le 1/2$}\nonumber\\
\beta\s k{2x-1}\s10, & \text{if $1/2\le x\le 1$}
\end{cases}\\
&= \begin{cases}
\alpha\s10\s{k-1}{2x}, & \text{if $0\le x\le 1/2$}\nonumber\\
\beta\s10\s{k-1}{2x-1}, & \text{if $1/2\le x\le 1$}
\end{cases}\\
&= \begin{cases}
(\src\alpha)\s{k-1}{2x}, & \text{if $0\le x\le 1/2$}\nonumber\\
(\src\beta)\s{k-1}{2x-1}, & \text{if $1/2\le x\le 1$}
\end{cases}\\
&= \src\beta\crct n{n-k}\src\alpha.\label{eq:srcofcompk}
\end{align}
Similarly, $\tgt(\beta\crct n{n-k}\alpha) = \tgt\beta\crct n{n-k}\tgt\alpha$.
It follows inductively that
\begin{align}
\src^k(\beta\crct n{n-k}\alpha) &= \src^k\alpha & \tgt^k(\beta\crct n{n-k}\alpha) &= \tgt^k\beta.\label{eq:srckofcomp}
\end{align}

Also, for $n$-cells $\alpha,\beta$ and $1\le k\le n$,
\begin{align*}
& (\id\beta\crct n{n-k}\id\alpha)\s1t\s ks\\
&= (\id\beta\crct n{n-k}\id\alpha)\s{k+1}s\s1t\\
&= \begin{cases}
(\id\alpha)\s{k+1}{2s}\s1t & \text{if $0\le s\le 1/2$}\\
(\id\beta)\s{k+1}{2s-1}\s1t & \text{if $1/2\le s\le 1$}
\end{cases}\\
&= \begin{cases}
(\id\alpha)\s1t\s{k}{2s} & \text{if $0\le s\le 1/2$}\\
(\id\beta)\s1t\s{k}{2s-1} & \text{if $1/2\le s\le 1$}
\end{cases}\\
&= \begin{cases}
\alpha\s{k}{2s} & \text{if $0\le s\le 1/2$}\\
\beta\s{k}{2s-1} & \text{if $1/2\le s\le 1$}
\end{cases}\\
&= (\beta\crct n{n-k}\alpha)\s ks,
\end{align*}
thus $(\id\beta\crct n{n-k}\id\alpha)\s1t = \beta\crct n{n-k}\alpha$. As this is independent of $t$,
\begin{align}
\id(\beta\crct n{n-k}\alpha) = \id\beta\crct n{n-k}\id\alpha.\label{eq:idofcomp}
\end{align}

\begin{lemma}
\label{lem:cellprops}
Let $\alpha\in H_n$. The following are equivalent:
\begin{enumerate}[{\upshape(i)}]
\item $\alpha$ is a cell;
\item for all $2\le k\le n$, for all $\seq x1{k-1}\in I^{k-1}$ and $b\in\{0,1\}$,
\begin{align}
\alpha\s{1,2,\ldots,k}{\seq x1{k-1},b}
= \alpha\s{1,2,\ldots,k}{b,\ldots,b,b};\label{eq:lotsofbs}
\end{align}
\item for all $1\le k\le n$, $\alpha\s k0 = \id^{k-1}\src^k\alpha$ and $\alpha\s k1 = \id^{k-1}\tgt^k\alpha$;
\item for all $1\le m\le n-1$ and $b\in\{0,1\}$, the value of $\alpha\s{m,m+1}{x,b}$ is independent of $x$.
\end{enumerate}
\end{lemma}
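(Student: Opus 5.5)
I will prove the equivalences in a cycle, using induction on $n$ for the steps that involve the recursive definition of cell. The basic tool is the commutation rule for sections,
$\alpha\s ix\s jy = \alpha\s{j+1}y\s ix$ for $j\ge i$. It lets me rewrite iterated sections as $\alpha\s{1,\ldots,k}{\ldots}$. It also gives the identity $\src^k\alpha=\alpha\s{1,\ldots,k}{0,\ldots,0}$ and the analogous identity for $\tgt$.

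\textbf{(ii)$\Leftrightarrow$(iii).} Unfolding $\id$, the equation $\alpha\s k0=\id^{k-1}\src^k\alpha$ says exactly this: for every $\bar x\in I^{k-1}$, after currying on coordinates $1,\ldots,k$, we have $\alpha\s{1,\ldots,k}{\bar x,0}=\src^k\alpha=\alpha\s{1,\ldots,k}{0,\ldots,0}$. The same holds for $b=1$. The case $k=1$ of (iii) is trivial by the definitions of $\src,\tgt$. Hence (ii) and (iii) are the same statement once the currying is written out.

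\textbf{(i)$\Rightarrow$(ii)} by induction on $n$. For $n\le1$ there is nothing to show. Let $\alpha$ be a cell and fix $2\le k\le n$. By (a), each $\alpha\s1t$ is an $(n-1)$-cell, so by the induction hypothesis (ii) holds for it at every $k-1\ge2$. That is, $\alpha\s{1,\ldots,k}{t,x_2,\ldots,x_{k-1},b}$ equals $\alpha\s{1,\ldots,k}{t,b,\ldots,b}$. For $k=2$ this step is vacuous. It remains to remove the dependence on $t$. By (b), $\alpha\s2b$ is an identity, so $\alpha\s2b\s1t$ is independent of $t$. After commuting sections, $\alpha\s{1,\ldots,k}{t,b,\ldots,b}$ is obtained from $\alpha\s{1,2}{t,b}$ by evaluating further at $b$, so it is independent of $t$ as well. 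Combining the two gives (ii).

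\textbf{(ii)$\Rightarrow$(iv)$\Rightarrow$(i).} Given (ii), I show that $\alpha\s{m,m+1}{x,b}$ does not depend on $x$. The sections of this map are indexed by the remaining coordinates $x_1,\ldots,x_{m-1}$ and by the later coordinates. Apply (ii) with $k=m+1$. It shows that $\alpha\s{1,\ldots,m+1}{x_1,\ldots,x_{m-1},x,b}$ is constant in all of $x_1,\ldots,x_{m-1},x$, hence in $x$. For $m=1$ the same argument applies with $k=2$.

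For (iv)$\Rightarrow$(i) I again induct on $n$. Condition (b) is the case $m=1$ of (iv). For (a), I check that $\alpha\s1t$ satisfies (iv) in dimension $n-1$. Its $(m,m+1)$ section is $\alpha\s{m+1,m+2}{x,b}$ evaluated at $x_1=t$, and this is independent of $x$ by (iv) for $\alpha$ with index $m+1\le n-1$. The induction hypothesis then makes $\alpha\s1t$ a cell.

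\textbf{Main obstacle.} The logic is easy; the hard part is the index bookkeeping when commuting sections, in particular the shift $j\mapsto j+1$ when a section is taken below an earlier one. I would isolate it as one auxiliary identity,
$\alpha\s{1,\ldots,k}{x_1,\ldots,x_k}=\alpha\s k{x_k}\s{k-1}{x_{k-1}}\cdots\s1{x_1}$, together with its rearrangements, and prove it once by induction from the definition of $\alpha\s{\seq i1k}{\seq x1k}$.
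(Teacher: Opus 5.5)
Your proof is correct and follows essentially the paper's route, with the same inductions on dimension for (i)$\Rightarrow$(ii) (using (a), then the fact that $\alpha\s2b$ is an identity to remove the dependence on the first coordinate) and for (iv)$\Rightarrow$(i) (checking (iv) for $\alpha\s1t$ via $\alpha\s1t\s{m,m+1}{x,b}=\alpha\s{m+1,m+2}{x,b}\s1t$). The only differences are that you get (ii)$\Leftrightarrow$(iii) by directly uncurrying $\id^{k-1}$ and using $\src^k\alpha=\alpha\s{1,\ldots,k}{0,\ldots,0}$ rather than by the paper's induction on $k$, and that you go from (ii) to (iv) without passing through (iii), which is a harmless and slightly cleaner streamlining.
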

\begin{proof}
First we observe that if $\alpha$ is a cell, then so are $\src\alpha$, $\tgt\alpha$, and $\id\alpha$. The first two are by definition of cells. For $\id\alpha$, we have $(\id\alpha)\s1t=\alpha$, so the first condition is satisfied, and $(\id\alpha)\s2b\s1t = (\id\alpha)\s1t\s1b = \alpha\s1b$, which is independent of $t$, so the second condition is satisfied.

(i) $\Imp$ (ii): The argument is by induction on dimension.
\begin{align*}
& \alpha\s{1,\ldots,k}{\seq x1{k-1},b}\\
&= \alpha\s1{x_1}\s{1,\ldots,k-1}{\seq x2{k-1},b}\\
&= \alpha\s1{x_1}\s{1,\ldots,k-1}{b,\ldots,b,b} && \text{induction hypothesis}\\
&= \alpha\s2b\s1{x_1}\s{1,\ldots,k-2}{b,\ldots,b}\\
&= \alpha\s2b\s1{b}\s{1,\ldots,k-2}{b,\ldots,b} && \text{since $\alpha\s2b$ is an identity}\\
&= \alpha\s{1,\ldots,k}{b,\ldots,b}.
\end{align*}

(ii) $\Imp$ (iii): By induction on $k$.

The base case $k=1$ is by definition of $\src$. For $k\ge 2$,
\begin{align*}
\alpha\s k0\s 1t
&= \alpha\s{1,k}{t,0}
= \alpha\s{1,k}{0,0}\\
&= \alpha\s 10\s{k-1}0
= (\src\alpha)\s{k-1}0
\end{align*}
with the second inference from \eqref{eq:lotsofbs}.
By the induction hypothesis, this is
\begin{align*}
\id^{k-2}\src^{k-1}\src\alpha = \id^{k-2}\src^k\alpha. 
\end{align*}
As this is independent of $t$, $\alpha\s k0 = \id^{k-1}\src^k\alpha$. The argument for $\alpha\s k1 = \id^{k-1}\tgt^k\alpha$ is symmetric.

(iii) $\Imp$ (iv): For all $1\le m\le n-1$, for all $\seq x1m,x$,
\begin{align*}
& \alpha\s{m,m+1}{x,0}\s{1,2,\ldots,m-1}{\seq x1{m-1}}\\
&= \alpha\s{m+1}{0}\s{1,2,\ldots,m}{\seq x1{m-1},x}\\
&= (\id^m\src^{m+1}\alpha)\s{1,2,\ldots,m}{\seq x1{m-1},x}
= \src^{m+1}\alpha,
\end{align*}
which is independent of $x$, therefore so is $\alpha\s{m,m+1}{x,0}$. Symmetrically, the same holds for $\alpha\s{m,m+1}{x,1}$.

(iv) $\Imp$ (i): By induction on dimension.
For $\alpha$ satisfying (iv), $1\le m\le n-2$, $t,x\in I$, and $b\in\{0,1\}$,
\begin{align*}
\alpha\s1t\s{m,m+1}{x,b} &= \alpha\s{m+1,m+2}{x,b}\s1t\\
&= \alpha\s{m+1,m+2}{0,b}\s1t = \alpha\s1t\s{m,m+1}{0,b},
\end{align*}
so $\alpha\s1t$ satisfies (iii). By the induction hypothesis, $\alpha\s1t$ is a cell. Moreover, $\alpha\s{1,2}{t,b}$ is independent of $t$ by assumption, thus $\alpha\s2b$ is an identity. Therefore $\alpha$ is a cell.
\exqed
\end{proof}

It follows from the characterization of Lemma \ref{lem:cellprops}(iii) that
\begin{align}
\begin{array}c
\src\tgt\alpha = \alpha\s11\s10 = \alpha\s20\s11 = \tgt\id\src^2\alpha = \src^2\alpha\\[1ex]
\tgt\src\alpha = \alpha\s10\s11 = \alpha\s21\s10 = \src\id\tgt^2\alpha = \tgt^2\alpha.
\end{array}
\label{eq:srcoftgt}
\end{align}

\begin{lemma}
\label{lem:cellsection}
All sections of cells are cells.
\end{lemma}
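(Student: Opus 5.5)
It suffices to show that every single section $\alpha\s kx$ of an $n$-cell $\alpha$, with $1\le k\le n$ and $x\in I$, is an $(n-1)$-cell. An arbitrary section $\alpha\s{\seq i1m}{\seq x1m}$ is by definition an iterated single section $\alpha\s{i_m}{x_m}\s{\seq i1{m-1}}{\seq x1{m-1}}$, so the general statement follows by induction on $m$. For $n-1\le 1$ there is nothing to prove, since all $0$- and $1$-dimensional homotopies are cells. So assume $n\ge 3$. I will verify the criterion of Lemma~\ref{lem:cellprops}(iv) for $\beta=\alpha\s kx\in H_{n-1}$: for every $1\le m\le n-2$ and $b\in\{0,1\}$, the value $\beta\s{m,m+1}{y,b}$ is independent of $y$.

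The key step is to commute the section at position $k$ past the sections at positions $m,m+1$, reindexing the coordinates as in the proof of (iv)$\Imp$(i). There are three cases.

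If $m+1<k$, the coordinates $m,m+1$ of $\beta$ are coordinates $m,m+1$ of $\alpha$. Hence $\beta\s{m,m+1}{y,b}=\alpha\s{m,m+1}{y,b}\s{k-2}x$. By (iv) applied to $\alpha$, the inner factor is independent of $y$, and so is the whole expression.

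If $m\ge k$, the coordinates $m,m+1$ of $\beta$ are coordinates $m+1,m+2$ of $\alpha$. Hence $\beta\s{m,m+1}{y,b}=\alpha\s{m+1,m+2}{y,b}\s kx$, which is again independent of $y$ by (iv) for $\alpha$, since $m+1\le n-1$.

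The remaining case $m+1=k$ is the one where the removed coordinate sits between the two. Here coordinate $m$ of $\beta$ is coordinate $m=k-1$ of $\alpha$, and coordinate $m+1$ of $\beta$ is coordinate $k+1$ of $\alpha$. So I must show that $\alpha\s{k-1,k,k+1}{y,x,b}$ is independent of $y$. I expect this to be the main obstacle, since (iv) only speaks about adjacent coordinate pairs. The plan is to first apply (iv) to the pair $(k,k+1)$: $\alpha\s{k,k+1}{x,b}=\alpha\s{k,k+1}{0,b}$ for all $x$. This gives $\alpha\s{k-1,k,k+1}{y,x,b}=\alpha\s{k-1,k,k+1}{y,0,b}$. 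Next I would use (iii) of Lemma~\ref{lem:cellprops}, namely $\alpha\s k0=\id^{k-1}\src^k\alpha$. Since $\id^{k-1}$ makes the result constant in its first $k-1$ coordinates, $\alpha\s k0$ is in particular constant in coordinate $k-1$. Therefore $\alpha\s{k-1,k,k+1}{y,0,b}$ does not depend on $y$. (Alternatively, apply (iv) to the pair $(k-1,k)$ with $b=0$ after permuting the order of evaluation.)

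In all three cases criterion (iv) holds for $\beta$, so $\beta$ is a cell by Lemma~\ref{lem:cellprops}. The induction on the number of sectioned coordinates then completes the proof. The only bookkeeping care needed is the index shift when evaluating sections of a section, which follows directly from the definition of $\alpha\s{\seq i1k}{\seq x1k}$.
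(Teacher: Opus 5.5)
Your proof is correct, but it takes a different route from the paper.

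The paper proceeds by induction on the dimension of the section. It checks the recursive definition of a cell directly, for an arbitrary multi-index section $\beta=\alpha\s{\seq i1k}{\ldots}$, all at once. Condition (a) holds because each $\beta\s1x$ is again a section of $\alpha$, with one more index and lower dimension, hence a cell by the induction hypothesis. Condition (b) comes from \eqref{eq:lotsofbs}, i.e.\ from the fact that fixing a coordinate of $\alpha$ at $0$ or $1$ kills the dependence on all earlier coordinates.

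You instead reduce to single sections and verify the adjacent-pair criterion (iv) of Lemma~\ref{lem:cellprops}. That choice is what forces your three-way case split. The case $m+1=k$ is exactly where adjacency in $\beta$ stops being adjacency in $\alpha$, and you repair it with (iii).

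What each buys: your version makes the index shifts fully explicit and needs no induction on the section's dimension. The paper's is shorter, but it leaves the reindexing implicit, including the transfer of \eqref{eq:lotsofbs} from $\alpha$ to the section.

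You could also have dispensed with the cases entirely. Use (iii) in the form ``if $x_j\in\{0,1\}$ then $\alpha(\seq x1n)$ does not depend on $\seq x1{j-1}$'', which is the form invoked in the proof of Lemma~\ref{lem:piccell}. This property is visibly inherited by every section, single or multiple, because sectioning preserves the relative order of the surviving coordinates.
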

\begin{proof}
By induction on dimension. Any section of dimension zero or one is a cell by definition. For $n\ge 2$ and $1\le i_1 < \cdots < i_k \le n$,
\begin{align*}
\alpha\s{\seq i1k}{\seq x{i_1}{i_k}}\s 1x
&= \alpha\s{\seq j1{k+1}}{\seq x{j_1}{j_{k+1}}}
\end{align*}
after reindexing, which is a cell by the induction hypothesis, thus $\alpha\s{\seq i1k}{\seq x{i_1}{i_k}}$ satisfies the first condition of cellhood. For the second condition, we must show that $\alpha\s{\seq i1k}{\seq x{i_1}{i_k}}\s 20$ is an identity, provided $n-k\ge 2$ (otherwise, there is nothing to prove). By \eqref{eq:lotsofbs},
\begin{align*}
\alpha\s{\seq i1k}{\seq x{i_1}{i_k}}\s 2b\s 1x
&= \alpha\s{\seq i1k}{\seq x{i_1}{i_k}}\s{1,2}{x,b}\\
&= \alpha\s{\seq i1k}{\seq x{i_1}{i_k}}\s{1,2}{b,b}.
\end{align*}
As this is independent of $x$, $\alpha\s{\seq i1k}{\seq x{i_1}{i_k}}\s 2b$ is an identity.
\exqed
\end{proof}

\subsection{Soundness}

All of the equations of \S\ref{sec:axioms} hold in homotopy models.
Axioms (i) and (ii) were established in \eqref{eq:srcofcompk},
axioms (iii) and (iv) in \eqref{eq:srcofcomp1},
axiom (v) in \eqref{eq:idofcomp},
axioms (vi) and (vii) in \eqref{eq:srcofid}, and
axioms (viii) and (ix) in \eqref{eq:srcoftgt}.
This leaves only axiom (x) and the exchange law axiom (xi), which we argue now.

\begin{lemma}[Axiom (x)]
\label{lem:axiomx}
For all $\alpha$ of dimension $n-1$, $\id\alpha\crct n{n-1}\id\alpha = \id\alpha$.
\end{lemma}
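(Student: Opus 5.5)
We unfold the definition \eqref{eq:kacomp} with $k=1$ and compare the two sides pointwise. Here $\alpha$ is an $(n-1)$-cell and $\id\alpha\in H_n$ satisfies $(\id\alpha)\s1t=\alpha$ for every $t$. The composite $\id\alpha\crct n{n-1}\id\alpha$ is defined because $\tgt\id\alpha=\alpha=\src\id\alpha$ by \eqref{eq:srcofid}.

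By \eqref{eq:kacomp},
\begin{align*}
(\id\alpha\crct n{n-1}\id\alpha)\s1t=\begin{cases}(\id\alpha)\s1{2t}, & 0\le t\le 1/2,\\ (\id\alpha)\s1{2t-1}, & 1/2\le t\le 1.\end{cases}
\end{align*}
In both branches the value is $\alpha$, since $(\id\alpha)\s1s=\alpha$ for every $s\in I$, including the reparametrized values $s=2t$ and $s=2t-1$. Hence $(\id\alpha\crct n{n-1}\id\alpha)\s1t=\alpha$ for all $t\in I$, so this function is independent of $t$.

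By the characterization in \S\ref{sec:homotopycomp}, an $n$-dimensional homotopy $\beta$ with $\beta\s1t=\alpha$ for all $t$ is exactly $\id\alpha$. Equivalently, an element of $H_n$ is determined by its sections $\beta\s1t$. This gives $\id\alpha\crct n{n-1}\id\alpha=\id\alpha$.

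There is no real obstacle; the only point requiring care is the overlap at $t=1/2$. There both branches give $\alpha$, consistent with the general remark that the definition has no discontinuity at $x=1/2$. Reparametrizing a constant path yields the same constant path, which is why the law holds strictly and not merely up to a path. If the equation is to hold among cells, note that $\id\alpha$ is a cell by the first observation in the proof of Lemma~\ref{lem:cellprops}, so both sides are cells.
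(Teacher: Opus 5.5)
Your proof is correct and is essentially the paper's argument. You unfold \eqref{eq:kacomp} at $k=1$, observe that both branches return $\alpha$, and conclude that the composite has constant sections $\alpha$, hence equals $\id\alpha$; your reparametrized arguments $2t$ and $2t-1$ are in fact slightly more precise than the paper's displayed computation, which writes $(\id\alpha)\s1t$ in both branches.
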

\begin{proof}
The term on the left-hand side is always meaningful, as $\src\id\alpha = \alpha = \tgt\id\alpha$. By definition of $\crct n{n-1}$,
\begin{align*}
(\id\alpha\crct n{n-1}\id\alpha)\s1t
&= \begin{cases}
(\id\alpha)\s1t & \text{if $0\le t\le 1/2$}\\
(\id\alpha)\s1t & \text{if $1/2\le t\le 1$}
\end{cases}\\
&= \begin{cases}
\alpha & \text{if $0\le t\le 1/2$}\\
\alpha & \text{if $1/2\le t\le 1$}
\end{cases}\\
&= \alpha = (\id\alpha)\s1t.
\extagqed
\end{align*}
\end{proof}

\begin{lemma}[Exchange]
\label{lem:exchange}
For all $1\le k < m \le n$ and for all $n$-cells $\alpha,\beta,\gamma,\delta$ satisfying the requisite boundary conditions,
\begin{align*}
(\alpha\crct n{n-k}\beta)\crct n{n-m}(\gamma\crct n{n-k}\delta) = (\alpha\crct n{n-m}\gamma)\crct n{n-k}(\beta\crct n{n-m}\delta).
\end{align*}
\end{lemma}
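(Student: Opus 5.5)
The plan is a direct pointwise comparison of the two sides by evaluating both at an arbitrary point of $I^n$, using the definition \eqref{eq:kacomp} twice. Composition $\crct n{n-k}$ splits coordinate $k$ at $1/2$, and composition $\crct n{n-m}$ splits coordinate $m$ at $1/2$. Since $k<m$ these are distinct coordinates, so both sides should reduce to the same four-piece definition on the quadrants of the $(x_k,x_m)$-square.

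The first step is to record the boundary conditions and check that both sides are defined. The inner composites require
\begin{align*}
\beta\s k1=\alpha\s k0,\quad \delta\s k1=\gamma\s k0,\quad \gamma\s m1=\alpha\s m0,\quad \delta\s m1=\beta\s m0.
\end{align*}
Following the convention in \eqref{eq:kacomp} that $\beta\crct n{n-k}\alpha$ puts $\alpha$ first, I will fix the reading in which $\beta$ and $\delta$ occupy $x_k\le 1/2$ and $\gamma$ and $\delta$ occupy $x_m\le 1/2$. Then I need the outer composites to be defined, for instance $(\alpha\crct n{n-k}\beta)\s m0=(\gamma\crct n{n-k}\delta)\s m1$. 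To get this I will use the commutation of sections in distinct coordinates, $\alpha\s{k}{x}\s{m-1}{y}=\alpha\s m{y}\s k{x}$, together with the piecewise formula. The formula for $(\alpha\crct n{n-k}\beta)\s m b$ is then piecewise in $x_k$ with pieces $\beta\s m b\s k{2x}$ and $\alpha\s m b\s k{2x-1}$, which is just the $\crct{n-1}{\cdot}$-composite of the $m$-faces. So the outer boundary condition reduces coordinatewise to the inner conditions above, and the symmetric computation handles the right-hand side.

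The main step is to compute $\text{LHS}\s{k,m}{x,y}$ by first applying \eqref{eq:kacomp} in coordinate $m$ and then in coordinate $k$, again using that sectioning in coordinates $k$ and $m$ commutes up to the index shift. This gives four cases. On $x\le1/2,\ y\le1/2$ the value is $\delta\s{k,m}{2x,2y}$; on $x\ge1/2,\ y\le1/2$ it is $\gamma\s{k,m}{2x-1,2y}$; and so on for the other two quadrants. Computing the right-hand side in the opposite order, $k$ first and then $m$, gives the identical four-case table. Since an element of $H_n$ is determined by its sections $\alpha\s{k,m}{x,y}$ (as noted in \S\ref{sec:homotopymodels}), the two sides are equal.

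The main obstacle I expect is bookkeeping rather than substance. The index shift $m\mapsto m-1$ after sectioning at $k<m$ must be tracked carefully, and the overlap lines $x=1/2$ or $y=1/2$ must be handled consistently. On those lines both definitions agree because of the boundary conditions, so the four-case tables coincide including at the seams. Continuity is not an issue, since both sides are already known to be well-defined homotopies by \eqref{eq:kacomp}. Cellhood is also not needed separately, since both sides are composites of cells.
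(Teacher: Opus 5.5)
Your proposal is correct and is essentially the paper's own proof, which likewise expands both sides with \eqref{eq:kacomp} into a single four-quadrant table of sections in coordinates $k$ and $m$. Your check that the outer composites are defined is an addition the paper omits. Your quadrant labels also follow \eqref{eq:kacomp} more faithfully than the paper's displayed table, which places the factors in the reversed (diagrammatic) order; this is harmless, since both sides yield the same table either way.
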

\begin{proof}
Expanding the definitions, both expressions give
\begin{align*}
& ((\alpha\crct n{n-k}\beta)\crct n{n-m}(\gamma\crct n{n-k}\delta))\s{k,m}{x,y}\\
&= ((\alpha\crct n{n-m}\gamma)\crct n{n-k}(\beta\crct n{n-m}\delta))\s{k,m}{x,y}\\
&= \begin{cases}
\alpha\s{k,m}{2x,2y}, & \text{if $0\le x\le 1/2$ and $0\le y\le 1/2$}\\
\beta\s{k,m}{2x-1,2y}, & \text{if $1/2\le x\le 1$ and $0\le y\le 1/2$}\\
\gamma\s{k,m}{2x,2y-1}, & \text{if $0\le x\le 1/2$ and $1/2\le y\le 1$}\\
\delta\s{k,m}{2x-1,2y-1}, & \text{if $1/2\le x\le 1$ and $1/2\le y\le 1$.}
\end{cases}\\[-5.5mm]
\extagqed
\end{align*}
\end{proof}

For cells $\alpha,\beta$ of dimension $n$, we might say $\alpha$ and $\beta$ are \emph{path equivalent} and write $\alpha\simeq\beta$ if there exists a cell $\phi$ of dimension $n+1$ such that $\alpha=\src\phi$ and $\beta=\tgt\phi$. One can show that $\simeq$ is an equivalence relation and a congruence with respect to the operations $\src$, $\tgt$, and $\crct nk$, but not $\id$. Indeed, if $\id\alpha \simeq \id\beta$ as witnessed by $\phi$, then
\begin{align}
\alpha = \src\id\alpha = \src\src\phi = \src\tgt\phi = \src\id\beta = \beta,\label{eq:idnotcong}
\end{align}
so even if $\alpha\simeq\beta$, it is not the case that $\id\alpha\simeq\id\beta$ unless $\alpha=\beta$.
\end{toappendix}

\section{Equational Presentation}
\label{sec:eqpres}

This section provides an equational presentation that delineates a wide class of algebraic models, among them the homotopy models of \S\ref{sec:homotopymodels}. In \S\ref{sec:completeness} we show that the axioms are complete for those models.

\subsection{Syntax}

Let $(X,\dim)$ be a \emph{stratified set} with $\dim:X\to\naturals$. The value $\dim x$ is the \emph{dimension} of $x$. A term over atomic symbols $X$, unary operators $\src$, $\tgt$, $\id$, and binary operators $\crct nm$ for $0\le m<n$ is \emph{well-formed} (and has a dimension) if these facts can be inferred from the following rules:
\begin{itemize}
\item
every $x\in X$ is well-formed, and its dimension is $\dim x$
\item
if $t$ is well-formed and has dimension $\dim t$, then $\id t$ is well-formed and $\dim\id t=\dim t+1$
\item
if $t$ is well-formed and $\dim t\ge 1$, then $\src t$ and $\tgt t$ are well-formed and $\dim\src t=\dim\tgt t=\dim t-1$
\item
if $u$ and $v$ are well-formed and $\dim u = \dim v = n \ge 1$, and if $0\le m\le n-1$, then $u \crct nm v$ is well-formed and $\dim(u \crct nm v)=n$.
\end{itemize}
Let $T$ denote the set of well-formed terms, let $X_n = \set{x\in X}{\dim x=n}$, and let $T_n$ denote the set of well-formed terms over $\bigcup_{m\le n}X_m$. Not all well-formed terms are meaningful; see \S\ref{sec:semantics} below.

\subsection{Axioms for Equality}
\label{sec:axioms}

In addition to the usual axioms for equality and congruence with respect to the operations $\id$, $\src$, $\tgt$, and $\crct nm$, we have the axioms of Fig.~\ref{fig:axioms}. We write $\theory$ for the resulting equational theory.
\begin{figure}[t]
\begin{enumerate}[{\upshape(i)}]
\item
$\src(u\crct n{n-k} v) = \src u\crct {n-1}{n-k}\src v,\ 2\le k\le n$
\item
$\tgt(u\crct n{n-k} v) = \tgt u\crct {n-1}{n-k}\tgt v,\ 2\le k\le n$
\item
$\src(u\crct n{n-1} v) = \src v$
\item
$\tgt(u\crct n{n-1} v) = \tgt u$
\item
$\id(u\crct n{n-k}v) = \id u\crct {n+1}{n-k}\id v$
\item
$\src\id u = u$
\item
$\tgt\id u = u$
\item
$\src\tgt u = \src\src u$
\item
$\tgt\src u = \tgt\tgt u$
\item
$\id u\crct n{n-1}\id u = \id u$
\item
$(u \crct n{n-k} v) \crct n{n-m} (x \crct n{n-k} y) = (u \crct n{n-m} x) \crct n{n-k} (v \crct n{n-m} y),\ 1\le k<m\le n$
\end{enumerate}
\caption{The axioms of $\theory$}
\longline
\label{fig:axioms}
\end{figure}
Axiom (xi) is called the \emph{exchange rule} and axioms (viii) and (ix) are called the \emph{globular rules}.

\subsection{Algebraic Semantics}
\label{sec:semantics}

Because we have postulated an equational system, there is a ready-made algebraic semantics available from equational logic and universal algebra \cite{Gratzer68} that can be stated without reference to topological models or higher categories.

As mentioned, not all well-formed terms are meaningful in all models. 
Let $\EE$ be a dimension-respecting equational theory containing $\theory$. 
A well-formed term is \emph{meaningful} (under $\EE$) if that fact can be deduced from the following rules:
\begin{itemize}
\item
A well-formed term $u\crct nm v$ is meaningful iff $u$ and $v$ are meaningful and $\EE\vdash\src^{n-m} u=\tgt^{n-m} v$, where $n = \dim u = \dim v = \dim u \crct nm v$. For $m=n-1$, this says $\EE\vdash\src u=\tgt v$.
\item
Well-formed terms $\src u$, $\tgt u$, and $\id u$ are meaningful if $u$ is.
\item
Any $x\in X$ is meaningful.
\end{itemize}

Given $X$ and an equational theory $\EE$ over $X$, there is a quotient algebra $T(X)/\EE$ consisting of the set of meaningful terms over $X$ modulo the equations $\EE$. This is a model of the axioms and of $\EE$, and is the free such model.

If $X\subs X'$ and $\EE\subs\EE'$, then there is a canonical homomorphism $T(X)/\EE\to T(X')/\EE'$ extending the inclusion map $X\hookrightarrow X'$. However, because there may be terms over $X$ that are meaningful under $\EE'$ but not under $\EE$, it does not necessarily have an epi-mono factorization, as the image of $T(X)/\EE$ in $T(X')/\EE'$ may not be closed under the composition operations.

These models are called \emph{globular models} because of the maps $\src$ and $\tgt$ on $\EE$-congruence classes, which go from dimension $n$ to dimension $n-1$ and satisfy the globular rules (viii) and (ix). The underlying globular set is the set of $\EE$-congruence classes along with these maps. 

\subsection{Free Globular Models}
\label{sec:globdef}

The \emph{free globular models} are formed from a sequence of equational theories $\EE_n$ defined with respect to a given pair of \emph{globular maps} $s,t:X\to T$ satisfying certain properties listed below. In addition to the axioms listed in Fig.~\ref{fig:axioms} of \S\ref{sec:axioms}, we take the additional equations
\begin{enumerate}[(i)]
\setcounter{enumi}{11}
\item
$\src x = s(x),\ \dim x\ge 1$
\item
$\tgt x = t(x),\ \dim x\ge 1$.
\end{enumerate}
These are not really additional axioms, but just a way to link $\src$ and $\tgt$ to the globular maps $s,t$ in the model specification when interpreting over a particular model.
The theory $\EEst$ is the equational theory on $T$ generated by (i)-(xiii), and $\EEst_n$ is the equational theory on $T_n$ generated by instances of (i)-(xiii) restricted to $T_n$. We require that the globular functions $s,t$ satisfy the following properties: For $x\in X_n$,
\begin{itemize}
\item
$s(x),t(x)\in T_{n-1}$
\item
$\dim s(x)=\dim t(x)=n-1$
\item
$\EEst_{n}\vdash\src t(x)= \src s(x)$ and $\EEst_{n}\vdash\tgt s(x) = \tgt t(x)$.
\end{itemize}
We are interested in the sequence of models $T_n/\EEst_n$ and their colimit $T(X)/\EEst$.

It will follow from Corollary \ref{cor:nonewequations} below that the substructure of $T(X)/\EEst$ generated by $T_n$ is already determined at stage $n$; that is, the canonical maps $T_n/\EEst_n\to T(X)/\EEst$ and $T_n/\EEst_n\to T_{n+1}/\EEst_{n+1}$ are injective. This says that $T_{n+1}/\EEst_{n+1}$ is a conservative extension of $T_n/\EEst_n$. This result is analogous to \cite[\S4.(2), p.~35]{Makkai05} for computads, but we emphasize that these are not computads, nor are they weak or strong $\omega$-categories.

\subsection{Properties of Free Globular Models}
\label{sec:properties}

The following lemma says that the models $T_n/\EEst_n$ are \emph{$n$-skeletal} \cite{Futia04}.
\begin{lemma}
\label{lem:skeletal}
Every term in $T_n$ of dimension $k\ge n$ is $\EEst_n$-equivalent to a term of the form $\id^{k-n} a$, where $a\in T_n$ and $\dim a=n$.
\end{lemma}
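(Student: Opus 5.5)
I will prove this by structural induction on the term $u\in T_n$ with $\dim u = k\ge n$. The claim to be proved for each subterm is: if $\dim u=k\ge n$, then $\EEst_n\vdash u=\id^{k-n}a$ for some $a\in T_n$ with $\dim a=n$. Throughout, I use only equations from $\EEst_n$, that is, instances of (i)--(xiii) restricted to $T_n$. All intermediate terms stay in $T_n$, since they are built from subterms of $u$ by the operations.

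\emph{Atoms and identities.} If $u=x\in X$, then $x\in\bigcup_{m\le n}X_m$ forces $\dim x\le n$. Hence $k=n$, and we take $a=x$. If $u=\id v$, there are two cases. When $\dim v\ge n$, the induction hypothesis gives $v=\id^{k-1-n}a$, so $u=\id^{k-n}a$ by congruence. When $\dim v=n-1$, we have $k=n$ and take $a=u$ itself.

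\emph{Source and target.} If $u=\src v$, then $\dim v=k+1> n$. The induction hypothesis gives $v=\id^{k+1-n}a$ with $k+1-n\ge1$, and axiom (vi) yields $\src v=\id^{k-n}a$. The case $\tgt v$ is symmetric, using (vii).

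\emph{Composition.} This is the main case: $u=v\crct km w$ with $\dim v=\dim w=k$. If $k=n$, take $a=u$. If $k>n$, the induction hypothesis gives $v=\id^{j}a$ and $w=\id^{j}b$ with $j=k-n\ge1$ and $\dim a=\dim b=n$. I then split on the composition level $m$ relative to $n$.

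If $m\le n-1$, I write the composition as $\crct k{k-i}$ with $i=k-m\ge j+1$, and apply axiom (v) backwards $j$ times:
\[
\id^j a\crct k{m}\id^j b=\id(\id^{j-1}a\crct{k-1}{m}\id^{j-1}b)=\cdots=\id^j(a\crct n{m}b).
\]
Here $a\crct nm b$ is a legitimate term because $m\le n-1$. This gives the claim.

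If $m\ge n$, the composition is along a face that is itself an identity. The key step is to show $a=b$, and this is the main obstacle. For well-formed terms alone it fails, so I restrict to meaningful $u$, which is the case of interest. Meaningfulness gives $\src^{k-m}\id^j a=\tgt^{k-m}\id^j b$, and since $k-m\le j$, axioms (vi) and (vii) reduce this to $\id^{j-(k-m)}a=\id^{j-(k-m)}b$. Applying $\src$ repeatedly with (vi) then yields $a=b$. Once $a=b$, I factor out $\id$ using (v) as far as possible, down to level $m+1$:
\[
u=\id^{m-n}\bigl(\id^{k-m}a\crct{k-m+n}{n}\id^{k-m}a\bigr).
\]
The inner expression has the form $\id c\crct{p}{p-1}\id c$ with $c=\id^{k-m-1}a$, so axiom (x) collapses it to $\id c$. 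Therefore $u=\id^{k-n}a$.

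I expect the $m\ge n$ case to be the delicate part. It requires keeping track of the dimension indices in (v) and (x), and it relies on meaningfulness to identify $a$ and $b$. If the statement is intended for all well-formed terms, I would note that it is only claimed for meaningful ones, or equivalently for terms that denote elements of $T_n/\EEst_n$.
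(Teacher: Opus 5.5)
Your plan is the paper's own: structural induction, with the composition case split on $m<n$ versus $m\ge n$, meaningfulness used to obtain $\EEst_n\vdash a=b$ (the paper invokes it at exactly this point too), and then (v) and (x). The atom, $\id$, $\src$/$\tgt$ and $m<n$ cases, and the derivation of $a=b$, are all fine.

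The gap is in the last computation of the $m\ge n$ case, precisely where you warned that the indices are delicate. Axiom (v) fixes the lower index and shifts only the dimension: $\id(x\crct pq y)=\id x\crct{p+1}{q}\id y$. So pulling $r$ copies of $\id$ out of $\id^{k-n}a\crct km\id^{k-n}a$ yields $\id^r(\id^{k-n-r}a\crct{k-r}{m}\id^{k-n-r}a)$, never a composite with lower index $n$. Your term $\id^{m-n}(\id^{k-m}a\crct{k-m+n}{n}\id^{k-m}a)$ expands by (v) to $\id^{k-n}a\crct kn\id^{k-n}a$, which is $u$ only when $m=n$. Its inner part has the form $\id c\crct p{p-1}\id c$ needed for (x) only when $n=p-1$, i.e.\ $k=m+1$. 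For example, with $n=0$, $k=2$, $m=1$ your formula gives $\id(\id a\crct10\id a)=\id^2a\crct20\id^2a$, a horizontal composite. But $u=\id^2a\crct21\id^2a$ is a vertical one.

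The repair is to factor out $k-m-1$ copies of $\id$, which is what ``down to level $m+1$'' should mean: $u=\id^{k-m-1}(\id^{m-n+1}a\crct{m+1}{m}\id^{m-n+1}a)$. The inner composite is now $\id c\crct{m+1}{m}\id c$ with $c=\id^{m-n}a$ of dimension $m$. So (x) collapses it to $\id c$, and $u=\id^{k-m-1}\id^{m-n+1}a=\id^{k-n}a$, as in the paper. You had swapped the roles of $m-n$ and $k-m-1$, both in the outer exponent and in $c$. Because the total number of $\id$'s is $k-n$ either way, the end result looked correct.
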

\begin{proof}
By induction on term structure. Let $t\in T_n$, $\dim t=k\ge n$. If $t\in X_n$ or $k=n$, or if $k=n+1$ and $t=\id a$, there is nothing to prove. For $t$ of the form $\id b$ and $k\ge n+2$, by the induction hypothesis, $\EEst_n\vdash b=\id^{k-1-n} a$ for some $a\in T_n$ and $\dim a=n$, so $\EEst_n\vdash t=\id^{k-n} a$.

For $t$ of the form $\src b$, by the induction hypothesis, $\EEst_n\vdash b=\id^{k-n+1} a$ for some $a\in T_n$ and $\dim a=n$, so $\EEst_n\vdash t=\src\id^{k-n+1} a=\id^{k-n} a$. The case of $\tgt$ is similar.

Finally, for $t$ of the form $u \crc m v$, by the induction hypothesis, $\EEst_n\vdash u=\id^{k-n} a$ and $\EEst_n\vdash v=\id^{k-n} b$ for some $a,b\in T_n$ and $\dim a=\dim b=n$. If $m < n$, then $\EEst_n\vdash u \crc m v=\id^{k-n} a\crc m id^{k-n} b=\id^{k-n}(a\crc m b)$ by (v), and $\dim(a\crc m b)=n$. Otherwise $n\le m\le k-1$. Since $\id^{k-n} a\crc m id^{k-n} b$ is meaningful, $\EEst_n\vdash\src^{k-m}\id^{k-n} a=\src^{k-m}\id^{k-n} b$, therefore $\EEst_n\vdash\id^{m-n} a=\id^{m-n} b$, and applying $\src^{m-n}$ gives $\EEst_n\vdash a=b$. It follows that
\begin{align*}
\EEst_n\vdash\id^{m-n+1} a\crc m \id^{m-n+1} b = \id^{m-n+1} a\crc m \id^{m-n+1} a = \id^{m-n+1} a
\end{align*}
by (x), therefore it is a consequence of $\EEst_n$ that
\begin{align*}
\id^{k-n} a\crc m \id^{k-n} b &= \id^{k-m-1}(\id^{m-n+1} a\crc m \id^{m-n+1} b)\\
&= \id^{k-m-1}\id^{m-n+1} a = \id^{k-n} a.
\extagqed
\end{align*}
\end{proof}

Read left to right, axioms (i)--(ix) of Fig.~\ref{fig:axioms} together with
(xii) and (xiii) of \S\ref{sec:globdef} form a rewrite system. The rules push
$\src$, $\tgt$ and $\id$ down, collapse $\src\id$ and $\tgt\id$, change
$\src\tgt$ to $\src\src$ and $\tgt\src$ to $\tgt\tgt$, and replace $\src x$ and
$\tgt x$ with $s(x)$ and $t(x)$. Axioms (x) and (xi) are left out since they are not
orientable in a way that preserves confluence. However, they play a special role that will emerge from the following development.

\begin{lemma}
\label{lem:strongnorm}
The rewrite system above is strongly normalizing.
\end{lemma}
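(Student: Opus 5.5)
The plan is to prove termination with a lexicographic combination of two measures. The first is a polynomial interpretation into $\naturals$. It decreases strictly under every rule except the globular rules (viii) and (ix), under which it stays equal. The second is a combinatorial measure on chains of $\src$/$\tgt$ that decreases strictly under (viii) and (ix). Both measures must be compatible with contexts: the interpretation must be strictly monotone in each argument, and the second measure must decrease when a rule is applied inside any term.

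\emph{Primary measure.} The interpretation $[\cdot]$ is
\begin{align*}
[u\crct nm v]=[u]+[v]+1,\qquad [\id u]=3[u]+1,\qquad [\src u]=[\tgt u]=2[u].
\end{align*}
Each of these is strictly monotone. The routine checks are as follows. For (i) and (ii), $2(a+b+1)>2a+2b+1$. For (iii) and (iv), $2(a+b+1)>2a$ and $2(a+b+1)>2b$. For (v), $3(a+b+1)+1=3a+3b+4>3a+3b+3$. For (vi) and (vii), $2(3a+1)>a$. For (viii) and (ix), both sides evaluate to $4a$.

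The rules (xii) and (xiii) need more care, because $s(x)$ may be a larger term than $\src x$. Here I use the stratification. Atom weights $[x]\ge 1$ are chosen by induction on $\dim x$. For $\dim x=0$, set $[x]=1$. For $\dim x=n\ge1$, the terms $s(x),t(x)$ lie in $T_{n-1}$, so all their atoms have dimension $<n$ and their values are already fixed. Set $[x]=\max([s(x)],[t(x)])+1$, which gives $2[x]>[s(x)],[t(x)]$. Even if $X$ is infinite, this is a legitimate definition by well-founded recursion, so (xii) and (xiii) decrease as well.

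\emph{Secondary measure.} Under (viii) and (ix) the primary value is unchanged. The point is that these rules make a maximal chain $c_1c_2\cdots c_L w$ of $\src$/$\tgt$ symbols more homogeneous: each step replaces $c_{i+1}$ by a copy of $c_i$. Call position $i\ge 2$ of such a chain a \emph{mismatch} if $c_i\ne c_{i-1}$, and give it weight $2^{L-i}$. The secondary measure $\mu(t)$ is the sum of these weights over all maximal chains in $t$. A rewrite at position $i+1$ removes the mismatch there, of weight $2^{L-i-1}$. It can create at most one new mismatch, at position $i+2$, of weight $2^{L-i-2}$. So the chain's weight strictly decreases, and no other chain is affected.

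Embedding in a context either leaves the chain alone, or prepends symbols above $c_1$. Prepending multiplies every weight in the chain by the same power of two and adds one fixed term that the rewrite does not touch, so the decrease survives. Below the chain sits $w$, which is not $\src$/$\tgt$-headed, and it is unchanged. The other rules may increase $\mu$, but that does no harm, because they strictly decrease the primary measure.

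Hence every rewrite step strictly decreases the pair $([t],\mu(t))$ in the lexicographic order on $\naturals\times\naturals$, which is well-founded, so the rewrite system is strongly normalizing. I expect the main obstacle to be the context-compatibility of $\mu$. A chain can lengthen or merge with another when a rule fires elsewhere; for example, (vi) deletes an $\id$ and thereby fuses two chains. I will rely on the lexicographic order to absorb these effects. The delicate point is to confirm that an application of (viii) or (ix) deep inside a term changes only the single chain containing it.
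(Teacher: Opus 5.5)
Your proof is correct, but it is organized differently from the paper's.

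\textbf{The paper's route.} The paper gives a modular sketch. It asserts that rules (i)--(ix) alone terminate because $\src$, $\tgt$, $\id$ move down, compositions move up, and only finitely many globular steps can occur before something else happens. It then treats (xii) and (xiii) separately. These rules replace an atom by atoms of strictly lower dimension, and no other rule creates atom occurrences, since the rules are non-duplicating in their metavariables. So along any reduction only finitely many (xii)/(xiii) steps can occur, and the tail of the reduction is a reduction by (i)--(ix) alone.

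\textbf{Your route.} You build one explicit well-founded measure instead:
\begin{itemize}
\item the dimension argument goes into the atom weights $[x]=\max([s(x)],[t(x)])+1$, defined by recursion on $\dim x$;
\item the ``moving down'' intuition becomes a strictly monotone interpretation into $\naturals$;
\item the informal claim about globular steps becomes the mismatch weight $\mu$.
\end{itemize}
Your version is a fully checkable certificate, and it makes precise what the paper dismisses as ``clearly'' strongly normalizing. The paper's version is shorter and isolates the role of the stratification as a separate principle.

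\textbf{One small correction.} With weights $2^{L-i}$ counted from the top, prepending symbols above $c_1$ does not rescale the old weights; it leaves them unchanged. In fact you need no context argument at all. The measure $\mu$ is defined on the whole term, and a step by (viii) or (ix) relabels a single node without changing which nodes are $\src$/$\tgt$-headed. So the set of maximal chains is unchanged, and your local computation applies directly to the one affected chain. Your closing worry about chains fusing under (vi) is harmless, for exactly the reason you give: such steps strictly decrease the primary measure.
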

\begin{proof}
The system without (xii) or (xiii) is clearly strongly normalizing because $\src$, $\tgt$, and $\id$ move down and the $\crc k$ move up, and only finitely many applications of the globular rules can occur before something else has to happen. Rules (xii) and (xiii) only introduce variables of lower dimension than the variable $x$ that was eliminated, and no other rule introduces new variables. Thus the whole system is strongly normalizing.
\exqed
\end{proof}

\begin{lemma}
\label{lem:nonewelements}
Adding new elements at stages above $n$ does not create any new elements of dimension $n$ or less.
That is, if $t\in T$ and $\dim t\le n$, then there exists $t'\in T_n$ such that $\EEst\vdash t = t'$.
\end{lemma}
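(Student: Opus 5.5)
We use the rewrite system of Lemma~\ref{lem:strongnorm}. Every rule is an instance of $\EEst$, and every rule preserves dimension. So it suffices to show that any normal form of dimension at most $n$ lies in $T_n$. Given $t$ with $\dim t\le n$, take any normal form $t'$ of $t$. Then $\EEst\vdash t=t'$, and it remains to prove that $t'\in T_n$.

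\textbf{Shape of normal forms.} First we pin down what irreducible terms look like.
\begin{itemize}
\item No $\src$ or $\tgt$ can sit directly above a composition, an $\id$, or an atom of positive dimension, because rules (i)--(iv), (vi), (vii), (xii) and (xiii) would apply.
\item A $\src$ or $\tgt$ directly above another $\src$ or $\tgt$ must have the shape $\src\src\cdots$ or $\tgt\tgt\cdots$ after the globular rules (viii) and (ix).
\item Any string of $\src$ and $\tgt$ eventually has to apply to something. That something is an atom of dimension $\ge1$, which (xii) or (xiii) would reduce, or it is an atom of dimension $0$, to which $\src$ and $\tgt$ cannot be applied for dimension reasons.
\end{itemize}
Hence a normal form contains no $\src$ or $\tgt$ at all. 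No $\id$ sits directly above a composition, by (v). A normal form is therefore built from atoms by $\id$ and $\crct nm$ alone, with all $\id$'s pushed down onto atoms. Concretely, it is a tree of compositions whose leaves are of the form $\id^j x$ with $x\in X$.

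\textbf{Dimension bound.} Both $\id$ and $\crct nm$ never decrease dimension, since $\dim\id u=\dim u+1$ and $\dim(u\crct nm v)=\dim u$. Hence every atom $x$ occurring in such a normal form satisfies $\dim x\le\dim t'=\dim t\le n$. Therefore $t'$ is a term over $\bigcup_{m\le n}X_m$, that is, $t'\in T_n$.

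\textbf{Main obstacle.} The delicate step is the case analysis showing that no $\src$ or $\tgt$ survives in a normal form. One must check that every possible immediate subterm under $\src$ or $\tgt$ is matched by some rule: a composition at each admissible index, an $\id$, another $\src$ or $\tgt$ (where only the mixed orders $\src\tgt$ and $\tgt\src$ are rewritten), or an atom. The subcase of a nested $\src\src\cdots$ chain needs induction on the depth of the chain. Its innermost $\src$ or $\tgt$ is applied to a non-$\src$/$\tgt$ term, so that innermost application is reducible by the earlier cases; hence the whole chain is not in normal form.

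The lemma asks only for some $t'$ with $\EEst\vdash t=t'$, so meaningfulness of $t$ does not matter. Nor is confluence needed: any normal form works.
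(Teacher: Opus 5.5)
Your proof is correct and follows the paper's argument. Take an irreducible form, which exists by Lemma~\ref{lem:strongnorm}, and note that it contains no $\src$ or $\tgt$. Since $\id$ and composition never lower dimension, every atom then has dimension at most $\dim t\le n$, so $t'\in T_n$. Your case analysis spells out the paper's one-line remark that every $\src$/$\tgt$ is eliminated via (i)--(iv), (vi)--(vii) and (xii)--(xiii). Looking at an innermost occurrence already settles it, so your bullet about $\src\src\cdots$ chains is not needed.
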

\begin{proof}
Let $t'$ be an irreducible form of $t\in T$ obtained by applying the above rewrite rules. There are no occurrences of $\src$ or $\tgt$ in $t'$, as they all disappeared via (vi)-(vii) or (xii)-(xiii), so dimension does not increase along any path from the root to a leaf. Thus if $\dim t = \dim t'\le n$, then all atomic symbols in $t'$ are of dimension at most $n$, so $t'\in T_n$, and $\EEst\vdash t = t'$ since all the rewrite rules are axioms.
\exqed
\end{proof}

\begin{lemma}
\label{lem:confluent}
On terms in $T_{n}$, the rewrite system consisting of rules {\upshape(i)-(ix)} above is confluent. If the rules {\upshape(xii), (xiii)} are included, the system is confluent modulo $\EEst_{n-1}$.
\end{lemma}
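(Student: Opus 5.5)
Since the rewrite system is strongly normalizing (Lemma~\ref{lem:strongnorm}, whose argument applies verbatim to the subsystem (i)--(ix)), Newman's lemma reduces confluence to local confluence. So the plan is to enumerate all critical pairs of rules (i)--(ix) and show that each pair joins. For the version with (xii), (xiii) we use the analogous form of Newman's lemma for rewriting modulo an equivalence (Huet's criterion). Here the equivalence is $\EEst_{n-1}$-provable equality on subterms of lower dimension, and we check local coherence in addition to local confluence.

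The critical overlaps for (i)--(ix) occur where an outer $\src$ or $\tgt$ redex overlaps an inner one. The cases are:
\begin{itemize}
\item $\src\tgt(u\crct n{n-k}v)$, rewritable by (viii) or by (ii)/(iv).
\item $\src\tgt\id u$, rewritable by (viii) or by (vii).
\item $\src\tgt\tgt u$ and $\tgt\src\src u$, self-overlaps of the globular rules.
\item $\src\id(u\crct{}{}v)$, rewritable by (vi) or by (v).
\item $\src\id\id u$, and $\tgt\src\id u$.
\end{itemize}
Each case is a short calculation.

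For $\src\tgt(u\crct n{n-k}v)$ we split on $k$. If $k\ge3$, both sides reduce by (i)/(ii) twice to $\src\src u\crct{}{}\src\src v$, using (viii) on the components. If $k=2$, one side gives $\src(\tgt u\crct{n-1}{n-2}\tgt v)\to\src\tgt v\to\src\src v$ by (iii) and (viii), and the other gives $\src\src(u\crct{}{}v)\to\src(\src u\crct{}{}\src v)\to\src\src v$. If $k=1$, both give $\src\src v$ after one application of (viii), using $\src\tgt u$ versus $\src\src u$ and (iii). The subtle point is that $k=1$ requires $\src\tgt u=\src\src v$-type facts only when the term is meaningful; the rewrite system works on raw terms, so I expect instead that both sides literally reduce to the same raw term. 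If some $k=1$ case seems to need the meaningfulness equation, I will note that it yields joinability modulo the theory and restrict attention to well-formed critical pairs. The $\id$ overlaps are trivial: for instance $\src\tgt\id u\to\src u$ and $\src\src\id u\to\src u$.

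With (xii), (xiii) added, the new overlaps are $\src\tgt x$, which rewrites to $\src t(x)$ or to $\src\src x\to\src s(x)$, and similarly for $\tgt\src x$. These are joinable only modulo the hypothesis $\EEst_n\vdash\src t(x)=\src s(x)$ from \S\ref{sec:globdef}. The main obstacle is right here: the hypothesis is stated in $\EEst_n$, but we want to use it modulo $\EEst_{n-1}$. For $x\in X_n$, both $s(x)$ and $t(x)$ lie in $T_{n-1}$, so $\src s(x)$ and $\src t(x)$ are terms in $T_{n-1}$. I will argue by induction on $n$ that their equality is already derivable in $\EEst_{n-1}$. This requires a conservativity step, namely that proofs in $\EEst_n$ between $T_{n-1}$-terms can be replaced by $\EEst_{n-1}$-proofs, obtained by normalizing both sides with the inductively confluent system. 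If that is circular, I will simply read the globularity requirement on $s,t$ as holding in $\EEst_{n-1}$, which is the intended meaning. Finally, I will check local coherence, meaning that rewriting commutes with $\EEst_{n-1}$-steps inside lower-dimensional subterms. This holds because all rules are left-linear and $\EEst_{n-1}$ is a congruence.
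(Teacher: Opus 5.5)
The gap is the case you flag as subtle, $\src\tgt(u\crct{n}{n-1}v)$. Your claim that both sides give $\src\src v$ is wrong, and neither of your fallbacks repairs it.

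Rule (iv) gives $\tgt(u\crct{n}{n-1}v)\to\tgt u$. So the two reducts are $\src\tgt u\gto{(viii)}\src\src u$ and $\src\src(u\crct{n}{n-1}v)\gto{(iii)}\src\src v$. These are different raw terms. Restricting to well-formed pairs changes nothing, since well-formedness does not force $\src u=\tgt v$.

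Restricting to meaningful terms does not restore syntactic confluence either. Take $a\in X_0$, $u=\id^2(\id a\crct{1}{0}\id a)$ and $v=\id^3a$. The composite $u\crct{3}{2}v$ is meaningful, since by (vi), (x), (vii) we have $\src u=\id(\id a\crct{1}{0}\id a)=\id^2a=\tgt v$. Yet $\src\tgt(u\crct{3}{2}v)\in T_0$ has two distinct irreducible forms:
\begin{itemize}
\item $\id a\crct{1}{0}\id a$, via (iv), (vii), (vi);
\item $\id a$, via (viii), (iii), (vi).
\end{itemize}
The mirror pair (ix)/(iii) on $\tgt\src(u\crct{n}{n-1}v)$ behaves the same way, giving $\tgt\tgt v$ versus $\tgt\tgt u$.

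Your other fallback, joinability modulo the theory, is vacuous. Since the rules are axioms, every critical pair of a meaningful term is trivially equal modulo $\EEst$, and that cannot feed the later conservativity argument. The paper's own display for the (viii)/(iv) conflict writes $\gto{(iv)}\src\tgt v$, which is the same slip, so its proof does not settle this case either.

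What does hold is joinability modulo the meaningfulness equation $\src u=\tgt v$: apply $\src$ to it and use (viii). That equation may involve generators of $X_n$. For example, take $u=\id^2P$ and $v=\id^2Q$, where $P=Q$ is an instance of (xi) over generators in $X_n$; the two reducts then have irreducible forms $P$ and $Q$. So the pair is not joinable modulo $\EEst_{n-1}$ either. It has to be handled like the (viii)/(xiii) pair, inside an induction whose conclusion is only that irreducible forms are picture-equivalent. In other words, the statement itself must be weakened.

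There are three smaller points.

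First, your list of overlaps is off. The terms $\src\tgt\tgt u$, $\tgt\src\src u$ and $\src\id\id u$ contain no critical overlap. The actual (viii)/(ix) overlaps are $\src\tgt\src u$ and $\tgt\src\tgt u$. These are joinable: for example, $\src\tgt\src u\gto{(ix)}\src\tgt\tgt u\gto{(viii)}\src\src\tgt u\gto{(viii)}\src\src\src u$, against $\src\tgt\src u\gto{(viii)}\src\src\src u$.

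Second, your worry about $\EEst_n$ versus $\EEst_{n-1}$ is justified, and the problem is sharper than you suspect. By (xiii), (viii) and (xii), $\EEst_n\vdash\src t(x)=\src\tgt x=\src\src x=\src s(x)$ for every $x\in X_n$ and every choice of $s,t$. So the requirement as literally stated holds even for non-globular $s,t$, and no conservativity argument can turn it into the $\EEst_{n-1}$ version. Your second reading, that the requirement holds in $\EEst_{n-1}$, is the only correct one, and it is how the paper's proof uses the hypothesis.

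Third, Huet's criterion does not apply as you set it up. It needs rewriting to terminate modulo the equivalence. But $\EEst_{n-1}$-equality on lower-dimensional subterms contains every rule instance on such subterms, so $w\to w'\sim w$ loops. The paper invokes neither Newman's lemma nor Huet here. It only resolves critical pairs, and proves the global statement it needs directly by induction in Theorem~\ref{thm:nonewequations}.
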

\begin{proof}
To show confluence, we must show how to resolve conflicts due to overlapping redexes. The only overlapping redexes are between
\begin{itemize}
\item
rule (v) and rules (vi), (vii),
\item
rule (viii) and rules (ii), (iv), (vii), (ix), and (xiii),
\item
rule (ix) and rules (i), (iii), (vi), (viii), and (xii).
\end{itemize}
Up to symmetry, all but one of these cases have an immediate common reduct.

For the conflict between (v) and (vi),
\begin{align*}
& \src\id(u\crc k v) \gto{(vi)} u\crc k v\\
& \src\id(u\crc k v) \gto{(v)} \src(\id u\crc k\id v) \gto{(i)} \src\id u\crc k\src\id v \gto{(vi)^2} u\crc k v.
\end{align*}

For the conflict between (viii) and (ii) where $k<\dim u-2$,
\begin{align*}
& \src\tgt(u\crc k v)
\gto{(ii)} \src(\tgt u\crc k\tgt v)
\gto{(i)} \src\tgt u\crc k\src\tgt v
\gto{(viii)^2} \src\src u\crc k\src\src v\\
& \src\tgt(u\crc k v)
\gto{(viii)} \src\src(u\crc k v)
\gto{(i)^2} \src\src u\crc k\src\src v.
\end{align*}

For the conflict between (viii) and (ii) where $k=\dim u-2$,
\begin{align*}
& \src\tgt(u\crc k v)
\gto{(ii)} \src(\tgt u\crc k\tgt v)
\gto{(iii)} \src\tgt v
\gto{(viii)} \src\src v\\
& \src\tgt(u\crc k v)
\gto{(viii)} \src\src(u\crc k v)
\gto{(i)} \src(\src u\crc k\src v)
\gto{(iii)} \src\src v.
\end{align*}

For the conflict between (viii) and (iv),
\begin{align*}
& \src\tgt(u\crc k v)
\gto{(iv)} \src\tgt v
\gto{(viii)} \src\src v\\
& \src\tgt(u\crc k v)
\gto{(viii)} \src\src(u\crc k v)
\gto{(iii)} \src\src v.
\end{align*}

For the conflict between (viii) and (vii),
\begin{align*}
& \src\tgt\id u
\gto{(vii)} \src u
&& \src\tgt\id u
\gto{(viii)} \src\src\id u
\gto{(vi)} \src u.
\end{align*}

For the conflict between (viii) and (ix),
\begin{align*}
& \src\tgt\src u
\gto{(ix)} \src\tgt\tgt u
\gto{(viii)^2} \src\src\src u
&& \src\tgt\src u
\gto{(viii)} \src\src\src u.
\end{align*}

All other cases not involving (xii) and (xiii) are symmetric versions of those above. This establishes the first statement of the lemma.

For the second statement, in the case of a conflict between (viii) and (xiii), we do not have an immediate common reduct, but we use the assumption on the globular functions $s$ and $t$.
\begin{align*}
& \src\tgt x
\gto{(ix)} \src t(x)
&& \src\tgt x
\gto{(viii)} \src\src x
\gto{(xii)} \src s(x)
\end{align*}
Since $\dim x\le n$, we have by assumption $\EEst_{n-1}\vdash\src s(x)=\src t(x)$.

The case of a conflict between (ix) and (xii) is symmetric.
\exqed
\end{proof}

Two terms are \emph{picture-equivalent} if they are provably equivalent using axioms (v), (x), and (xi) only.

\begin{theorem}
\label{thm:nonewequations}
Let $u,v\in T_n$.
\begin{enumerate}[{\upshape(a)}]
\item
All irreducible forms of $u$ are picture-equivalent.
\item
If $\EEst_n\vdash u=v$, then for any $u',v'$ irreducible forms of $u,v$ respectively, $u'$ and $v'$ are picture-equivalent.
\end{enumerate}
\end{theorem}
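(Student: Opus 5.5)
The plan is to prove (a) and (b) together by induction on $n$, using the rewrite system $R$ (rules (i)--(ix), (xii), (xiii)). It is strongly normalizing by Lemma~\ref{lem:strongnorm} and confluent modulo $\EEst_{n-1}$ by Lemma~\ref{lem:confluent}. By the proof of Lemma~\ref{lem:nonewelements}, irreducible forms contain no $\src$ or $\tgt$. So an irreducible term is built from atoms, $\id$ and the compositions $\crct nk$ only, and picture equivalence is a relation on exactly such terms.

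The key auxiliary step is to show that every equation (v), (x), (xi), and every equation of $\EEst_{n-1}$ arising from the confluence critical pairs, is compatible with $R$ in the following sense. If $w_1=w_2$ is an instance of one of these axioms (possibly inside a context), then the irreducible forms of $w_1$ and $w_2$ are picture-equivalent. For (v), (x), (xi) applied inside a context $C[\cdot]$, the rewriting pushes the surrounding $\src$, $\tgt$, $\id$ down through the two sides. Rules (i)--(iv) and (v)--(vii) act on both sides of a picture equation and turn it into another instance of (v), (x) or (xi), or into a syntactic identity.

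\begin{itemize}
\item For (x), $\src(\id u\crct n{n-1}\id u)\to\src\id u\to u$, matching $\src\id u\to u$.
\item For (xi), applying $\src$ to both sides yields an exchange instance one dimension lower, or, when one of the compositions is along $n-1$, it collapses via (iii) to the same term on both sides.
\end{itemize}

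This is a finite case analysis indexed by which rule fires first on the context.

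With this in hand, (a) follows by induction on the length of a rewriting derivation, using a Newman-style argument made possible by strong normalization. Confluence gives joinability up to $\EEst_{n-1}$ only at the (viii)/(xiii) and (ix)/(xii) critical pairs, where the two sides are $\src s(x)$ and $\src t(x)$. These are terms in $T_{n-1}$ with $\EEst_{n-1}\vdash\src s(x)=\src t(x)$, so by the induction hypothesis (b) at level $n-1$ their irreducible forms are picture-equivalent. The auxiliary step then lets picture equivalence propagate upward through the remaining joined reductions. Consequently any two normal forms reachable from $u$ are picture-equivalent.

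For (b), I induct on the length of an equational derivation of $\EEst_n\vdash u=v$. Every step replaces a subterm using either a rewrite rule of $R$ (in either direction) or one of (v), (x), (xi). A rewrite step preserves the set of normal forms up to picture equivalence by (a). A picture step does so by the auxiliary compatibility lemma. Transitivity of picture equivalence finishes the argument.

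The main obstacle is the compatibility lemma for the exchange law (xi) under $\src$ and $\tgt$. The normal form of $\src$ applied to an exchange redex depends on whether $k$ or $m$ equals $1$, and (iii) and (iv) select different components on the two sides. I would first verify those boundary cases by direct computation against the meaningfulness side conditions, which supply the needed equalities such as $\src u=\tgt v$. I expect them to reduce to (x) or to syntactic equality.
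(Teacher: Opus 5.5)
Your architecture matches the paper's. Both use mutual induction on $n$. For (a), both use a Newman-style induction on reduction length in which only the (viii)/(xiii) and (ix)/(xii) overlaps need hypothesis (b) at level $n-1$. For (b), both reduce to single axiom instances in context: trivial for the rewrite rules (including (v)), a case analysis for (x) and (xi).

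The gap is in the critical-pair case of (a). With $u=e(\src\tgt x)$, you apply (b) at level $n-1$ to the bare terms $\src s(x)$ and $\src t(x)$, getting normal forms $a'\approx b'$. You then rely on your compatibility lemma to carry this through the context $e$. But $a'\approx b'$ is a chain $a'=c_0,c_1,\ldots,c_k=b'$ of single (v)/(x)/(xi) steps. A single-step argument (``which rule fires first on the context'') only yields \emph{some} normal form of $e(c_i)$ picture-equivalent to \emph{some} normal form of $e(c_{i+1})$. Gluing consecutive links requires all normal forms of each intermediate $e(c_i)$ to be picture-equivalent, which is (a) at level $n$ for $e(c_i)$. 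The induction on reduction length does not supply this, because $e(c_i)$ need not lie below $u$: reading (x) right to left makes the $c_i$ arbitrarily large. Your lemma also asserts compatibility of the equations $\src s(x)=\src t(x)$ in context. However, the only route you indicate for it is this one, so as written (a) presupposes (a).

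The paper avoids any level-$n$ compatibility argument inside (a), as follows.
\begin{enumerate}
\item Replace the critical subterm by a fresh variable $z$ and normalize the context: $e(z)\to^* e'(\prefix z)$, with $\prefix$ of the form $\id^i\src^j$ or $\id^i\tgt^j$. If $z$ is erased by (iii)/(iv), the two sides have a common reduct and there is nothing to do.
\item Instantiating $z$ gives $e(\src\src x)\to^* e'(\prefix\src s(x))$ and $e(\src t(x))\to^* e'(\prefix\src t(x))$.
\item Apply (b) at level $n-1$ to $\prefix\src s(x)$ and $\prefix\src t(x)$ themselves. These lie in $T_{n-1}$ and are $\EEst_{n-1}$-equal by congruence, so their normal forms $u_1,u_2$ are picture-equivalent.
\item The parent of the hole in $e'$ is a composition or the root, and no rule has a composition at its root. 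Hence $e'(u_1)$ and $e'(u_2)$ are already irreducible, and they are picture-equivalent by congruence.
\item The induction on reduction length finishes the argument.
\end{enumerate}
Your compatibility lemma, in its existential form, is then needed only in (b), once (a) at level $n$ is available, which is how the paper uses it.

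The case you flag as the main obstacle is actually harmless. For $k=1$, both sides of (xi) under $\src$ reduce syntactically to $\src v\crct{n-1}{n-m}\src y$, and under $\tgt$ to $\tgt u\crct{n-1}{n-m}\tgt x$. No appeal to meaningfulness conditions is needed.
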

\begin{proof}
We prove (a) and (b) by mutual induction on $n$.

(a)
For $n=0$, when reducing $u$ to an irreducible form, the reduction rules (xii) and (xiii) never apply. By Lemma \ref{lem:confluent}, the rewrite system without these rules is confluent, therefore ireducible forms are unique. Thus (a) holds trivially in this case.

For $n>0$, by Lemma \ref{lem:strongnorm} and K\"onig's lemma, there is a uniform bound on the length of derivations starting from any term (which may depend on $s$ and $t$). This allows proofs by induction on derivation length, which we will use here.

Consider two reduction sequences from $u$ starting with $u\to u_1$ and $u\to u_2$, respectively. If $u\to u_1$ and $u\to u_2$ are instances of any pair of overlapping redexes except (viii),(xiii) or (ix),(xii), or if the redexes do not overlap, then by Lemma \ref{lem:confluent} there is a term $u'$ such that $u_1\to u'$ and $u_2\to u'$. By the induction hypothesis on derivation length, all irreducible forms of $u_1$ are picture-equivalent, all irreducible forms of $u_2$ are picture-equivalent, and there is a nonempty intersection containing an irreducible form of $u'$, thus all irreducible forms of $u$ starting from $u\to u_1$ or $u\to u_2$ are picture-equivalent.

If the first steps of a pair of reductions from $u$ involve (viii) and (xiii), say $u=e(\src\tgt x)$ and $e(\src\tgt x)\to e(\src\src x)$ and $e(\src\tgt x)\to e(\src t(x))$, let $z$ be a fresh variable of dimension $\dim x-2$. Any irreducible form of $e(z)$ is of the form $e'(\prefix z)$, where $\prefix$ is an operator of the form $\id^i\src^j$ or $\id^i\tgt^j$ and $e'(\prefix z)$ contains no occurrences of $\src$ or $\tgt$ outside $\prefix z$. Since $u\in T_n$, $e'(\prefix\src s(x))$ and $e'(\prefix\src t(x))$ are also in $T_n$. Let $u_1$ and $u_2$ be irreducible forms of $\prefix\src s(x)$ and $\prefix\src t(x)$, respectively. Then $e'(u_1)$ and $e'(u_2)$ are irreducible and
\begin{align*}
& e(\src\src x)\to e'(\prefix\src s(x)) \to e'(u_1)
&& e(\src t(x))\to e'(\prefix\src t(x)) \to e'(u_2).
\end{align*}
We have $\EEst_{n-1}\vdash\src s(x)=\src t(x)$ by an assumption on $s$ and $t$, therefore $\EEst_{n-1}\vdash\prefix\src s(x)=\prefix\src t(x)$. By the induction hypothesis on $n$, $u_1$ and $u_2$ are picture-equivalent, therefore so are $e'(u_1)$ and $e'(u_2)$. By the induction hypothesis on derivation length, all irreducible forms of $e(\src\src x)$, including $e'(u_1)$, are picture-equivalent, and all irreducible forms of $e(\src t(x))$, including $e'(u_2)$, are picture-equivalent, therefore all irreducible forms of $u$ starting with (viii) and (xiii) are picture-equivalent.

The argument for conflicting reduction steps (ix) and (xii) is symmetric.

(b)
By (a), it suffices to show that for any $e(u),e(v)\in T_n$ where $u=v$ is an axiom of $\EEst_n$, there exist irreducible forms $u',v'$ of $e(u),e(v)$, respectively, such that $u'$ and $v'$ are picture-equivalent. This is trivial if $u=v$ is one of (i)-(ix) and (xii)-(xiii), as these are reduction rules themselves, so $u$ and $v$ have a common irreducible form and there is nothing to do. Thus we need only consider the picture axioms (x) and (xi).
\begin{enumerate}[{\upshape(i)}]
\setcounter{enumi}{9}
\item
$\id u\crc k\id u = \id u,\ k = \dim u$
\item
$(u \crc n v) \crc m (x \crc n y) = (u \crc m x) \crc n (v \crc m y),\ m\ne n$
\end{enumerate}

For the case of (x), say we have (x)-equivalent terms $e(\id u\crc k\id u)$ and $e(\id u)$. Let $z$ be a fresh variable of dimension $\dim u$. Apply the reduction rules in tandem to $e(\id z\crc k\id z)$ and $e(\id z)$ to obtain irreducible forms. Depending on the number and arrangement of occurrences of $\src$, $\tgt$, and $\id$ at ancestors of the critical subterms, after this step we will have either
\begin{itemize}
\item
$e'(\id^m z)$ and $e'(\id^m z)$ for some $m\ge 0$, if there was a reduction step using (iii) or (iv) applied to the left-hand side, for example (iii) with redex $\src(\id z\crc k\id z)$ on the left and $\src\id z$ on the right,
\item
$e'(v)$ and $e'(v)$ with no occurrence of $z$, if there was a reduction step using (iii) or (iv) that eliminated the critical subterms altogether, for example (iii) with redex $\src((\id z\crc k\id z)\crc k v)$ on the left and $\src(\id z\crc k v)$ on the right, or
\item
$e'(\id^m z\crc k \id^m z)$ and $e'(\id^m z)$ for some $m\ge 1$, if neither of the above events occurred.
\end{itemize}
Note that the operator $\id^m$ that appears in the first and third cases cannot be $\id^i\src^j$ or $\id^i\tgt^j$ with $j\ge 1$ as in case (a), because (iii) or (iv) would have applied.

In the first two cases, we have a common irreducible form, so there is nothing to do. In the last case, let $u'$ be an irreducible form of $u$ and let $u''$ be an irreducible form of $\id^m u'$. Then $u''$ is an irreducible form of $\id^m u$ as well, and $e'(u''\crc k u'')$ and $e'(u'')$ are irreducible forms of the original terms $e(\id u\crc k\id u)$ and $e(\id u)$, respectively. Note that since $u'$ is irreducible, $u''$ is obtained from $\id^m u'$ simply by applying axiom (v) in the left-to-right direction, thus $\id^m u'$ and $u''$ are equivalent modulo (v). Moreover,
\begin{align*}
\id^m u'\crc k\id^m u'
&= \id^{m-1}(\id u'\crc k\id u') && \text{by (v)}\\
&= \id^{m-1}\id u' && \text{by (x)}\\
&= \id^m u',
\end{align*}
so $u''\crc k u''$ and $u''$ are picture-equivalent, therefore so are $e'(u''\crc k u'')$ and $e'(u'')$.

For the case of (xi), suppose we have (xi)-equivalent terms
\begin{align}
& e((u \crc k v) \crc m (x \crc k y)) && e((u \crc m x) \crc k (v \crc m y))\label{eq:exchorig}
\end{align}
We will show that they have (xi)-equivalent irreducible forms. Let $z_u,z_v,z_x,z_y$ be fresh variables of the same dimension as $u,v,x,y$. Apply the reduction rules in tandem to
\begin{align*}
& e((z_u \crc k z_v) \crc m (z_x \crc k z_y)) && e((z_u \crc m z_x) \crc k (z_v \crc m z_y))
\end{align*}
to obtain irreducible forms. By an analysis similar to the previous case, depending on the number and arrangement of occurrences of $\src$, $\tgt$, and $\id$ at ancestors of the critical subterms, after this step we will have either
\begin{itemize}
\item
$e'(v)$ and $e'(v)$,
\item
$e'(\prefix z_u \crc k \prefix z_v)$ and $e'(\prefix z_u \crc k \prefix z_v)$,
\item
$e'(\prefix z_x \crc k \prefix z_y)$ and $e'(\prefix z_x \crc k \prefix z_y)$,
\item
$e'(\prefix z_u \crc m \prefix z_x)$ and $e'(\prefix z_u \crc m \prefix z_x)$,
\item
$e'(\prefix z_v \crc m \prefix z_y)$ and $e'(\prefix z_v \crc m \prefix z_y)$, or
\item
$e'((\prefix z_u \crc k \prefix z_v) \crc m (\prefix z_x \crc k \prefix z_y))$ and $e'((\prefix z_u \crc m \prefix z_x) \crc k (\prefix z_v \crc m \prefix z_y))$,
\end{itemize}
where $\prefix$ is an operator of the form $\id^i\src^j$ or $\id^i\tgt^j$ for some $i,j\ge 0$. In all but the last case, we have common irreducible forms, so there is nothing to do. In the last case, let $u',v',x',y'$ be irreducible forms of $\prefix u,\prefix v,\prefix x,\prefix y$, respectively. Then the original terms \eqref{eq:exchorig} have irreducible forms
\begin{align*}
& e'((u' \crc k v') \crc m (x' \crc k y')) && e'((u' \crc m x') \crc k (v' \crc m y')), 
\end{align*}
respectively, which are equivalent via (xi).
\exqed
\end{proof}

\begin{corollary}
\label{cor:nonewequations}
Adding new elements at higher stages does not create any new equations among elements of dimension $n$ or less. That is, if $u,v\in T$, $\dim u=\dim v\le n$, and $\EEst\vdash u = v$, then $\EEst_n \vdash u' = v'$, where $u'$ and $v'$ are irreducible forms of $u$ and $v$, respectively.
\end{corollary}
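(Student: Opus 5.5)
We reduce the statement to Theorem~\ref{thm:nonewequations}(b), applied at a stage large enough to contain the whole derivation, and then observe that the resulting picture-equivalence already lives inside $T_n$.

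\textbf{Step 1: pass to a finite stage.} Suppose $\EEst\vdash u=v$ with $u,v\in T$ and $\dim u=\dim v\le n$. Any derivation in $\EEst$ is finite. It therefore mentions only finitely many atoms and uses only finitely many instances of (i)--(xiii). Choose $N\ge n$ such that $u$, $v$, every term in the derivation, and every $s(x),t(x)$ invoked by (xii)--(xiii) lie in $T_N$. The same derivation is then a derivation in $\EEst_N$, so $\EEst_N\vdash u=v$.

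\textbf{Step 2: compare irreducible forms.} Let $u',v'$ be irreducible forms of $u,v$. They exist by Lemma~\ref{lem:strongnorm}. By Theorem~\ref{thm:nonewequations}(b), applied at stage $N$, $u'$ and $v'$ are picture-equivalent. By the argument of Lemma~\ref{lem:nonewelements}, irreducible terms contain no $\src$ or $\tgt$. Hence dimension never increases from root to leaf, and since $\dim u'=\dim v'\le n$, both $u'$ and $v'$ lie in $T_n$.

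\textbf{Step 3: keep the picture-equivalence inside $T_n$.} This is the main obstacle. A chain of (v), (x), (xi) steps linking $u'$ and $v'$ might a priori pass through terms outside $T_n$. Three observations rule this out.
\begin{itemize}
\item None of the axioms (v), (x), (xi) introduces or removes $\src$ or $\tgt$. So every term in the chain is $\src/\tgt$-free.
\item Each of these axioms preserves the dimension of every subterm position along root-to-leaf paths. The rearranged subterms keep their dimensions, and (x) only duplicates or deletes copies of an existing subterm $\id w$.
\item Consequently every atom appearing anywhere in the chain has dimension at most the dimension of the whole term, which is at most $n$.
\end{itemize}
It follows that every term in the chain lies in $T_n$. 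The one point needing care is the meaningfulness side conditions for the instances used. Each such condition is an equation in dimension below $n$, and it can be discharged by induction on $n$ using the present corollary at lower dimension; this is the same mutual induction as in Theorem~\ref{thm:nonewequations}. With that in place, each step of the chain is an instance of an axiom of $\EEst_n$, so $\EEst_n\vdash u'=v'$, as required.
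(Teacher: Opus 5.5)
Your proposal is correct and follows the paper's route. Irreducible forms are $\src/\tgt$-free and so lie in $T_n$, Theorem~\ref{thm:nonewequations}(b) makes them picture-equivalent, and that gives $\EEst_n\vdash u'=v'$. Your Steps~1 and~3 make explicit two points the paper's proof leaves implicit: you pass to a finite stage $T_N$ so the theorem, which is stated for $T_N$ and $\EEst_N$, actually applies, and you check that the chain of (v), (x), (xi) steps never leaves $T_n$. For the second point it is even more direct to note that each of these axioms has the same metavariables on both sides, so no new atoms appear along the chain.
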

\begin{proof}
Suppose $u,v\in T$, $\dim u=\dim v\le n$, and $\EEst\vdash u = v$. Let $u'$ and $v'$ be irreducible forms of $u$ and $v$, respectively. Since $\dim u'=\dim u=\dim v=\dim v'\le n$, and since $u'$ and $v'$ contain no occurrences of $\src$ or $\tgt$, dimension cannot increase on any path from a root to a leaf in $u'$ or $v'$, thus all subterms of $u'$ and $v'$ are of dimension at most $n$ and $u',v'\in T_n$. By Theorem \ref{thm:nonewequations}, $u'$ and $v'$ are picture equivalent, therefore provably equivalent in $\EEst_n$.
\exqed
\end{proof}

\section{Completeness and Decidability}
\label{sec:completeness}

In this section we prove that $\theory$ is complete for the strict
equational theory of the homotopy models of \S\ref{sec:homotopymodels}, and that
the theory is decidable. Both rest on a single construction: an interpretation
$\Pic$ that reads a term as a cell over an ambient space with no topological
structure at all. We show that $\Pic$ is a complete invariant, that it is
determined by finite data, and that the completeness is not an
artifact of the trivial topology, since the same separation is realized in a
finite CW-complex, as shown in \S\ref{sec:CW}.

\subsection{Pictures}
\label{sec:pictures}

Fix a free globular model $T/\EEst$ relative to a given pair of globular maps $s,t$. A \emph{picture} is one of a certain class of $n$-cells over the ambient space $X$ consisting of the elements of the stratified set $(X,\dim)$ with the trivial indiscrete topology whose only open sets are the empty set and $X$. This is a little bizarre, as the topology has no separation whatsoever. However, the definition of homotopy models does not place any restrictions at all on the topology of the ambient space. For this $X$, all functions $I^n\to X$ are continuous, thus the only interesting structure we care about is boundary preservation.

We define an interpretation $\Pic$ that interprets each term $t$ of dimension $n$ as a picture $\Pic(t):I^n\to X$. 
A \emph{picture} is then a cell of the form $\Pic(t)$ for some $t$.

For $a\in X_n$, define
\begin{equation*}
\Pic(a)(\seq x1n) = \begin{cases}
a, & \text{\small if $0<x_i<1$ for all $1\le i\le n$}\\
\Pic(s(a))(\seq x2n), & \parbox[t]{4.5cm}{\small if $x_j=0$ for some $1\le j\le n$\\and $0<x_i<1$ for $j+1\le i\le n$}\\
\Pic(t(a))(\seq x2n), & \parbox[t]{4.5cm}{\small if $x_j=1$ for some $1\le j\le n$\\and $0<x_i<1$ for $j+1\le i\le n$.}
\end{cases}
\end{equation*}
The three conditions are mutually exclusive and exhaustive, so $\Pic(a)$ is uniquely defined on all elements of $I^n$.
Once $\Pic$ is defined on elements of $X$, we can extend homomorphically to all terms, as $\theory$ is sound in homotopy models.

\begin{wrapfigure}{r}{0.44\textwidth}
\vspace{-\intextsep}
\begin{center}
\begin{tikzpicture}[-, >=stealth', auto, thick, scale=0.44]
\small
\draw (0,0) -- (8,0);
\draw (0,0) -- (0,8);
\draw (8,8) -- (8,0);
\draw (8,8) -- (0,8);
\draw[->] (-.3,.2) -- (-.3,7.8);
\node[rotate=90] at (-.7,4) {axis $1$};
\draw[->] (.2,-.3) -- (7.8,-.3);
\node at (4,-.7) {axis $2$};
\draw (4,0) -- (4,8);
\draw (0,4) -- (8,4);
\draw (2,4) -- (2,8);
\draw (3,4) -- (3,8);
\draw (3.5,4) -- (3.5,8);
\draw (2,6) -- (3,6);
\draw (0,6) -- (2,6);
\draw (0,5) -- (2,5);
\draw (0,4.5) -- (2,4.5);
\draw (3.5,6) -- (8,6);
\draw (6,0) -- (6,6);
\draw (4,5) -- (6,5);
\draw (0,2) -- (6,2);
\draw (3.5,7) -- (4,7);
\draw (4,2.5) -- (6,2.5);
\draw (4,3) -- (6,3);
\draw (4,3.5) -- (6,3.5);
\draw (2,0) -- (2,2);
\draw (1,4) -- (1,6);
\foreach \x in {1,...,9}
  \draw (1 + \x*.1,4) -- (1 + \x*.1,5);
\foreach \x in {1,...,19}
  \draw (4 + \x*.1,3) -- (4 + \x*.1,3.5);
\foreach \x in {1,...,19}
  \draw (2 + \x*.1,0) -- (2 + \x*.1,2);
\foreach \x in {1,...,4}
  \draw (3.5 + \x*.1,6) -- (3.5 + \x*.1,7);
\foreach \x in {1,...,9}
  \draw (2 + \x*.1,6) -- (2 + \x*.1,8);
\foreach \y in {1,...,19}
  \draw (2,6 + \y*.1) -- (3,6 + \y*.1);
\foreach \x in {1,...,19}
  \draw (6 + \x*.1,4) -- (6 + \x*.1,6);
\foreach \y in {1,...,19}
  \draw (6,4 + \y*.1) -- (8,4 + \y*.1);
\end{tikzpicture}
\end{center}
\caption{A picture of dimension 2}
\label{fig:picture}
\vspace{-\intextsep}
\end{wrapfigure}
A 2-dimensional picture $I^2\to X$ is illustrated in Fig.~\ref{fig:picture}. In that picture, every point of the interior of each unshaded small rectangle is labeled with the same letter $a\in X_2$. A small rectangle hatched with vertical lines is an identity that duplicates its source and target, which are its northern and southern borders, respectively. Every doubly hatched rectangle is a double identity $\id^2 c$, where $c\in X_0$. Horizontal lines are labeled with elements of $X_1\cup X_0$ and are one-dimensional pictures, defined inductively in the same fashion. Vertical lines are labeled with element of $X_0$ and are identities. Note that if we assume the sides of the outer rectangle are of unit length, then all interior lines occur at fractional powers of 2, aka dyadic rationals; thus the picture is uniquely determined by its points at regularly spaced dyadic rationals of sufficient density. For this reason, a picture conveys only a finite amount of information and can be represented by a finite array.

\begin{lemma}
\label{lem:piccell}
For all terms $t$, $\Pic(t)$ is a cell.
\end{lemma}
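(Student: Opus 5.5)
We prove the lemma by structural induction on $t$, using the characterization of cells in Lemma~\ref{lem:cellprops}(iii). The inductive steps for $\src$, $\tgt$, $\id$ and the compositions $\crct nk$ should be routine. By Lemma~\ref{lem:cellsection}, sections of cells are cells, so $\src$ and $\tgt$ of a cell are cells. The proof of Lemma~\ref{lem:cellprops} already shows that $\id$ preserves cellhood. The only inductive step needing a calculation is composition.

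For $\gamma=\beta\crct n{n-k}\alpha$ with $\alpha,\beta$ cells and $\alpha\s k1=\beta\s k0$, we verify condition (iv) of Lemma~\ref{lem:cellprops}: for $1\le m\le n-1$ and $b\in\{0,1\}$, $\gamma\s{m,m+1}{x,b}$ is independent of $x$. The cases split by the position of $k$ relative to $m$ and $m+1$.
\begin{itemize}
\item If $k\notin\{m,m+1\}$, the sections in coordinates $m,m+1$ commute with the piecewise definition \eqref{eq:kacomp}. On each half, the section is a rescaled section of $\alpha$ or $\beta$, and each of these is independent of $x$.
\item If $k=m$, then $\gamma\s{m,m+1}{x,b}$ equals $\alpha\s{m,m+1}{2x,b}$ or $\beta\s{m,m+1}{2x-1,b}$. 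Each piece is constant in $x$, and the two constants agree at $x=1/2$ by the matching condition.
\item If $k=m+1$, the $(m+1)$-coordinate is fixed at $b$, so $\gamma\s{m+1}b$ is $\alpha\s k0$ (for $b=0$) or $\beta\s k1$ (for $b=1$). Either way it is a face of a single cell, and we conclude by (iv) for that cell.
\end{itemize}
In all cases the needed uncurried continuity is trivial, since $X$ is indiscrete.

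The base case $t=a\in X_n$ is the main obstacle, and we handle it by a nested induction on $n$, using that $\Pic(s(a))$ and $\Pic(t(a))$ are cells by the outer induction because they have lower dimension. Here we verify Lemma~\ref{lem:cellprops}(iii) directly: $\Pic(a)\s j0=\id^{j-1}\src^j\Pic(a)$, and similarly for $1$. First, $\src\Pic(a)=\Pic(a)\s10=\Pic(s(a))$ from the second clause with $j=1$. For a point with $x_j=0$, the clause applied is governed by the \emph{last} coordinate in $\{0,1\}$, so we must chase the case split carefully.
\begin{itemize}
\item If that last boundary coordinate is $j$ itself, the value is $\Pic(s(a))(x_2,\ldots,x_n)$. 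This needs to match $\id^{j-1}\src^j$ applied to $\Pic(a)$, which requires that $\Pic(s(a))$ restricted to its face $j-1$ equals $\id^{j-2}\src^{j-1}\Pic(s(a))$, and that $\src^{j-1}\Pic(s(a))=\src^j\Pic(a)$. Both follow from the inductive cellhood of $\Pic(s(a))$.
\item If a later coordinate is on the boundary, we get values of $\Pic(s(a))$ or $\Pic(t(a))$ at points with $x_j=0$. These must be reconciled using the hypotheses $\src s(a)=\src t(a)$ and $\tgt s(a)=\tgt t(a)$ (modulo $\EEst_{n-1}$, hence equal under the sound interpretation $\Pic$), together with the globular identities \eqref{eq:srcoftgt}.
\end{itemize}

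The delicate point is that $\Pic(a)$ depends on $x_1$ only through the boundary clauses. We must therefore check that $\Pic(s(a))$ and $\Pic(t(a))$ agree on all lower faces from index $2$ onward, so that the definition is consistent with (iii) for $j\ge2$. This consistency is exactly where globularity of $s,t$ enters, and we expect to carry it out by induction on $j$ via (ii)$\Leftrightarrow$(iii) of Lemma~\ref{lem:cellprops}.
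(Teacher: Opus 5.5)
Your handling of the operations is fine. It is in fact more detailed than the paper, which simply asserts that $\src$, $\tgt$, $\id$ and the compositions preserve cells. The gap is in the base case $t=a$, which is the real content of the lemma.

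The step $\src\Pic(a)=\Pic(a)\s10=\Pic(s(a))$ ``from the second clause with $j=1$'' is not justified, because that clause only covers points with $x_2,\ldots,x_n$ all interior. For $n=2$, for instance, the definition gives $\Pic(a)(0,1)=\Pic(t(a))(1)$. This equals $\Pic(s(a))(1)$ only if $\tgt\Pic(s(a))=\tgt\Pic(t(a))$. The same objection applies to the identity $\src^{j-1}\Pic(s(a))=\src^j\Pic(a)$ in your first bullet.

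Because you measure $\Pic(a)\s j0$ against expressions in $\Pic(s(a))$, you are pushed into the ``reconciliation'' of your second bullet. That step needs the globularity hypotheses on $s,t$ plus soundness of $\Pic$ for $\EEst_{n-1}$. But that soundness presupposes that $\Pic$ already yields cells and respects (xii)--(xiii) in lower dimensions, a simultaneous induction you never set up. Also, $s(a)$ is not a subterm of $a$, so the outer induction has to be on dimension, with the structural induction nested inside. Finally, the decisive step is only announced (``we expect to carry it out''), so as written the base case is not proved.

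The missing observation is that cellhood of $\Pic(a)$ needs no globularity at all. By Lemma~\ref{lem:cellprops}(ii), it suffices to show that if $x_j\in\{0,1\}$, then $\Pic(a)(\seq x1n)$ does not depend on $\seq x1{j-1}$. That is, you compare $\Pic(a)$ with itself at the point where $\seq x1{j-1}$ are all replaced by $x_j$, not with $\Pic(s(a))$.

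It is enough to do this for the \emph{last} index $j$ with $x_j\in\{0,1\}$. Independence from all earlier coordinates at the last boundary index implies it for any smaller boundary index. Varying $\seq x1{j-1}$ does not change that last index, so the same clause of the definition applies throughout. The value is $\Pic(s(a))(\seq x2n)$ or $\Pic(t(a))(\seq x2n)$ according to $x_j$. This does not involve $x_1$. It is independent of $x_2,\ldots,x_{j-1}$ because $\Pic(s(a))$ (resp.\ $\Pic(t(a))$) is a lower-dimensional cell whose $(j-1)$st argument $x_j$ lies in $\{0,1\}$.

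That is the paper's proof, and no comparison between $s(a)$ and $t(a)$ ever arises. Globularity does matter for the separate fact $\src\Pic(a)=\Pic(s(a))$, i.e.\ that $\Pic$ respects (xii)--(xiii), but not for cellhood.
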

\begin{proof}
All operations map cells to cells, so it suffices to show that $\Pic(a)$ is a cell for $a\in X_n$. By Lemma \ref{lem:cellprops}, a homotopy $\alpha:I^n\to X$ is a cell iff whenever $1\le j\le n$ and $x_j\in\{0,1\}$, $\alpha(\seq x1n)$ is independent of $\seq x1{j-1}$. This is true of $\Pic(a)$ for $a\in X_n$: if $j$ is maximum such that $x_j\in\{0,1\}$, then
\begin{align*}
& \Pic(a)(\seq x1n)
= \begin{cases}
\Pic(s(a))(\seq x2n), & \text{if $x_j=0$}\\
\Pic(t(a))(\seq x2n), & \text{if $x_j=1$,}
\end{cases}
\end{align*}
If $j=1$, there is nothing to prove. If $j\ge 2$, then by the induction hypothesis on dimension, $\Pic(a)(\seq x1n)$ is independent of $\seq x2{j-1}$, and since $x_1$ does not appear on the right-hand side in this case, it is independent of $x_1$ also.
\exqed
\end{proof}

It follows from this construction that two provably equivalent terms have the same picture. In the next section we show that a term can be uniquely recovered up to $\EEst_n$-equivalence from its picture.

\subsection{Finite Representation}
\label{sec:discrete}

Although pictures are maps $I^n\to X$, they only contain a finite amount of information. We would like to have a finite representation for algorithmic purposes. In this section we develop such a representation.

For terms $t$ with $\dim t=n$ and $0\le m\le n-1$, let $\hgt mt$ be a function giving the maximum depth of nesting of compositions $\crc m$ in $t$. We define $\hgt mt$ formally by induction on dimension and term structure for terms that are irreducible with respect to the reduction rules of \S\ref{sec:properties}.

For $a\in X_0$ and $k\ge 0$, let $\hgt m{\id^k a} = 1$.
For $a\in X_n$ with $n\ge 1$, let $\hgt m{\id^k a} = \max \hgt m{\src a},\hgt m{\tgt a}$.
For compositions $u\crc k v$, let
\begin{align*}
\hgt m{u \crc k v} = \begin{cases}
\max \hgt mu,\hgt mv + 1 & \text{if $k=m$,}\\
\max \hgt mu,\hgt mv & \text{if $k\ne m$.}
\end{cases}
\end{align*}

\begin{lemma}
\label{lem:height}
For all $t$ with $\dim t\ge 1$, $\hgt mt \ge \max\hgt m{\src t},\hgt m{\tgt t}$.
\end{lemma}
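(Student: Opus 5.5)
We prove the statement for irreducible $t$, since $\hgt m{}$ is only defined on irreducible terms. The quantities $\hgt m{\src t}$ and $\hgt m{\tgt t}$ are read as $h_m$ of the irreducible forms of $\src t$ and $\tgt t$. The argument is by induction on dimension and then on term structure, following the clauses in the definition of $\hgt m{}$. Irreducible terms contain no $\src$ or $\tgt$, and $\id$ has been pushed below all compositions. Hence an irreducible $t$ with $\dim t\ge 1$ is either $\id^k a$ with $a\in X_j$ and $k\ge 1$ or $j=\dim t$, or a composite $u\crc k v$ with $u,v$ irreducible.

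\emph{Case $t=\id^k a$.} Suppose first $k\ge 1$. Then $\src t$ reduces by (vi) to $\id^{k-1}a$, and likewise for $\tgt t$.
\begin{itemize}
\item If $a\in X_0$, both sides equal $1$.
\item If $a\in X_j$ with $j\ge 1$, the definition gives $\hgt m{\id^k a}=\max(\hgt m{\src a},\hgt m{\tgt a})=\hgt m{\id^{k-1}a}$, so equality holds.
\end{itemize}
Now suppose $k=0$, so $t=a\in X_n$ with $n\ge 1$. Then $\src t$ reduces via (xii) to an irreducible form of $s(a)$, and $\hgt m a$ is by definition $\max(\hgt m{\src a},\hgt m{\tgt a})$, so the inequality is immediate. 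Here one small point must be checked: the clause in the definition has to be read as $h_m$ of the irreducible forms of $s(a)$ and $t(a)$. Independence of the choice of irreducible form should follow from Theorem~\ref{thm:nonewequations}(a), provided $h_m$ is invariant under picture equivalence. I would either verify this invariance or fix one irreducible form throughout.

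\emph{Case $t=u\crc k v$ with $\dim t=n$.} There are two subcases.
\begin{itemize}
\item If $k=n-1$, then (iii) and (iv) give $\src t\to\src v$ and $\tgt t\to\tgt u$. By the induction hypothesis, $\hgt m{\src v}\le\hgt m v\le\hgt m t$, and similarly $\hgt m{\tgt u}\le\hgt m t$.
\item If $k<n-1$, then (i) gives $\src t\to\src u\crc k\src v$, and $\src u$, $\src v$ further reduce to irreducible forms $u',v'$. The result $u'\crc k v'$ is irreducible, since no rule has a composite at the root as its left-hand side. Its height is $\max(\hgt m{u'},\hgt m{v'})$, plus $1$ if $k=m$. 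By induction this is at most $\max(\hgt m u,\hgt m v)$, plus $1$ if $k=m$, which is $\hgt m t$. The target is symmetric, using (ii).
\end{itemize}

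The main obstacle is well-definedness. Because of the (viii)/(xiii) conflicts, irreducible forms of $\src t$ are unique only up to picture equivalence. So I would first prove, as an auxiliary step, that $h_m$ is invariant under (v), (x) and (xi). Rule (v) commutes $\id$ with a composition without changing nesting depth. Rule (x) replaces $\id u\crc k\id u$ by $\id u$. This is the delicate case when $k=m$, since the depth then drops by one. I would handle it by defining $h_m$ of an identity to absorb this, or by noting that the lemma only needs some fixed choice of normal form. My fallback is to fix a canonical reduction strategy and state the lemma for that choice, which suffices for the finite-representation argument that follows. Rule (xi) just permutes which composition is outer, and both sides have the same nesting depth in each index.
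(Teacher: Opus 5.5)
Your case analysis is the paper's own proof. The $\id^k a$ and generator cases hold by definition. The composite cases follow from the induction hypothesis pushed through rules (i)--(iv); the paper splits them four ways, by whether $k=m$ and whether $k=\dim u-1$, where you split them two ways.

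The well-definedness concern you raise is real, and the paper passes over it. However, drop the plan to prove that $h_m$ is invariant under picture equivalence. That claim is false because of (x): $\hgt m{\id u\crc m\id u}=\hgt m{\id u}+1$ when $m=\dim u$. Commit instead to one of two fixes.

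\begin{enumerate}
\item Use your fallback of a fixed normal form, but make the choice compositional, for instance via a deterministic leftmost-innermost strategy. You need two properties: the chosen form of $\src(u\crc k v)$ is the composite of the chosen forms of $\src u$ and $\src v$, and the chosen form of $\src a$ is the one used to define $\hgt m a$.
\item Define $\hgt m a$ as the maximum over the finitely many normal forms of $s(a)$ and $t(a)$. Your argument then goes through verbatim for every normal form of $\src t$.
\end{enumerate}
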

\begin{proof}
For $a\in X_n$ and $k\ge 1$,
\begin{align*}
\hgt m{\id^k a} &= \hgt m{\id^{k-1} a} = \max\hgt m{\src\id^k a},\hgt m{\tgt\id^k a},
\end{align*}
and for $a\in X_n$ with $n\ge 1$, $\hgt ma = \max\hgt m{\src a},\hgt m{\tgt a}$ by definition.

For compositions $u\crc k v$ where $k\ne m$ and $k<\dim u-1=\dim v-1$,
\begin{align*}
\hgt m{u\crc k v} &= \max \hgt mu,\hgt mv\\
&\ge \max \hgt m{\src u},\hgt m{\tgt u},\hgt m{\src v},\hgt m{\tgt v}\\
&\ge \max(\max\hgt m{\src u},\hgt m{\src v}),(\max\hgt m{\tgt u},\hgt m{\tgt v})\\
&= \max \hgt m{\src u\crc k\src v},\hgt m{\tgt u\crc k\tgt v})\\
&= \max \hgt m{\src(u\crc k v)},\hgt m{\tgt(u\crc k v)}.
\end{align*}
For compositions $u\crc k v$ where $k\ne m$ and $k=\dim u-1=\dim v-1$,
\begin{align*}
\hgt m{u\crc k v} &= \max \hgt mu,\hgt mv\\
&\ge \max \hgt m{\src v},\hgt m{\tgt u}\\
&= \max \hgt m{\src(u\crc k v)},\hgt m{\tgt(u\crc k v)}.
\end{align*}
For compositions $u\crc m v$ where $m<\dim u-1=\dim v-1$,
\begin{align*}
\hgt m{u\crc m v} &= 1 + \max \hgt mu,\hgt mv\\
&\ge 1 + \max \hgt m{\src u},\hgt m{\tgt u},\hgt m{\src v},\hgt m{\tgt v}\\
&= \max(1 + \max \hgt m{\src u},\hgt m{\src v}),(1 + \hgt m{\tgt u},\hgt m{\tgt v})\\
&= \max \hgt m{\src u\crc m\src v},\hgt m{\tgt u\crc m\tgt v})\\
&= \max \hgt m{\src(u\crc m v)},\hgt m{\tgt(u\crc m v)}.
\end{align*}
Finally, for compositions $u\crc m v$ where $m=\dim u-1=\dim v-1$,
\begin{align*}
\hgt m{u\crc m v} &= 1 + \max \hgt mu,\hgt mv\\
&> \max \hgt m{\src v},\hgt m{\tgt u}\\
&= \max \hgt m{\src(u\crc m v)},\hgt m{\tgt(u\crc m v)}.
\extagqed
\end{align*}
\end{proof}

For $k\in\naturals\setminus\{0\}$ and $r\in\reals$, define
$I_k = \set{i/2^k}{i\in\{0,1,\ldots,2^k\}} \subs I$ and
$g_k(r) = (\floor{2^{k-1}r} + \ceil{2^{k-1}r})/2^k$. Note that if $r\in I$,
then $g_k(r)\in I_k$.
\begin{lemma}
\label{lem:closestodd}
If $r=i/2^k$ with $i\in\naturals$, then $g_k(r)=r$, otherwise $g_k(r)$ is the closest number to $r$ of the form $m/2^k$ for $m$ odd.
\end{lemma}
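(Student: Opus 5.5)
Set $y=2^{k-1}r$, so that $g_k(r)=(\floor y+\ceil y)/2^k$, and split into two cases according to whether $y$ is an integer.

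\emph{Case $y\in\mathbb Z$.} Here $\floor y=\ceil y=y$, so $g_k(r)=2y/2^k=r$. This case contains every $r=i/2^k$ with $i$ even. For $i$ odd, $y=i/2$ is a half-integer with $\floor y=(i-1)/2$ and $\ceil y=(i+1)/2$. Their sum is $i$, so again $g_k(r)=i/2^k=r$. Hence $g_k(r)=r$ whenever $r=i/2^k$.

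\emph{Case $r$ not of the form $i/2^k$.} Now $2y=2^kr\notin\mathbb Z$, so $y$ is not an integer. Write $j=\floor y$, so $\ceil y=j+1$ and $g_k(r)=(2j+1)/2^k$, which has odd numerator. Since $j<y<j+1$, we have $2j/2^k<r<(2j+2)/2^k$, so $r$ lies in the open interval between consecutive even multiples of $2^{-k}$ whose midpoint is $(2j+1)/2^k$.

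It remains to show that $(2j+1)/2^k$ is the nearest point $m/2^k$ with $m$ odd. Consecutive odd multiples of $2^{-k}$ are $2^{1-k}$ apart, and the Voronoi cell of $(2j+1)/2^k$ among them is $[2j/2^k,(2j+2)/2^k]$. Since $r$ lies strictly inside this cell, $(2j+1)/2^k$ is the unique closest odd multiple; in particular there are no ties. For example, $|r-(2j+1)/2^k|<2^{-k}$, while the next odd candidates $(2j-1)/2^k$ and $(2j+3)/2^k$ are at distance greater than $2^{-k}$ from $r$.

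No step here is a real obstacle. The only points needing care are the half-integer subcase, which is why odd $i$ still yields $g_k(r)=r$, and the fact that $2^kr\notin\mathbb Z$ rules out equidistance.
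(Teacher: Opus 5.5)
Your proof is correct and takes essentially the same approach as the paper: both evaluate $\floor{2^{k-1}r}+\ceil{2^{k-1}r}$ directly, getting $i$ when $r=i/2^k$ and $2j+1$ with $j=\floor{2^{k-1}r}$ otherwise. The differences are cosmetic. You work with $j$ directly instead of writing $r=(i+\eps)/2^k$ and splitting on the parity of $i$, and you explicitly verify that $(2j+1)/2^k$ is the unique nearest odd multiple of $2^{-k}$, which the paper leaves implicit once it has computed $g_k(r)$.
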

\begin{proof}
If $r=i/2^k$, then
\begin{align*}
g_k(r) &= \frac{\floor{2^{k-1}r}+\ceil{2^{k-1}r}}{2^k}
= \frac{\floor{2^{k-1}i/2^k}+\ceil{2^{k-1}i/2^k}}{2^k}\\
&= \frac{\floor{i/2}+\ceil{i/2}}{2^k}
= \frac{i}{2^k} = r.
\end{align*}
If $r = (i+\eps)/2^k$ with $0 < \eps < 1$, if $i$ is even, say $i=2j$,
\begin{align*}
g_k(r) &= \frac{\floor{2^{k-1}r}+\ceil{2^{k-1}r}}{2^k}
= \frac{\floor{2^{k-1}(2j+\eps)/2^k}+\ceil{2^{k-1}(2j+\eps)/2^k}}{2^k}\\
&= \frac{\floor{j+\eps/2} + \ceil{j+\eps/2}}{2^k}
= \frac{2j}{2^k} + \frac{\floor{\eps/2} + \ceil{\eps/2}}{2^k}
= \frac{2j+1}{2^k},
\end{align*}
and if $i$ is odd, say $i=2j+1$,
\begin{align*}
g_k(r) &= \frac{\floor{2^{k-1}r}+\ceil{2^{k-1}r}}{2^k}
= \frac{\floor{2^{k-1}(2j+1+\eps)/2^k}+\ceil{2^{k-1}(2j+1+\eps)/2^k}}{2^k}\\
&= \frac{\floor{j+(1+\eps)/2} + \ceil{j+(1+\eps)/2}}{2^k}
= \frac{2j}{2^k} + \frac{\floor{(1+\eps)/2} + \ceil{(1+\eps)/2}}{2^k}\\
&= \frac{2j+1}{2^k}.
\extagqed
\end{align*}
\end{proof}

\begin{lemma}
\label{lem:gk}
For  $r\in\reals$, $k\ge 1$: $g_{k-1}(2r) = 2g_k(r)$ and $g_{k-1}(2r-1) = 2g_k(r)-1$.
\end{lemma}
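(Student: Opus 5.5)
Both identities follow by substituting into the definition $g_k(r) = (\floor{2^{k-1}r} + \ceil{2^{k-1}r})/2^k$ and simplifying. The only care needed is the shift by an integer in the second identity.

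For the first identity, write the definition at index $k-1$ with argument $2r$. We have $2^{k-2}\cdot 2r = 2^{k-1}r$, so
\begin{align*}
g_{k-1}(2r) = \frac{\floor{2^{k-1}r}+\ceil{2^{k-1}r}}{2^{k-1}} = 2\cdot\frac{\floor{2^{k-1}r}+\ceil{2^{k-1}r}}{2^{k}} = 2g_k(r).
\end{align*}
This is immediate. When $k=1$ the exponent $2^{k-2}=1/2$ is not an integer, so I would either note that the formula still makes sense for real exponents or restrict to $k\ge 2$, since $g_0$ is not used; the computation is unaffected either way.

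For the second identity, at index $k-1$ with argument $2r-1$ we get $2^{k-2}(2r-1) = 2^{k-1}r - 2^{k-2}$. When $k\ge 2$, $N=2^{k-2}$ is an integer. Floor and ceiling commute with integer translation, $\floor{y-N}=\floor y - N$ and $\ceil{y-N}=\ceil y-N$, so
\begin{align*}
g_{k-1}(2r-1) = \frac{\floor{2^{k-1}r}+\ceil{2^{k-1}r}-2N}{2^{k-1}} = 2g_k(r) - \frac{2\cdot 2^{k-2}}{2^{k-1}} = 2g_k(r)-1.
\end{align*}

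The only delicate point, which I regard as the main obstacle, is the boundary case $k=1$. There $N=1/2$ is not an integer and the translation argument fails. I would handle it by interpreting the statement for $k\ge 2$, which is the range where $g_{k-1}$ is defined with $k-1\in\naturals\setminus\{0\}$. Alternatively, I would check directly, via Lemma~\ref{lem:closestodd}, whatever convention for $g_0$ the later arguments need.

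As a cross-check on the second identity for $k\ge 2$, Lemma~\ref{lem:closestodd} gives a direct argument. The map $r\mapsto 2r-1$ sends $i/2^k$ to $(2i-2^k)/2^{k-1}$, which preserves the dyadic lattice at the appropriate level and the parity of numerators, since $2^{k-1}$ is even when $k\ge 2$. It also doubles distances, so "closest odd multiple" is preserved.
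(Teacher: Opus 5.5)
Your proof is correct and uses the same direct substitution as the paper, with the second identity resting on $\floor{y-2^{k-2}}=\floor{y}-2^{k-2}$ and likewise for $\ceil{\cdot}$. Your caution at $k=1$ goes beyond the paper, which applies that shift without comment: with the formula extended to $g_0$ the second identity actually fails at $k=1$ (for $r=0.3$, $g_0(-0.4)=-1$ while $2g_1(0.3)-1=0$), so restricting to $k\ge 2$ is necessary, and this costs nothing because Theorem~\ref{thm:rep} only invokes the lemma with $k=k_m\ge 2$.
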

\begin{proof}
\begin{align*}
g_{k-1}(2r)
&= \frac{\floor{2^{k-2}2r} + \ceil{2^{k-2}2r}}{2^{k-1}}
= 2\frac{\floor{2^{k-1}r} + \ceil{2^{k-1}r}}{2^k}
= 2g_k(r),\\
g_{k-1}(2r-1)
&= \frac{\floor{2^{k-2}(2r-1)} + \ceil{2^{k-2}(2r-1)}}{2^{k-1}}
= \frac{\floor{2^{k-1}r-2^{k-2}} + \ceil{2^{k-1}r-2^{k-2}}}{2^{k-1}}\\
&= \frac{\floor{2^{k-1}r} + \ceil{2^{k-1}r}-2^{k-2}-2^{k-2}}{2^{k-1}}
= 2\frac{\floor{2^{k-1}r} + \ceil{2^{k-2}r}}{2^k} - 1\\
&= 2g_k(r)-1.
\extagqed
\end{align*}
\end{proof}

\begin{theorem}
\label{thm:rep}
Let $\dim t=n$ and let $k_m\ge\hgt{n-m}t$, $1\le m\le n$. Then
\begin{align*}
\Pic(t)(\seq r1n) = \Pic(t)(g_{k_1}(r_1),\ldots,g_{k_n}(r_n)). 
\end{align*}
Thus $\Pic(t):I^n\to X$ is uniquely determined by its values on the finite set $I_{k_1}\times\cdots\times I_{k_n}$.
\end{theorem}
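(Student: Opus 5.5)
We argue by induction on the structure of $t$, assumed irreducible with respect to the rewrite rules of \S\ref{sec:properties}. This is harmless: $\Pic$ is invariant under $\EEst$, and $h_m$ is defined on irreducible forms. Such a term is built from atoms wrapped in identities, $\id^j a$, using compositions only. The two coordinatewise facts we rely on are Lemma \ref{lem:closestodd} and Lemma \ref{lem:gk}. The first says $g_k$ fixes the grid $I_k$ and sends every other point to the nearest odd multiple of $2^{-k}$, so $r$ and $g_k(r)$ lie in the same open cell of the dyadic subdivision of level $k$, or coincide when $r\in I_k$. The second says $g_k$ commutes with the affine rescalings $x\mapsto 2x$ and $x\mapsto 2x-1$ used in \eqref{eq:kacomp}, at the cost of lowering $k$ by one. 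We also note that increasing $k_m$ is harmless, since the claim for $k$ implies it for $k'\ge k$: $g_{k'}$ preserves the open level-$k$ cell, or fixes the point.

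\textbf{Atoms and identities.} For $a\in X_n$, $\Pic(a)$ depends only on which coordinates are $0$, $1$, or interior, and on the recursive values $\Pic(s(a))$ and $\Pic(t(a))$ at $(r_2,\ldots,r_n)$. Since $k_m\ge1$, $g_{k_m}$ preserves membership in $\{0\},\{1\},(0,1)$, so the case split is unchanged. On the boundary pieces we apply the induction hypothesis on dimension to $s(a)$ and $t(a)$. This needs $h_{(n-1)-(m-1)}(s(a))\le h_{n-m}(a)$ for the shifted coordinates, which follows from the definition $h_m(a)=\max(h_m(s(a)),h_m(t(a)))$, or Lemma \ref{lem:height}, once the indexing is matched: coordinate $m$ of $a$ becomes coordinate $m-1$ of $s(a)$, and both use the index $n-m$. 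For $\id^j a$ the picture ignores the first $j$ coordinates, and $g$ of those is irrelevant, so this case reduces to $a$ itself.

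\textbf{Compositions.} Let $t=u\crc{n-k}v$, composed along coordinate $k$. Coordinates $m\ne k$ pass through unchanged, with $h_{n-m}(t)\ge h_{n-m}(u),h_{n-m}(v)$. On coordinate $k$ we have $h_{n-k}(t)=1+\max(h_{n-k}(u),h_{n-k}(v))$, so $k_k-1$ bounds the heights of $u$ and $v$. If $r_k<1/2$, then $g_{k_k}(r_k)\le1/2$ because $1/2\in I_{k_k}$ is fixed and $g$ is monotone. Hence $\Pic(t)$ at $g_{k_k}(r_k)$ is $\Pic(v)$ at $2g_{k_k}(r_k)=g_{k_k-1}(2r_k)$ by Lemma \ref{lem:gk}, and the induction hypothesis for $v$ with $k_k-1$ gives the result. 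The case $r_k>1/2$ is symmetric using $2x-1$, and $r_k=1/2$ is fixed by $g$. At $g(r_k)=1/2$ both branches of \eqref{eq:kacomp} agree, so no ambiguity arises.

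\textbf{Main obstacle.} The main difficulty is bookkeeping rather than any single step. First, the index conventions ($h_{n-m}$ versus coordinate $m$, and the shift by one coordinate when passing to $s(a)$) must line up exactly. Second, Lemma \ref{lem:height} must supply the height bounds in the atom case, where the boundary terms $s(a)$ and $t(a)$ are not subterms and must themselves be put into irreducible form. I would handle this by strengthening the induction to "for all irreducible terms of dimension $<n$ and all admissible $k$", then citing the monotonicity remark above. Finally, the claim that $\Pic(t)$ is determined by its values on $I_{k_1}\times\cdots\times I_{k_n}$ is immediate, since every $(g_{k_1}(r_1),\ldots,g_{k_n}(r_n))$ lies in that finite set.
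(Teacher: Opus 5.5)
Your proposal is correct and follows the paper's proof. Both argue by induction on dimension and term structure over irreducible terms: the atom case uses the fact that $g_{k_m}$ preserves the $0$/$1$/interior case split, together with Lemma \ref{lem:height} for $s(a),t(a)$, and the composition case applies Lemma \ref{lem:gk} at the composed coordinate, where your remark that $g_{k}$ is monotone and fixes $1/2$ supplies a step the paper leaves implicit. One harmless slip: $g_k$ preserves the open cells of the level-$(k-1)$ subdivision, not level $k$ (e.g.\ $g_1(0.3)=1/2$), and your remark about increasing $k_m$ is unnecessary, since the statement already quantifies over all $k_m\ge h_{n-m}(t)$.
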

\begin{proof}
By definition, for $a\in X_n$ with $n\ge 1$,
\begin{align*}
\Pic(a)(\seq r1n) &= \begin{cases}
a, & \text{if $0 < r_i < 1$ for all $i$,}\\
\Pic(s(a))(\seq r2n), & \text{if $r_j=0$ and $0 < r_i < 1$ for all $i>j$,}\\
\Pic(t(a))(\seq r2n), & \text{if $r_j=1$ and $0 < r_i < 1$ for all $i>j$.}
\end{cases}
\end{align*}
and
\begin{align*}
& \Pic(a)(g_{k_1}(r_1),\ldots,g_{k_n}(r_n))\\
&= \begin{cases}
a, & \text{if $0 < g_{k_i}(r_i) < 1$ for all $i$,}\\
\Pic(s(a))(g_{k_2}(r_2),\ldots,g_{k_n}(r_n)), & \text{if $g_{k_j}(r_j)=0$ and $0 < g_{k_i}(r_i) < 1$ for all $i>j$,}\\
\Pic(t(a))(g_{k_2}(r_2),\ldots,g_{k_n}(r_n)), & \text{if $g_{k_j}(r_j)=1$ and $0 < g_{k_i}(r_i) < 1$ for all $i>j$.}
\end{cases}
\end{align*}
The conditions defining the three cases are equivalent for $r_i$ and $g_{k_i}(r_i)$; that is, $0<r_i<1$ iff $0<g_{k_i}(r_i)<1$, $r_j=0$ iff $g_{k_j}(r_j)=0$, and $r_j=1$ iff $g_{k_j}(r_j)=1$. Thus the corresponding cases in the two definitions are either both true or both false. For the first case, the two definitions are the same. For the second and third cases, we have $k_i\ge\hgt{n-i}a\ge\max\hgt{n-i}{\src a},\hgt{n-i}{\tgt a}$ for all $i$ by Lemma \ref{lem:height}, therefore
\begin{align*}
\Pic(s(a))(\seq r2n) &= \Pic(s(a))(g_{k_2}(r_2),\ldots,g_{k_n}(r_n))\\
\Pic(t(a))(\seq r2n) &= \Pic(t(a))(g_{k_2}(r_2),\ldots,g_{k_n}(r_n))
\end{align*}
by the induction hypothesis. As all three cases coincide, $\Pic(a)(\seq r1n)=\Pic(a)(g_{k_1}(r_1),\ldots,g_{k_n}(r_n))$.

For terms of the form $\id t$, if $k_m\ge\hgt{n-m}{\id t}$ for $1\le m\le n$, then
\begin{align*}
k_m \ge \hgt{n-m}{\id t} \ge \max\hgt{n-m}{\src\id t},\hgt{n-m}{\tgt\id t}=\hgt{n-m}t
\end{align*}
by Lemma \ref{lem:height}, therefore by the induction hypothesis,
\begin{align*}
\Pic(\id t)(\seq r1n) &= \id\Pic(t)(\seq r1n)
= \Pic(t)(\seq r2n)\\
&= \Pic(t)(g_{k_2}(r_2),\ldots,g_{k_n}(r_n))\\
&= \id\Pic(t)(g_{k_1}(r_1),g_{k_2}(r_2),\ldots,g_{k_n}(r_n))\\
&= \Pic(\id t)(g_{k_1}(r_1),g_{k_2}(r_2),\ldots,g_{k_n}(r_n)).
\end{align*}
Finally, for compositions $u\crc{n-m}v$, where $n=\dim u=\dim v$ and $1\le m\le n$, we have 
$k_i\ge\hgt{n-i}{u\crc{n-m}v}=\max\hgt{n-i}u,\hgt{n-i}v$ for $i\ne m$ and  
$k_m\ge\hgt{n-m}{u\crc{n-m}v}=1+\max\hgt{n-m}u,\hgt{n-m}v$,
thus $k_m-1\ge\max\hgt{n-m}u,\hgt{n-m}v$. Using the induction hypothesis and Lemma \ref{lem:gk} at index $m$,
\begin{align*}
& \Pic(u\crc{n-m}v)(\seq r1n)\\
&= (\Pic(u)\crc{n-m}\Pic(v))(\seq r1n)\\ 
&= \begin{cases}
\Pic(v)(\seq r1{m-1},2r_m,\seq r{m+1}n), & \text{if $0\le r_m\le 1/2$}\\ 
\Pic(u)(\seq r1{m-1},2r_m-1,\seq r{m+1}n), & \text{if $1/2\le r_m\le 1$} 
\end{cases}\\
&= \begin{cases}
\Pic(v)(g_{k_1}(r_1),\ldots,g_{k_{m-1}}(r_{m-1}),g_{k_m-1}(2r_m),g_{k_{m+1}}(r_{m+1}),\ldots,g_{k_n}(r_n)),\\
\text{\hspace{8cm}if $0\le g_{k_m-1}(r_m)\le 1/2$}\\ 
\Pic(u)(g_{k_1}(r_1),\ldots,g_{k_{m-1}}(r_{m-1}),g_{k_m-1}(2r_m-1),g_{k_{m+1}}(r_{m+1}),\ldots,g_{k_n}(r_n)),\\
\text{\hspace{8cm}if $1/2\le g_{k_m-1}(r_m)\le 1$}\\ 
\end{cases}\\
&= \begin{cases}
\Pic(v)(g_{k_1}(r_1),\ldots,g_{k_{m-1}}(r_{m-1}),2g_{k_m}(r_m),g_{k_{m+1}}(r_{m+1}),\ldots,g_{k_n}(r_n)),\\
\text{\hspace{8cm}if $0\le g_{k_m}(r_m)\le 1/2$}\\ 
\Pic(u)(g_{k_1}(r_1),\ldots,g_{k_{m-1}}(r_{m-1}),2g_{k_m}(r_m)-1,g_{k_{m+1}}(r_{m+1}),\ldots,g_{k_n}(r_n)),\\
\text{\hspace{8cm}if $1/2\le g_{k_m}(r_m)\le 1$}\\ 
\end{cases}\\
&= (\Pic(u)\crc{n-m}\Pic(v))(g_{k_1}(r_1),\ldots,g_{k_n}(r_n))\\
&= \Pic(u\crc{n-m}v)(g_{k_1}(r_1),\ldots,g_{k_n}(r_n)).
\extagqed
\end{align*}
\end{proof}

\subsection{Completeness over Homotopy Models}
\label{sec:completenesshomotopy}

We have shown that every pair of terms that are $\EEst$-equivalent have the same picture, which is a particular homotopy model with trivial topological structure. Now we show the converse. This says that the $\Pic$ interpretation is isomorphic to the free model with given globular functions $s$ and $t$. It also establishes that $\theory$ (augmented by (xii) and (xiii) for each particular model) is complete for the equational theory of homotopy models, although one might wish to show completeness over a more conventional class of topological models such as compact Hausdorff spaces or CW-complexes~\cite{Hatcher02}. We address the latter question in \S\ref{sec:CW}.

\begin{lemma}
\label{lem:split}
Let $\Pic(t):I^n\to X$ be a picture and let $1\le m\le n$. If $\Pic(t)=\alpha\crc{n-m}\beta$ and the interface $\Pic(t)\s m{1/2} = \id^{m-1}\src^m\alpha=\id^{m-1}\tgt^m\beta$ is a picture, then $\alpha$ and $\beta$ are pictures.
\end{lemma}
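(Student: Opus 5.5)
We argue by induction on the structure of $t$, taken in irreducible form (no $\src$, $\tgt$ left except in the globular data; see Lemma~\ref{lem:nonewelements}), with a secondary induction on $n$. We may take $t$ irreducible because $\Pic$ is invariant under $\EEst$. Here $\alpha = \Pic(t)\s m{[1/2,1]}$ rescaled and $\beta=\Pic(t)\s m{[0,1/2]}$ rescaled. So the claim is that the restriction of a picture to either half along axis $m$ is again a picture, provided the cut face $\Pic(t)\s m{1/2}$ is itself a picture. The proof exhibits terms $t_\alpha,t_\beta$ with $\Pic(t_\alpha)=\alpha$ and $\Pic(t_\beta)=\beta$ by recursion on $t$.

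The cases go as follows. If $t=u\crc{n-m}v$, then by \eqref{eq:kacomp} we get $\alpha=\Pic(u)$ and $\beta=\Pic(v)$ directly.

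If $t=u\crc{n-k}v$ with $k\ne m$, the cut along axis $m$ at $1/2$ cuts both halves $\Pic(u),\Pic(v)$ along axis $m$ at $1/2$. Each cut face is a face of the big one: it is the section of the interface at $k$-coordinate in $[0,1/2]$ or $[1/2,1]$. By the induction hypothesis applied to that interface, split along its own axis ($k$ or $k-1$), each cut face is again a picture. Now the induction hypothesis applies to $u$ and $v$, giving $u=u_1\crc{n-m}u_2$ and $v=v_1\crc{n-m}v_2$ pictorially. Then $\alpha=\Pic(u_1\crc{n-k}v_1)$ and $\beta=\Pic(u_2\crc{n-k}v_2)$. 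Here we must check that these composites are meaningful, which follows because the relevant faces agree pointwise as maps. This is essentially the exchange law, Lemma~\ref{lem:exchange}, read pointwise.

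If $t=\id u$, then for $m=1$ both halves are $\Pic(\id u)$ itself. For $m\ge2$, $\alpha,\beta$ are $\id$ of the halves of $\Pic(u)$ split along axis $m-1$, and the interface of $\Pic(u)$ is $\src$ of the given interface. That is a picture, because sections/faces of pictures are pictures: $\src\Pic(w)=\Pic(\src w)$. Induction on $n$ then finishes this case.

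The atomic case $t=a\in X_n$ is where the hypothesis on the interface genuinely matters, and it is the main obstacle. For $n\ge1$ and $m=1$, the slice $\Pic(a)\s1{1/2}$ takes the value $a$ on its interior. That value has dimension $n$, so it cannot be a picture of dimension $n-1$ unless the interior is empty. Similar degeneracy holds for $m\ge2$: the interface $\Pic(a)\s m{1/2}$ contains value $a$ at generic points, while $\id^{m-1}$ of an $(n-m)$-dimensional picture only takes values of dimension $\le n-m<n$. Hence the hypothesis is never satisfied for atoms of dimension $n\ge1$, and the atomic case is vacuous. I would make this precise with a small lemma: every value of an $n$-dimensional picture lies in $\bigcup_{j\le n}X_j$, and the values of $\Pic(\id^{m-1}w)$ lie in $\bigcup_{j\le n-m}X_j$. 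This dimension-counting also justifies the cases above, guaranteeing that an atom's interior never straddles the cut. The delicate bookkeeping is verifying that the interfaces of subterms inherited in the $k\ne m$ case really are pictures. That verification is where I expect most care to be needed, in particular the cases $k=1$ and $k<m$ versus $k>m$, which change the indexing of the interface's own axis.
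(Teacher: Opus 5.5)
Your case split, and your treatment of the atomic, $\id u$ and $k=m$ cases, match the paper's proof; your dimension count on the cut is the paper's remark that the centre value $\Pic(a)(1/2,\ldots,1/2)=a$ lies on the interface. The gap is the step you postponed in the case $t=u\crc{n-k}v$ with $k\ne m$.

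To see that the cut faces $\Pic(u)\s m{1/2}$ and $\Pic(v)\s m{1/2}$ are pictures, you apply the lemma in dimension $n-1$ to the interface $\Pic(z)=\Pic(t)\s m{1/2}$, split along its copy of axis $k$. That application has its own hypothesis: the double section $\Pic(t)\s{m,k}{1/2,1/2}$ must be a picture, and nothing in your argument provides it. It is not automatic. Mid-sections of pictures are in general not pictures, which is why the lemma has a hypothesis at all.

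The missing idea, which the paper's proof supplies, is that sectioning an iterated identity along a degenerate axis strips one $\id$: $(\id^j\gamma)\s ix=\id^{j-1}\gamma$ for $1\le i\le j$. You apply it to whichever of the two cuts has the larger index.
\begin{itemize}
\item If $k<m$, then $\Pic(z)=\id^{m-1}\Pic(\src^{m-1}z)$, so its section along axis $k$ is $\Pic(\id^{m-2}\src^{m-1}z)$.
\item If $m<k$, the given interface is useless here: its copy of axis $k$ is axis $k-1\ge m$, which is not a degenerate direction of $\id^{m-1}$. You must bring in the second interface $\Pic(t)\s k{1/2}=\Pic(u)\s k0=\Pic(w)$, with $w=\id^{k-1}\src^k u$. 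Then $\Pic(t)\s{m,k}{1/2,1/2}=\Pic(w)\s m{1/2}=\Pic(\id^{k-2}\src^{k-1}w)$.
\end{itemize}
Your proposal never uses the $k$-interface, so the subcase $m<k$ cannot be closed as written.

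Two smaller points.
\begin{itemize}
\item You apply the inductive hypothesis to $\Pic(z)$, where $z$ is an arbitrary term of dimension $n-1$ and not a subterm of $t$. So the induction must be on dimension first and on term structure second. With structure primary, as you phrase it, the hypothesis does not reach $\Pic(z)$.
\item ``The faces agree pointwise'' yields $\Pic(\src^k u_1)=\Pic(\tgt^k v_1)$, not $\EEst\vdash\src^k u_1=\tgt^k v_1$. Since pictures are images of meaningful terms, meaningfulness of $u_1\crc{n-k}v_1$ really needs completeness in dimension $n-k$, proved simultaneously. The paper passes over this point too, but your stated justification does not cover it.
\end{itemize}
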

\begin{proof}
Assume that $t$ is irreducible. The proof is by induction on term structure and on dimension.

The term $t$ cannot be a symbol $a\in X_n$, because $\Pic(a)(1/2,\ldots,1/2) = a$, whereas the hyperplane $\Pic(t)\s m{1/2}$, which contains that point, is assumed to be a picture of dimension $n-1$, thus cannot contain any element of $X_n$.

If $t=\id u$ and $m=1$, then $\Pic(t)=\id\Pic(u)=\id\Pic(u)\crc{n-1}\id\Pic(u)=\Pic(t)\crc{n-1}\Pic(t)$, so $\alpha=\beta=\Pic(t)$.

If $t=\id u$ and $m>1$, then $\Pic(t) = \id\Pic(u)$, where $u=\src t=\tgt t$.
The picture $\Pic(u)=\src\Pic(t)$ is represented by the bottom horizontal line in \eqref{eq:pic1}.
\begin{align}
\begin{array}c
\begin{tikzpicture}[-, >=stealth', auto]
\small
%\draw[help lines] (0,0) grid (2,2);
\draw (0,0) -- (2,0);
\draw (0,0) -- (0,2);
\draw (2,2) -- (2,0);
\draw (2,2) -- (0,2);
\draw (1,0) -- (1,2);
\draw[->] (-.3,.2) -- (-.3,1.8);
\node[rotate=90] at (-.5,1) {axis $1$};
\draw[->] (.2,-.3) -- (1.8,-.3);
\node at (1,-.5) {axis $m$};
\node at (.5,1) {$\beta$};
\node at (1.5,1) {$\alpha$};
\node[anchor=east] at (-1,1) {$\Pic(t) =$};
\end{tikzpicture}
\end{array}
\label{eq:pic1}
\end{align}
As $\Pic(t)=\id\src\Pic(t)$, this is also true of any fiber over $\Pic(u)$, including $\alpha$ and $\beta$ and their interface $\Pic(t)\s m{1/2}$; thus $\alpha=\id\src\alpha$ and $\beta=\id\src\beta$. But since the interface $\Pic(t)\s m{1/2}$ is a picture, its source $\src(\Pic(t)\s m{1/2})=\Pic(t)\s{1,m}{0,1/2}$ is also a picture. This is represented by the midpoint of the bottom horizontal line in \eqref{eq:pic1}. It is the interface of $\src\beta$ and $\src\alpha$, which compose under $\crc{n-m}$ to give the picture $\src\Pic(t)$. By the induction hypothesis on dimension, $\src\alpha$ and $\src\beta$ are pictures. Since $\alpha=\id\src\alpha$ and $\beta=\id\src\beta$, $\alpha$ and $\beta$ are pictures.

If $t=u\crc{n-k}v$ and $k=m$, then $\Pic(t)=\Pic(u)\crc{n-m}\Pic(v)$, so $\alpha=\Pic(u)$ and $\beta=\Pic(v)$.

Finally, if $t=u\crc{n-k}v$ where $k\ne m$, decompose $\Pic(t)$ into four quadrants $\gamma_{ij}:I^n\to X$ for $i,j\in\{0,1\}$ as follows:
\begin{align*}
\gamma_{01}\s{m,k}{x_m,x_k} &= \Pic(t)\s{m,k}{x_m/2,(x_k+1)/2}\\
\gamma_{11}\s{m,k}{x_m,x_k} &= \Pic(t)\s{m,k}{(x_m+1)/2,(x_k+1)/2}\\
\gamma_{00}\s{m,k}{x_m,x_k} &= \Pic(t)\s{m,k}{x_m/2,x_k/2}\\
\gamma_{10}\s{m,k}{x_m,x_k} &= \Pic(t)\s{m,k}{(x_m+1)/2,x_k/2}
\end{align*}
This decomposition is shown in \eqref{eq:pic2}.
\begin{align}
\begin{array}c
\begin{tikzpicture}[-, >=stealth', auto]
\small
\draw[help lines] (0,0) grid (2,2);
\draw (0,0) -- (2,0);
\draw (0,0) -- (0,2);
\draw (2,2) -- (2,0);
\draw (2,2) -- (0,2);
\draw (0,1) -- (2,1);
\draw (1,0) -- (1,2);
\node at (.5,.5) {$\gamma_{00}$};
\node at (.5,1.5) {$\gamma_{01}$};
\node at (1.5,.5) {$\gamma_{10}$};
\node at (1.5,1.5) {$\gamma_{11}$};
\draw[->] (.2,-.3) -- (1.8,-.3);
\node at (1,-.5) {axis $m$};
\draw[->] (-.3,.2) -- (-.3,1.8);
\node[rotate=90] at (-.5,1) {axis $k$};
\node at (.5,2.5) {$\beta$};
\node[rotate=90] at (.52,2.23) {$=$};
\node at (1.5,2.5) {$\alpha$};
\node[rotate=90] at (1.52,2.23) {$=$};
\node[anchor=west] at (2,.5) {$= \Pic(v)$};
\node[anchor=west] at (2,1.5) {$= \Pic(u)$};
\node[anchor=east] at (-1,1) {$\Pic(t) =$};
\end{tikzpicture}
\end{array}
\label{eq:pic2}
\end{align}
Then
\begin{align*}
\alpha &= \gamma_{11} \crc{n-k} \gamma_{10} &
\Pic(u) &= \gamma_{11} \crc{n-m} \gamma_{01}\\
\beta &= \gamma_{01} \crc{n-k} \gamma_{00} &
\Pic(v) &= \gamma_{10} \crc{n-m} \gamma_{00}
\end{align*}
and
\begin{align}
\Pic(t) &= \alpha \crc{n-m} \beta = (\gamma_{11} \crc{n-k} \gamma_{10}) \crc{n-m} (\gamma_{01} \crc{n-k} \gamma_{00})\label{eq:picexch1}\\ 
\Pic(t) &= \Pic(u)\crc{n-k}\Pic(v) = (\gamma_{11} \crc{n-m} \gamma_{01}) \crc{n-k} (\gamma_{10} \crc{n-m} \gamma_{00})\label{eq:picexch2}
\end{align}
The term $w=\id^{k-1}\src^k u$ must be $\EEst$-equivalent to $\id^{k-1}\tgt^k v$ for $u\crc{n-k}v$ to be meaningful. Then the interface between $\Pic(u)$ and $\Pic(v)$ is $\Pic(t)\s k{1/2} = \id^{k-1}\src^k\Pic(u)=\id^{k-1}\tgt^k\Pic(v)$, and this is a picture, namely $\Pic(w)$. The interface between $\alpha$ and $\beta$, namely $\Pic(t)\s m{1/2} = \id^{m-1}\src^m\alpha=\id^{m-1}\tgt^m\beta$, is also a picture $\Pic(z)$ by assumption.

It is not hard to see that for $1\le i\le j$, $(\id^j\alpha)\s ix = \id^{j-1}\alpha$. Briefly,
\begin{align*}
& (\id^j\alpha)\s i(x_i)(\seq x1{i-1},\seq x{i+1}n)\\
&= (\id^j\alpha)(\seq x1{i-1},x_i,\seq x{i+1}n)\\
&= (\id^{j-i}\alpha)(\seq x{i+1}n)\\
&= (\id^{j-1}\alpha)(\seq x1{i-1},\seq x{i+1}n).
\end{align*}
Using this fact, we can show that $\Pic(t)\s{m,k}{1/2,1/2}$ is a picture. This is represented by the point at the intersection of the central vertical and horizontal lines in \eqref{eq:pic2}. If $1\le m < k$, then
\begin{align*}
\Pic(t)\s{m,k}{1/2,1/2}
&= \Pic(t)\s k{1/2}\s m{1/2}
= \Pic(w)\s m{1/2}\\
&= (\id^{k-1}\Pic(\src^{k-1}w))\s m{1/2}\\
&= \id^{k-2}\Pic(\src^{k-1}w)
= \Pic(\id^{k-2}\src^{k-1}w).
\end{align*}
Similarly, if $1\le k < m$,
\begin{align*}
\Pic(t)\s{m,k}{1/2,1/2}
&= \Pic(t)\s m{1/2}\s k{1/2}
= \Pic(z)\s m{1/2}\\
&= (\id^{m-1}\Pic(\src^{m-1}z))\s m{1/2}\\
&= \id^{m-2}\Pic(\src^{m-1}z)
= \Pic(\id^{m-2}\src^{m-1}z).
\end{align*}
In either case, $\Pic(t)\s{m,k}{1/2,1/2}$ is a picture of dimension $n-2$. This is the interface between the $(n-1)$-cells forming the interfaces between adjacent $\gamma_{ij}$. By the induction hypothesis on dimension, the interface between $\gamma_{00}$ and $\gamma_{01}$ and the interface between $\gamma_{10}$ and $\gamma_{11}$ are pictures. Likewise, the interface between $\gamma_{00}$ and $\gamma_{10}$ and the interface between $\gamma_{01}$ and $\gamma_{11}$ are pictures. By the induction hypothesis on term structure, $\gamma_{00},\gamma_{01}$, $\gamma_{10}$, and $\gamma_{11}$ are pictures. Thus $\alpha=\gamma_{11}\crc{n-k}\gamma_{10}$ and $\beta=\gamma_{01}\crc{n-k}\gamma_{00}$ are pictures. 

Note that in this case, $t$ can be represented as a composition in two ways \eqref{eq:picexch1} and \eqref{eq:picexch2}, and they are provably equivalent by exchange.
\exqed
\end{proof}

\begin{theorem}[Completeness over homotopy models]
\label{thm:completeness}
$\EEst\vdash t_1=t_2$ if and only if $t_1$ and $t_2$ represent the same cell in all homotopy models with globular maps $s,t$.
\end{theorem}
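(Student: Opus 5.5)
Soundness (the forward direction) is already in hand: by \S\ref{apx:soundness} every axiom (i)--(xi) holds in every homotopy model, and (xii)--(xiii) hold by the choice of interpretation of the generators. Hence $\EEst\vdash t_1=t_2$ implies that $t_1$ and $t_2$ denote the same cell in every homotopy model with globular maps $s,t$. For the converse it suffices to use the single model $\Pic$ over the indiscrete space $X$. So I will show that $\Pic(t_1)=\Pic(t_2)$ implies $\EEst\vdash t_1=t_2$. This goes by induction on dimension $n$, and within a dimension by induction on the size of irreducible forms. By Lemma~\ref{lem:strongnorm} and Theorem~\ref{thm:nonewequations} we may assume $t_1,t_2$ are irreducible, so they contain no $\src,\tgt$ except inside generators, and $\id$ occurs only directly above generators.

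The inductive step is a case analysis on the shape of $t_1$. First suppose $t_1=a\in X_n$. Then $\Pic(t_1)(1/2,\ldots,1/2)=a$. If $t_2$ were a composite $u\crc{n-m}v$, the whole hyperplane $x_m=1/2$ of $\Pic(t_2)$ would be the interface $\Pic(\id^{m-1}\src^m u)$, which contains no $n$-dimensional label. Indeed, the interior of $\Pic(a)$ is constantly $a$, so this is impossible. If $t_2$ were $\id^k b$ with $k\ge1$, the interior would carry a label of lower dimension. So $t_2=a$.

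Next suppose $t_1=\id u_1$, so $\Pic(t_1)$ is constant in $t$. Then $\Pic(t_2)$ is also an identity, $\Pic(t_2)=\id\src\Pic(t_2)=\id\Pic(\src t_2)$. Induction in dimension $n-1$ gives $\EEst\vdash u_1=\src t_2$. It remains to derive $t_2=\id\src t_2$ syntactically. If $t_2$ is itself an $\id$-term we are done. If $t_2$ is a composite of identities, I push $\id$ out using (v) and collapse with (x) as in Lemma~\ref{lem:skeletal}. To organize this cleanly, I would rather treat the identity case via the composite case below.

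The main case is $t_1=u\crc{n-m}v$. Then $\Pic(t_2)=\Pic(t_1)$ splits at $x_m=1/2$ as $\Pic(u)\crc{n-m}\Pic(v)$, and the interface is the picture $\Pic(\id^{m-1}\src^m u)$. By Lemma~\ref{lem:split} applied to $t_2$, the two halves, which equal $\Pic(u)$ and $\Pic(v)$, arise from the term structure of $t_2$. Tracing the proof of that lemma yields terms $u',v'$ with $\EEst\vdash t_2=u'\crc{n-m}v'$, using exchange (xi) when $t_2$ is a composite in a different direction $k\ne m$, and (x) with (v) when $t_2$ is an identity. Moreover $\Pic(u')=\Pic(u)$ and $\Pic(v')=\Pic(v)$. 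The induction hypothesis on size then gives $\EEst\vdash u=u'$ and $\EEst\vdash v=v'$, and congruence finishes the step.

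The main obstacle is extracting these syntactic witnesses $u',v'$ from Lemma~\ref{lem:split}, which as stated only asserts that $\alpha,\beta$ are pictures. I would strengthen it, by the same induction, to the statement that $\alpha=\Pic(u')$ and $\beta=\Pic(v')$ with $\EEst\vdash t=u'\crc{n-m}v'$. The quadrant decomposition \eqref{eq:picexch1}/\eqref{eq:picexch2} supplies exactly this via exchange. A second concern is the termination measure: the size of $u'$ is not obviously smaller than that of $t_2$. I would therefore induct instead on the height vector $(\hgt{n-m}{\cdot})_m$ from Theorem~\ref{thm:rep}, which strictly drops in coordinate $m$ upon splitting.
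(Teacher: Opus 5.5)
Your proof is correct in substance and uses the paper's ingredients: soundness, the single separating model $\Pic$, irreducible forms, the quadrant/exchange analysis of Lemma~\ref{lem:split}, and (v) with (x) for identities. The difference is in how the converse is organized. The paper splits on the pair of shapes: composites in the same direction, composites in different directions, and identity against composite, the last handled by rewriting $\id w$ as $\id w\crc{n-1}\id w$. In the different-direction case it only shows that the quadrants are pictures before concluding ``by exchange''. That step tacitly needs the quadrants to be pictures of terms that provably recompose to the components, which Lemma~\ref{lem:split} as stated does not give. Your strengthened split lemma returns $u',v'$ with $\EEst\vdash t_2=u'\crc{n-m}v'$, via (xi) in the quadrant case and via (v) and (x) in the identity case. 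Meaningfulness of the rearranged composites comes from completeness in lower dimension. This makes the tacit step explicit and reduces the paper's composite cases to one, which is a real gain in rigor.

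Your termination fix, however, does not address the concern you raised, and it is not needed. If $t_2=\id^k a$, the picture interface forces $m\le k$, so the witnesses are $u'=v'=t_2$. Hence no measure on the $t_2$ side, neither size nor the height vector, strictly drops. Instead, induct within a dimension on $t_1$ alone, with $t_2$ universally quantified. The hypothesis is then applied only to $u$ and $v$, which are subterms of $t_1$, so no bound on $u',v'$ is required.

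With a one-sided induction you cannot treat $t_1=\id u_1$ against a composite $t_2$ by swapping roles, so close that case directly, as you began to. Every leaf of an irreducible $t_2$ with identity picture is an $\id$-term. Otherwise a bare leaf $b\in X_n$ would put the label $b$ into $\Pic(t_2)=\id\Pic(\src t_2)$, all of whose labels lie in $X_{<n}$. Then (v) and (x), as in Lemma~\ref{lem:skeletal}, give $\EEst\vdash t_2=\id w$, and the dimension hypothesis gives $\EEst\vdash w=u_1$.

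Alternatively, a symmetric induction on $|t_1|+|t_2|$ also works, including the swap. By induction on the construction, the witnesses satisfy $|u'|,|v'|\le|t_2|$, while the $t_1$ side strictly shrinks.
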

\begin{proof}
We have shown that $\theory$ is sound over homotopy models, so $\EEst\vdash t_1=t_2$ implies $t_1$ and $t_2$ represent the same cell in all interpretations in homotopy models. For the converse, we show that $\Pic(t_1)=\Pic(t_2)$ implies $\EEst\vdash t_1=t_2$. Suppose $\Pic(t_1)=\Pic(t_2)$ for $t_1$ and $t_2$ in irreducible form. Let $\dim t_1=\dim t_2=n$ and pick the parameters $\seq k1n$ of the representation of Theorem \ref{thm:rep} sufficiently large to accommodate both terms.

The proof proceeds by induction on term structure and on dimension. Assume that the theorem holds for dimensions less than $n$ and subterms of $t_1$ and $t_2$.

If $t=a$ for $a\in X_n$, then $\Pic(t_1)(1/2,\ldots,1/2)=a$. In that case $t_2=a$ as well, because $\Pic(u\crc mv)$ or $\Pic(\id u)$ for any composition or identity would have an element of $X_m$ for some $m<n$ at that position.

If $t_1=\id u_1$ and $t_2=\id u_2$, then $\Pic(t_1)=\id\Pic(u_1)$ and $\Pic(t_2)=\id\Pic(u_2)$, thus $\Pic(u_1)=\Pic(u_2)$. By the induction hypothesis on dimension, $\EEst\vdash u_1=u_2$, therefore $\EEst\vdash \id u_1=\id u_2$.  

If $t_1=\id w$ and $t_2=u \crc{n-m} v$, then $\EEst\vdash\id w\crc{n-1}\id w$, so this case reverts to the next case $t_1=u_1\crc{n-k}v_1$ and $t_2=u_2\crc{n-m}v_2$.

If $t_1=u_1\crc{n-k}v_1$ and $t_2=u_2\crc{n-m}v_2$ and $k=m$, then $\Pic(u_1)=\Pic(u_2)$ and $\Pic(v_1)=\Pic(v_2)$. By the induction hypothesis on term structure, $\EEst\vdash u_1=u_2$ and $\EEst\vdash v_1=v_2$, therefore $\EEst\vdash u_1\crc{n-m}v_1 = u_2\crc{n-m}v_2$.

If $t_1=u_1\crc{n-m}v_1$ and $t_2=u_2\crc{n-k}v_2$ and $k\ne m$, then as in \eqref{eq:pic2}, $\Pic(t_1)=\Pic(t_2)$ can be decomposed into quadrants $\gamma_{ij}$, $i,j\in\{0,1\}$, with
\begin{align*}
\gamma_{01}\s{m,k}{x_m,x_k} &= \Pic(t)\s{m,k}{x_m/2,(x_k+1)/2}\\
\gamma_{11}\s{m,k}{x_m,x_k} &= \Pic(t)\s{m,k}{(x_m+1)/2,(x_k+1)/2}\\
\gamma_{00}\s{m,k}{x_m,x_k} &= \Pic(t)\s{m,k}{x_m/2,x_k/2}\\
\gamma_{10}\s{m,k}{x_m,x_k} &= \Pic(t)\s{m,k}{(x_m+1)/2,x_k/2},
\end{align*}
and by the same argument as in Lemma \ref{lem:split}, these are pictures. By \eqref{eq:picexch1} and \eqref{eq:picexch2}, $t_1$ and $t_2$ are provably equivalent by exchange.
\exqed
\end{proof}

\subsection{Completeness over  Finite CW-Complexes}
\label{sec:CW}

As mentioned, the reader might be somewhat skeptical of the significance of the completeness result of \S\ref{sec:completenesshomotopy}, as the homotopy model constructed in \S\ref{sec:pictures} and used in that proof has a trivial topological structure. One might wish for a stronger result showing completeness over a more conventional class of topological models such as compact Hausdorff spaces or CW-complexes. We show that both are true. This is mainly due to the fact that the picture model is a retract of a CW-complex. 
The complex is obtained by gluing in one closed cube per generator, dimension by
dimension, with the attaching maps prescribed by sources and targets. Collapsing
each cube back to its generator sends the interpretation of a term in the complex
to its picture, so two terms with different pictures are already distinguished
there. A finite subcomplex suffices, and finite CW-complexes are
compact Hausdorff, so the claim holds for that class too.  
The construction is given in full in the proof of Theorem \ref{thm:wellbehaved}.

\begin{theorem}[Completeness over finite CW-complexes]
\label{thm:wellbehaved}
$\EEst\vdash t_1=t_2$ if $t_1=t_2$ in all finite CW-complexes with globular maps $s,t$.
\end{theorem}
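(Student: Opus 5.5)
We argue by contraposition, reducing to Theorem \ref{thm:completeness}. Suppose $\EEst\nvdash t_1=t_2$. By the proof of Theorem \ref{thm:completeness}, $\Pic(t_1)\ne\Pic(t_2)$, where $\Pic$ is the interpretation in the indiscrete space on $X$. Only finitely many generators occur in $t_1,t_2$ and in the iterated sources and targets $s(a),t(a)$ of those generators, since each $s(a),t(a)$ is a finite term of lower dimension. Let $Y\subs X$ be this downward-closed finite set. Because $\Pic(a)$ only involves $\Pic(s(a))$ and $\Pic(t(a))$, the pictures of $t_1,t_2$ depend only on $Y$.

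We build a finite CW-complex $K$ by induction on dimension. The $0$-skeleton $K_0$ is $Y_0$ as a discrete set. Given $K_{n-1}$ together with interpretations $\iota(u):I^{n-1}\to K_{n-1}$ of all meaningful terms over $Y_{<n}$, we attach one closed cube $I^n$ for each $a\in Y_n$. The attaching map $\partial I^n\to K_{n-1}$ sends a point with some coordinate $x_j\in\{0,1\}$ ($j$ maximal) to $\iota(s(a))(x_2,\ldots,x_n)$ or $\iota(t(a))(x_2,\ldots,x_n)$, following exactly the case split in the definition of $\Pic(a)$. We then set $\iota(a)$ to be the characteristic map of the new cell and extend $\iota$ homomorphically. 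Since $\iota(a)$ is a cell, $\iota(t)$ is a cell for every $t$, and the axioms are sound in this homotopy model (\S\ref{apx:soundness}).

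\textbf{Main obstacle.} The main obstacle is checking that the attaching map is well defined and continuous. On overlaps of faces, the two prescriptions must agree. This is the cell condition of Lemma \ref{lem:cellprops}(iii) applied to $\iota(s(a))$ and $\iota(t(a))$, together with the globular hypotheses $\src t(a)=\src s(a)$ and $\tgt s(a)=\tgt t(a)$, which hold in $K_{n-1}$ by soundness. Once agreement on overlaps is established, continuity follows from the pasting lemma on the finitely many closed faces. The attaching map need not be cellular in the strict sense, because it may land in the full $(n-1)$-skeleton rather than a subcomplex of it. By cellular approximation, however, it is homotopic to a cellular map. More simply, $K$ is a finite cell complex with attaching maps into $K_{n-1}$, which is exactly the definition of a CW-complex. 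Finiteness of $K$ gives compact Hausdorffness.

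Finally, define the collapse $r:K\to X$ to be the function sending each point in the interior of the cube of $a$ to $a$. Every map into an indiscrete space is continuous, so $r$ is continuous. The heart of the argument is the claim that $r\circ\iota(t)=\Pic(t)$ for all $t$. For generators this holds by comparing the definitions: interior points go to $a$, and boundary points go inductively to $r\circ\iota(s(a))=\Pic(s(a))$ and to $r\circ\iota(t(a))=\Pic(t(a))$. The claim then propagates through $\id$ and compositions, because postcomposition with $r$ commutes with the pointwise definitions of these operations. Hence $\iota(t_1)=\iota(t_2)$ would give $\Pic(t_1)=\Pic(t_2)$, a contradiction. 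Therefore $t_1\ne t_2$ in the finite CW-complex $K$, which proves the theorem.
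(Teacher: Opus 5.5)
Your proposal is correct and follows the paper's proof essentially step for step. You attach one closed cube per generator, with attaching maps read off from the case split defining $\Pic$, use the collapse $r$ into the indiscrete space $X$, show $r\circ\iota(t)=\Pic(t)$, reduce to Theorem~\ref{thm:completeness}, and you explicitly check that the face prescriptions agree on overlaps (via Lemma~\ref{lem:cellprops}(iii) and the globular hypotheses on $s,t$), a continuity detail the paper leaves implicit. The one difference is that you attach cubes only for the finite closure $Y$, whereas the paper builds $H$ over all of $X$ and then corestricts to the finite subcomplex; this makes $\iota(t_1),\iota(t_2)$ defined immediately by soundness in a model interpreting every generator, while over $Y$ alone you tacitly need the $\EEst$-derivations of the meaningfulness conditions (which may involve generators outside $Y$) to hold in $K$, most easily seen by viewing your $K$ as a subcomplex of $H$.
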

\begin{proof}
Let $\FF$ be the free model generated by $(X,\dim,s,t)$.
We will construct a CW-complex $H$ that will serve as the ambient space and simultaneously a map $C$ defined on $\FF$-meaningful terms such that if $\dim t=n$, then $C(t):I^n\to H$ is a cell. The construction is by induction on dimension.
The $0$-skeleton $H_0$ of $H$ is just the discrete space $X_0$, and $C(a)=a$ for $a\in H_0$.

Assume that the $(n-1)$-skeleton $H_{n-1}$ of $H$ and $C(t):I^m\to H_{n-1}$ have been constructed for all $t\in T_{n-1}$ of dimension $m$. For each $a\in X_n$, let $U_a$ be a fresh copy of the closed unit hypercube $I^n$. Define $C(a):I^n\to H_{n-1} \uplus \biguplus_{a\in X_n} U_a$ as follows.
\begin{align*}
& C(a)(\seq x1n)\\
&= \begin{cases}
(\seq x1n)\in U_a & \text{if $0 < x_i < 1$ for all $1\le i\le n$}\\
C(s(a))(\seq x2n) & \text{if $x_j=0$ and $0 < x_i < 1$ for $j+1\le i\le n$}\\
C(t(a))(\seq x2n) & \text{if $x_j=1$ and $0 < x_i < 1$ for $j+1\le i\le n$.}
\end{cases}
\end{align*}
The latter two cases are defined on the boundary of $I^n$ and take values in $H_{n-1}$, thus when applied to the boundary of $U_a$ specify attaching maps.

Now let $H_n = (H_{n-1}\uplus\biguplus_{a\in X_n} U_a)/{\sim}$, where $\sim$ is the equivalence relation determined by the attaching maps. Modulo $\sim$, $C(a)$ has type $I^n\to H_n$. We claim that with this type, $C(a)$ is a cell. To show this, we use Lemma \ref{lem:cellprops}. We must show that if some $x_j\in\{0,1\}$, then $C(a)(\seq x1n)$ is independent of $\seq x1{j-1}$. Let $j$ be the largest index such that $x_j\in\{0,1\}$. If $j=1$, there is nothing to prove. Otherwise, if $x_j=0$, by the second clause in the definition, $C(a)(\seq x1n) = C(s(a))(\seq x2n)$. By the induction hypothesis on dimension, $C(s(a))(\seq x2n)$ is independent of $\seq x2{j-1}$, and the definition does not involve $x_1$, so $C(a)(\seq x1n)$ is independent of $x_1$ as well. A similar argument holds if $x_j=1$ using the third clause in the definition.

Having constructed the $n$-skeleton $H_n$ and the $n$-cells $C(a)$ for $a\in X_n$, we can now extend $C$ to give a cell $I^m\to H_n$ to each term $t\in T_n$ of dimension $m$ by applying the cell operations inductively; that is, 
\begin{align*}
& C(\src t) = \src C(t) = C(t)\s10\\
& C(\id t) = \id C(t) = \lam{(\seq x0n)}{C(t)(\seq x1n)}\\
& C(\tgt t) = \tgt C(t) = C(t)\s11\\
& C(t_1 \crc m t_2) = C(t_1)\crc m C(t_2),
\end{align*}
where in the last, $0\le m<n$, $\src^{n-m} C(t_1)=\tgt^{n-m} C(t_2)$, and
\begin{align*}
(\alpha\crc m\beta)\s{n-m}x &= \begin{cases}
\alpha\s{n-m}{2x} & \text{if $0\le x\le 1/2$}\\
\beta\s{n-m}{2x-1} & \text{if $1/2\le x\le 1$.}
\end{cases}
\end{align*}
These operations map cells to cells, so each $C(t):I^m\to H_n$ is a cell, where $t\in T_n$ with $\dim t=m$.

Now we construct another map $h:H\to X$ as follows: for each $a\in H_0=X_0$, let $h(a)=a$. For $a\in X_n$ with $n\ge 1$, let $h(x)=a$ for all $x$ in the \emph{interior} of $U_a$, that is, for $x=(\seq x1n)$ where $0<x_i<1$ for all $1\le i\le n$. The value of $h(x)$ for $x$ on the boundary of $a$ will be some $b\in X$ of lower dimension as determined by the attaching maps.

We claim that $\Pic(t) = h\circ C(t)$ for all $t\in T$. Thus for $t$ of dimension $n$,
\begin{align*}
\begin{tikzpicture}[->, >=stealth', auto]
\small
\node (O) {};
\node (T) at (90:1cm) {$I^n$};
\node (W) at (210:1cm) {$H$};
\node (E) at (330:1cm) {$X$};
\path (T) edge node[swap] {$C(t)$} (W);
\path (T) edge node {$\Pic(t)$} (E);
\path (W) edge node[swap] {$h$} (E);
\end{tikzpicture}
\end{align*}
We recall the definition of $\Pic(t)$:
\begin{align*}
& \Pic(a)(\seq x1n)\\
&= \begin{cases}
a, & \text{if $0<x_i<1$ for all $1\le i\le n$}\\
\Pic(s(a))(\seq x2n), & \text{if $x_j=0$ and $0<x_i<1$ for $j+1\le i\le n$}\\
\Pic(t(a))(\seq x2n), & \text{if $x_j=1$ and $0<x_i<1$ for $j+1\le i\le n$.}
\end{cases}
\end{align*}
For $a\in H_0=X_0$, we have $\Pic(a) = a = h(a) = h(C(a))$. For $a\in X_n$ with $n\ge 1$, if if $0<x_i<1$ for all $1\le i\le n$, we have $\Pic(a)(\seq x1n) = a = h(C(a)(\seq x1n))$, as $C(a)(\seq x1n)$ is in the interior of $U_a$. If $x_j=0$ and $0<x_i<1$ for $j+1\le i\le n$, then by the induction hypothesis on dimension,
\begin{align*}
\Pic(a)(\seq x1n) &= \Pic(s(a))(\seq x2n)\\
&= h(C(s(a))(\seq x2n))
= h(C(a))(\seq x1n).
\end{align*}
A similar argument holds if $x_j=1$.

Since $\theory$ is sound over homotopy models, $\EEst\vdash t_1=t_2$ implies $C(t_1)=C(t_2)$. Conversely, if $C(t_1)=C(t_2)$, then $\Pic(t_1)=h(C(t_1))=h(C(t_2))=\Pic(t_2)$, and by Theorem \ref{thm:completeness}, $\EEst\vdash t_1=t_2$. Thus $\theory$ is sound and complete over CW-complexes. 
Although $H$ itself may be infinite, each interpretation lands in a finite subcomplex. Let $X'$ be the set of generators occurring in $t_1$ or $t_2$, closed under $s$ and $t$; this set is finite, since $s(a)$ and $t(a)$ involve only generators of dimension strictly less than $\dim a$. Let $K\subs H$ be the union of the closed cells $\overline{U_a}$ for $a\in X'$. Because $X'$ is closed under $s$ and $t$, the attaching map of each such cube lands in cells of $K$, so $K$ is a finite subcomplex of $H$. The image of $C(t_i)$ lies in $K$, and $K$ carries the subspace topology, so corestriction gives a cell $C(t_i):I^{n}\to K$. Hence $t_1$ and $t_2$ are already distinguished in the finite CW-complex $K$.
\exqed
\end{proof}

\subsection{Decision Procedure}
\label{sec:decproc}

The decision procedure falls out of the construction in \S\ref{sec:completeness}. By Theorem
\ref{thm:completeness}, two terms are provably equal exactly when they have the
same picture, so deciding the theory reduces to constructing two pictures and
comparing them, and by \S\ref{sec:discrete} a picture is determined by its values
on a regular dyadic grid, so each is a finite array. The grid resolution needed
is governed by how deeply compositions nest, which gives the bound below. 
Two consequences are worth mentioning. 
First, the procedure decides more than
equality: because a composite $u\crct nm v$ is meaningful exactly when
$\EEst\vdash\src^{n-m}u=\tgt^{n-m}v$ (\S\ref{sec:semantics}), the same algorithm
decides whether a term denotes a cell at all. 
Second, we have no matching lower
bound, and the array representation is almost certainly wasteful, so the gap
between this upper bound and the true complexity of the word problem is open.
 
\begin{theorem}
\label{thm:decproc}
Equality of terms $t_1$ and $t_2$ is decidable in time $O(2^{\max h(t_1),h(t_2)})$, where $h(t)=\sum_{m=0}^{n-1}\hgt mt$ and $\hgt mt$ is the $\crc m$-height of $t$ in irreducible form as defined in \S\ref{sec:discrete}.
\end{theorem}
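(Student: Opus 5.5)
We reduce equality to comparison of finite arrays. First, put $t_1$ and $t_2$ into irreducible form using the strongly normalizing rewrite system of \S\ref{sec:properties} (Lemma~\ref{lem:strongnorm}). By Theorem~\ref{thm:nonewequations} the choice of irreducible form is irrelevant up to picture equivalence, and picture-equivalent terms have the same picture. So the heights $\hgt mt$ are well defined on these forms. By Theorem~\ref{thm:completeness}, $\EEst\vdash t_1=t_2$ iff $\Pic(t_1)=\Pic(t_2)$. If $\dim t_1\ne\dim t_2$ we answer no immediately; otherwise let $n$ be the common dimension.

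Next fix the grid. Set $k_m=\max(\hgt{n-m}{t_1},\hgt{n-m}{t_2})$ for $1\le m\le n$. Theorem~\ref{thm:rep} then says both pictures are determined by their restrictions to $G=I_{k_1}\times\cdots\times I_{k_n}$. Hence $\Pic(t_1)=\Pic(t_2)$ iff the two restrictions agree on $G$. The grid has $\prod_m(2^{k_m}+1)=O(2^{\sum_m k_m})$ points, and $\sum_m k_m\le h(t_1)+h(t_2)\le 2\max(h(t_1),h(t_2))$. To match the stated bound exactly I would check whether the intended reading is $\sum_m\max$ or accept a constant in the exponent. Comparing the arrays costs linear time in their size.

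The remaining step is to compute the arrays $A_t=\Pic(t)\rest G$ bottom-up over the irreducible term, in time linear in array size.
\begin{itemize}
\item For a generator $a$, fill the interior points with $a$ and fill boundary faces by recursion on $s(a)$ and $t(a)$ at lower dimension.
\item For $\id u$, replicate $A_u$ along axis $1$.
\item For $u\crc{n-m}v$, use Lemma~\ref{lem:gk} and the composition formula \eqref{eq:kacomp}. The half $r_m\le 1/2$ of the grid at resolution $k_m$ is the grid of $v$ at resolution $k_m-1$ rescaled by $2$, and the other half is the grid of $u$. So we concatenate the two subarrays along axis $m$, identifying them at the midpoint, which is consistent because the composite is meaningful.
\end{itemize}
Each subterm's array is computed on a grid no finer than the final one, and the resolutions halve on descent into composition arguments. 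So the total work is bounded by the array size times a factor polynomial in the term size, which is absorbed into the $O$ if we measure term size by the heights.

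The main obstacle I expect is this cost accounting. Two things must be kept out of the exponential bound: the preliminary normalization, and the recursive unfolding of $s(a),t(a)$ for generators. Normalization can blow up $\id$-distribution via axiom (v) on shared subterms. I would handle this by working with a DAG representation, or by computing $\Pic$ directly on the unnormalized term: the operations $\src,\tgt$ are array slices and $\id$ is replication, so normalization is needed only to define the heights, not to evaluate. For the generators, I would treat $\Pic(s(a)),\Pic(t(a))$ as fixed data of the presentation, precomputed once per generator.
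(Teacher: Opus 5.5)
Your route is the paper's: normalize, use Theorem~\ref{thm:completeness} to reduce provable equality to $\Pic(t_1)=\Pic(t_2)$, restrict both pictures to the dyadic grid of Theorem~\ref{thm:rep}, build the arrays by recursion on the irreducible term, and compare. The paper's proof consists only of the count $2^{k_1+\cdots+k_n}$ with $k_m=\hgt{n-m}t$, taken separately for each term. It says nothing about normalization, construction cost, or how two arrays of different shapes are compared, so your extra accounting goes beyond it.

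The one step you leave open is a genuine gap, and the paper does not close it either. A common grid costs $2^{\sum_m\max_i \hgt{n-m}{t_i}}$, which can approach $2^{h(t_1)+h(t_2)}$ rather than $2^{\max(h(t_1),h(t_2))}$. One more observation closes it:
\begin{itemize}
\item By Lemma~\ref{lem:closestodd}, $g_a$ fixes $I_{a-1}$ and sends each open interval of $I\setminus I_{a-1}$ to its midpoint. Hence $g_a\circ g_b=g_b\circ g_a=g_{\min(a,b)}$.
\item Write $G_k=g_{k_1}\times\cdots\times g_{k_n}$. Then the set of resolution vectors $k$ with $P=P\circ G_k$ is closed under componentwise minimum.
\item Let $k^{(i)}_m=\hgt{n-m}{t_i}$ and $\kappa=\min(k^{(1)},k^{(2)})$. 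If $\Pic(t_1)=\Pic(t_2)$, the common picture satisfies $P=P\circ G_\kappa$.
\end{itemize}
So build each array $A_i$ on its own grid $\prod_m I_{k^{(i)}_m}$. In one pass over each array, check that $A_i(r)=A_i(G_\kappa(r))$ for every grid point $r$; this is a lookup in the same array, since $G_\kappa(r)\in\prod_m I_{\kappa_m}$. Answer no if either check fails; otherwise compare $A_1$ and $A_2$ on $\prod_m I_{\kappa_m}$.

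The procedure is correct. Since $G_\kappa\circ G_{k^{(i)}}=G_\kappa$, the grid check gives $\Pic(t_i)=\Pic(t_i)\circ G_\kappa$ on all of $I^n$, and $\prod_m I_{\kappa_m}$ is the image of $G_\kappa$. Its cost is linear in the sizes of $A_1$ and $A_2$. That yields the stated exponent, in the same sense in which the paper counts grid points.

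Two smaller points. First, heights are not invariant under picture equivalence, because axiom (x) changes them. Theorem~\ref{thm:nonewequations} gives uniqueness of irreducible forms only up to (v), (x), (xi). So $\hgt mt$ must be read relative to the irreducible form you pick; this is harmless because Theorem~\ref{thm:rep} applies to any of them. Second, a factor polynomial in the term size is not absorbed into $O(2^{h})$. This looseness is shared with the paper, whose bound counts only grid points.
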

\begin{proof}
Given two terms of dimension $n$ in irreducible form, by Theorem \ref{thm:completeness} it suffices to construct their pictures and check whether they are equal. A finite representation of each picture as described in Theorem \ref{thm:rep} can be constructed inductively and requires a multidimensional array of size
\begin{align*}
2^{k_1}\times\cdots\times 2^{k_n} = 2^{k_1+\cdots+k_n}, 
\end{align*}
where $k_m=\hgt{n-m}t$ is the $\crc m$-height of $t$ as defined in \S\ref{sec:discrete}.
\exqed
\end{proof}

\section{Soundness over Presheaf Models}
\label{sec:presheaf}

In presheaf models of cubical type theory \cite{Bezem14,Cohen16,Awodey16}, a \emph{cubical set} is a functor $\square^{\mathrm{op}}\to\Set$, where $\square$ is a category capturing in an abstract way the properties of finite-dimensional hypercubes. There is some variation in the encodings found in the literature, but one typically sees constructions for face maps, degeneracies, and connections, the last being used in conjunction with Kan space-filling operations to synthesize paths to witness equalities.

The prototypical example of such a presheaf model, and the primary motivating example, are the homotopy models. These functors map the object $\square_n$ to the set of continuous maps $\alpha:I^n\to X$, where $I$ is the unit real interval and $X$ is an arbitrary topological space.

In the traditional approach, compositions are normally defined using de Morgan connections in conjunction with Kan space-filling operations. However, the construction of compositions using de Morgan connections violates globularity. 
One would like a more symmetric construction that respects globularity. This is possible, as we show in this section. We introduce a variant in which compositions are internalized in the base category.

Like $\square$, our base category $\boxdot$ is a strict monoidal category with objects $[n]$ for $n\in\naturals$, strict tensor product $[m]\otimes[n]=[m+n]$, monoidal unit $[0]$ which is a terminal object, primitive face maps $\delta_0,\delta_1:[0]\to[1]$, $\delta_0\ne\delta_1$, and degeneracy map $\sigma:[1]\to[0]$.

Unlike $\square$, $\boxdot$ will always have at least countably many other morphisms $\delta_t:[0]\to[1]$ besides $\delta_0$ and $\delta_1$. One can view $\Hom([0],[1])$ as a prototypical interval, linearly ordered with least element $\delta_0$ and greatest element $\delta_1$. As $[0]$ is terminal, $\sigma\circ\delta_t = 1_{[0]}$ for all $\delta_t:[0]\to[1]$.

A morphism of the form $\delta_t\circ\sigma:[1]\to[1]$ is called a \emph{reflexor}. These morphisms correspond to the reflexivity paths $\refl a$ of identity types in Martin-L\"of type theory. There will be other morphisms $[1]\to[1]$ as well.

Let $\boxdot_1$ be the induced subcategory of $\boxdot$ on objects $[0]$ and $[1]$. We assume that all morphisms of $\boxdot$ are generated by morphisms of $\boxdot_1$ under composition and tensor product.

Higher face and degeneracy maps are defined via the monoidal structure. Define the endofunctor 
\begin{align}
& H^n_m:\boxdot\to\boxdot & H^n_m[d] &= [n+d] & H^n_mf &= \ident{m}\otimes f\otimes\ident{n-m}:[n+d]\to[n+e]\label{eq:Hdef}
\end{align}
for $0\le m\le n$ and $f:[d]\to[e]$. Note that $H^0_0$ is just the identity functor. The higher face and degeneracy maps are
\begin{align*}
& H^{n-1}_m\delta_t:[n-1]\to[n] && H^{n-1}_m\sigma:[n]\to[n-1].
\end{align*}

These maps commute modulo appropriate reindexing:
\begin{lemma}
\label{lem:commute}
If $f:[d]\to[e]$, $g:[i]\to[j]$ and $0\le k\le m\le n$, then
\begin{align*}
H^{n+j}_k f\circ H^{n+d}_{m+d} g = H^{n+e}_{m+e} g\circ H^{n+i}_k f:[n+d+i]\to[n+e+j].
\end{align*}
\end{lemma}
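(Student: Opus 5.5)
The plan is to unfold both sides using the definition \eqref{eq:Hdef} and reduce the claim to the interchange law of the strict monoidal structure. By definition,
\begin{align*}
H^{n+j}_k f &= \ident{k}\otimes f\otimes\ident{n+j-k}, &
H^{n+d}_{m+d} g &= \ident{m+d}\otimes g\otimes\ident{n-m},\\
H^{n+e}_{m+e} g &= \ident{m+e}\otimes g\otimes\ident{n-m}, &
H^{n+i}_k f &= \ident{k}\otimes f\otimes\ident{n+i-k}.
\end{align*}
First I check the domains and codomains: the left composite goes $[n+d+i]\to[n+d+j]\to[n+e+j]$ and the right one $[n+d+i]\to[n+e+i]\to[n+e+j]$, so both sides are well typed with the stated type.

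Next, since the tensor is strict and $[a]\otimes[b]=[a+b]$, we have $\ident{a}\otimes\ident{b}=\ident{a+b}$. Because $k\le m$, each identity factor can be split so that all four maps have the common shape $\ident{k}\otimes(-)\otimes\ident{m-k}\otimes(-)\otimes\ident{n-m}$:
\begin{align*}
H^{n+j}_k f &= \ident{k}\otimes f\otimes\ident{m-k}\otimes\ident{j}\otimes\ident{n-m},\\
H^{n+d}_{m+d} g &= \ident{k}\otimes\ident{d}\otimes\ident{m-k}\otimes g\otimes\ident{n-m},
\end{align*}
and similarly for the right-hand pair. Functoriality of $\otimes$ (the interchange law $(a\otimes b)\circ(c\otimes d)=(a\circ c)\otimes(b\circ d)$) applied slotwise then shows that both composites equal $\ident{k}\otimes f\otimes\ident{m-k}\otimes g\otimes\ident{n-m}$, since $f\circ\ident{d}=f=\ident{e}\circ f$ and likewise for $g$.

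The only delicate point is the bookkeeping: confirming that the identity splittings are legitimate and land on matching objects. That requires $m-k\ge 0$ and $n-m\ge 0$, which hold by the hypothesis $0\le k\le m\le n$. There is no genuine obstacle beyond this index check, because strict associativity of $\otimes$ lets us regroup the five tensor factors freely.
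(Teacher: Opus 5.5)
Your proposal is correct and follows the paper's proof essentially verbatim. Like the paper, you split the identity factors into the common five-slot shape $\ident{k}\otimes(-)\otimes\ident{m-k}\otimes(-)\otimes\ident{n-m}$ and then apply the interchange law, so that both composites reduce to $\ident{k}\otimes f\otimes\ident{m-k}\otimes g\otimes\ident{n-m}$.
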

\begin{proof}
This is a consequence of the monoidal structure.
\begin{align*}
& H^{n+j}_k f\circ H^{n+d}_{m+d} g\\
&= (1_{[k]}\otimes f\otimes 1_{[n+j-k]})\circ(1_{[m+d]}\otimes g\otimes 1_{[n-m]})\\
&= (1_{[k]}\otimes f\otimes 1_{[m-k]}\otimes 1_{[j]}\otimes 1_{[n-m]})\circ(1_{[k]}\otimes 1_{[d]}\otimes 1_{[m-k]}\otimes g\otimes 1_{[n-m]})\\
&= 1_{[k]}\otimes f\otimes 1_{[m-k]}\otimes g\otimes 1_{[n-m]}\\
&= (1_{[k]}\otimes 1_{[e]}\otimes 1_{[m-k]}\otimes g\otimes 1_{[n-m]})\circ(1_{[k]}\otimes f\otimes 1_{[m-k]}\otimes 1_{[i]}\otimes 1_{[n-m]})\\
&= (1_{[m+e]}\otimes g\otimes 1_{[n-m]})\circ(1_{[k]}\otimes f\otimes 1_{[n+i-k]})\\
&= H^{n+e}_{m+e} g\circ H^{n+i}_k f.
\extagqed
\end{align*}
\end{proof} 
Thus for $0\le k\le m\le n$,
%\begin{align*}
$H^{n+1}_k \delta_s\circ H^{n}_{m} \delta_t = H^{n+1}_{m+1} \delta_t\circ H^{n}_k \delta_s:[n]\to[n+2]
$. %\end{align*}
Also, every $H^n_m\sigma$ is a split epi and every $H^n_m\delta_t$ a split mono:
\begin{align*}
H^n_m\sigma\circ H^n_m\delta_t &= H^n_m(\sigma\circ\delta_t) = H^n_m\ident 0 = 1_{[n]}.
\end{align*}

\subsection{Boundary preservation}
\label{sec:bdrypres}

We wish to limit models to those satisfying the globular axioms, as appropriate for higher identity types. A presheaf $F$ on $\boxdot$ is said to \emph{preserve boundaries} if it satisfies the following
\paragraph{Boundary preservation condition:}
For all $m\ge 0$, all $b\in\{0,1\}$, and all $s,t$,
\begin{align}
F(\delta_s\otimes\delta_b\otimes\ident m) = F(\delta_t\otimes\delta_b\otimes\ident m).\label{eq:bdrypres}
\end{align}

\begin{theorem}
\label{thm:bdrypres}
If $F$ preserves boundaries, then for all $h:[k]\to[m]$, $\seq s1n$, $\seq t1n$ and $b\in\{0,1\}$,
\begin{align*}
F(\delta_{s_1}\otimes\cdots\otimes\delta_{s_n}\otimes\delta_b\otimes h) = F(\delta_{t_1}\otimes\cdots\otimes\delta_{t_n}\otimes\delta_b\otimes h).
\end{align*}
\end{theorem}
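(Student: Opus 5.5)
The plan is to reduce the general statement to the boundary preservation condition \eqref{eq:bdrypres} in two moves. First we factor off the arbitrary morphism $h$ using functoriality and the monoidal structure. Then we change the parameters $s_i$ one at a time, starting with the one adjacent to $\delta_b$.

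\emph{Factoring off $h$.} By bifunctoriality of $\otimes$ we write
\begin{align*}
\delta_{s_1}\otimes\cdots\otimes\delta_{s_n}\otimes\delta_b\otimes h
= (\delta_{s_1}\otimes\cdots\otimes\delta_{s_n}\otimes\delta_b\otimes\ident m)\circ(\ident 0\otimes h),
\end{align*}
where $\ident 0\otimes h = h:[k]\to[m]$ since $[0]$ is the strict unit. Because $F$ is contravariant,
\begin{align*}
F(\delta_{s_1}\otimes\cdots\otimes\delta_b\otimes h) = F(h)\circ F(\delta_{s_1}\otimes\cdots\otimes\delta_b\otimes\ident m).
\end{align*}
It therefore suffices to prove the claim for $h=\ident m$, with $m$ arbitrary.

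\emph{Changing the parameters.} We change the $s_i$ to the $t_i$ one index at a time, proceeding from $i=n$ down to $i=1$. It is enough to show that replacing a single $\delta_{s_i}$ by $\delta_{t_i}$ does not change the image under $F$, provided that every factor to the right of position $i$ has already been fixed. To isolate position $i$, we write
\begin{align*}
\delta_{u_1}\otimes\cdots\otimes\delta_{u_n}\otimes\delta_b\otimes\ident m
=(\delta_{u_1}\otimes\cdots\otimes\delta_{u_{i-1}}\otimes\ident{1+(n-i)+1+m})\circ(\delta_{u_i}\otimes\delta_{u_{i+1}}\otimes\cdots\otimes\delta_{u_n}\otimes\delta_b\otimes\ident m),
\end{align*}
using $[0]\otimes[j]=[j]$. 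Applying $F$ reduces the problem to showing that
$F(\delta_s\otimes\delta_{u_{i+1}}\otimes\cdots\otimes\delta_{u_n}\otimes\delta_b\otimes\ident m)$ is independent of $s$. For $i=n$ this is exactly \eqref{eq:bdrypres}.

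\emph{The main obstacle: $i<n$.} Here the factor next to $\delta_s$ is some $\delta_{u_{i+1}}$, which need not be $\delta_b$. I would handle this by reordering tensor factors via the interchange law, in the spirit of Lemma~\ref{lem:commute}. Using $\delta_u=\delta_u\otimes\ident0$, the morphism factors as
\begin{align*}
(\delta_s\otimes\ident{n-i}\otimes\ident{1+m})\circ(\delta_{u_{i+1}}\otimes\cdots\otimes\delta_{u_n}\otimes\delta_b\otimes\ident m).
\end{align*}
In this form the dependence on $s$ sits in $\delta_s\otimes\ident{j}$ for a general $j$, and \eqref{eq:bdrypres} only constrains $\delta_s\otimes\delta_b\otimes\ident m$. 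Factoring in the other order instead gives
\begin{align*}
(\ident1\otimes\delta_{u_{i+1}}\otimes\cdots\otimes\delta_{u_n}\otimes\ident{1+m})\circ(\delta_s\otimes\delta_b\otimes\ident m),
\end{align*}
where we first apply $\delta_s\otimes\delta_b\otimes\ident m:[m]\to[m+2]$. This is wrong, however, because the interleaved zero-dimensional factors $\delta_{u_{i+1}},\ldots$ sit between $\delta_s$ and $\delta_b$.

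The correct move is to observe that every $\delta_u$ maps $[0]\to[1]$, so all the $\delta$'s act on zero-dimensional tensor slots. Hence, by strictness, the morphism equals
\begin{align*}
(\ident1\otimes\delta_{u_{i+1}}\otimes\cdots\otimes\delta_{u_n}\otimes\ident1\otimes\ident m)\circ(\delta_s\otimes\delta_b\otimes\ident m),
\end{align*}
by interchange, since the middle factors have domain $[0]$ and can be inserted after the fact. Applying $F$ then factors out $F(\delta_s\otimes\delta_b\otimes\ident m)$, which is independent of $s$ by \eqref{eq:bdrypres}.

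I expect the main care to lie in this bookkeeping of the interchange law with zero-dimensional domains. Once it is written out, the induction closes and, combined with the reduction to $h=\ident m$, proves the theorem.
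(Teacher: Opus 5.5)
Your argument is correct, but it takes a different route from the paper's. The paper never changes $s_i$ to $t_i$ directly. It splits off the contiguous block $\delta_{s_n}\otimes\delta_b\otimes\ident m$ and uses \eqref{eq:bdrypres} to replace $\delta_{s_n}$ by $\delta_b$. It then absorbs the new $\delta_b\otimes h$ into $h$, so the next parameter $\delta_{s_{n-1}}$ is again adjacent to a $\delta_b$. By induction both sides equal $F(\delta_b\otimes\cdots\otimes\delta_b\otimes h)$, and transitivity finishes the proof. This is why the paper keeps a general $h$ in the statement: it is the induction variable.

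You instead strip $h$ off first and then show that each single parameter is irrelevant, via
\[
\delta_s\otimes D\otimes\delta_b\otimes\ident m=(\ident 1\otimes D\otimes\ident 1\otimes\ident m)\circ(\delta_s\otimes\ident 0\otimes\delta_b\otimes\ident m),\qquad D=\delta_{u_{i+1}}\otimes\cdots\otimes\delta_{u_n}:[0]\to[n-i],
\]
which holds by interchange and strictness of the unit. Your route gives slightly more: parameters can be changed independently and in any order. In particular, your proviso that the positions to the right of $i$ are already fixed is unnecessary. The paper's route never has to insert a $[0]$-slot between $\delta_s$ and $\delta_b$; the price is passing through the intermediate normal form $\delta_b^{\otimes(n+1)}\otimes h$.

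One correction to your write-up. The factorization you label ``wrong'' is literally the same morphism as the one you then call correct, since $\ident{1+m}=\ident 1\otimes\ident m$. The interleaved $\delta_{u_j}$ are no obstruction, precisely because their domain is $[0]$. Delete that sentence and write out the padded interchange step above explicitly.
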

\begin{proof}
For any $h:[k]\to[m]$, $\seq s1n$, and $b\in\{0,1\}$,
\begin{align*}
& \delta_{s_1}\otimes\cdots\otimes\delta_{s_n}\otimes\delta_b\otimes h\\
&= ((\delta_{s_1}\otimes\cdots\otimes\delta_{s_{n-1}})\circ\ident 0\circ\ident 0)\otimes(\ident{2}\circ(\delta_{s_n}\otimes\delta_b)\circ\ident 0)\otimes(\ident m\circ \ident m\circ h)\\
&= (\delta_{s_1}\otimes\cdots\otimes\delta_{s_{n-1}}\otimes\ident{2}\otimes\ident{m})\circ(\ident 0\otimes\delta_{s_n}\otimes\delta_b\otimes\ident m)\circ(\ident 0 \otimes \ident 0 \otimes h)\\
&= (\delta_{s_1}\otimes\cdots\otimes\delta_{s_{n-1}}\otimes\ident{m+2})\circ(\delta_{s_n}\otimes\delta_b\otimes\ident m)\circ h,
\end{align*}
therefore by \eqref{eq:bdrypres},
\begin{align*}
& F(\delta_{s_1}\otimes\cdots\otimes\delta_{s_n}\otimes\delta_b\otimes h)\\
&= F((\delta_{s_1}\otimes\cdots\otimes\delta_{s_{n-1}}\otimes\ident{m+2})\circ(\delta_{s_n}\otimes\delta_b\otimes\ident m)\circ h)\\
&= F(h) \circ F(\delta_{s_n}\otimes\delta_b\otimes\ident m) \circ F(\delta_{s_1}\otimes\cdots\otimes\delta_{s_{n-1}}\otimes\ident{m+2})\\
&= F(h) \circ F(\delta_{b}\otimes\delta_b\otimes\ident m) \circ F(\delta_{s_1}\otimes\cdots\otimes\delta_{s_{n-1}}\otimes\ident{m+2})\\
&= F(\delta_{s_1}\otimes\cdots\otimes\delta_{s_{n-1}}\otimes\delta_b\otimes\delta_b\otimes h).
\end{align*}
It follows by induction that
\begin{align*}
F(\delta_{s_1}\otimes\cdots\otimes\delta_{s_n}\otimes\delta_b\otimes h) = 
F(\underbrace{\delta_b\otimes\cdots\otimes\delta_b}_{n+1}\otimes h),
\end{align*}
and the result follows from transitivity.
\exqed
\end{proof}

We will show that the globular axioms are a consequence.

\subsection{Composition}
\label{sec:composition}

Traditionally, composites are produced by Kan composition structure on each
type, with the filling operation derived from it using a \emph{connection}
\cite{Cohen16,OrtonPitts18}. Connections are extra degeneracies in the sense
of Brown and Higgins \cite{BrownHiggins81b}, and by themselves they do not
produce composites.

Here we take an alternative approach, which internalizes composition in the structure of our base category. We postulate
\paragraph{Composition axioms:}
\begin{itemize}
\item
The object $[1]$ is a pushout of $\delta_1:[0]\to[1]$ and $\delta_0:[0]\to[1]$ completed by morphisms $L:[1]\to[1]$ and $R:[1]\to[1]$ with $L\circ\delta_1=R\circ\delta_0$.
\item
$L\circ\delta_0=\delta_0$ and $R\circ\delta_1=\delta_1$.
\end{itemize}
\begin{equation}
\begin{tikzpicture}[->, >=stealth', node distance=12mm, auto, inner sep=1.5pt]
\small
  \node (N) {$[0]$};
  \node (W) [below left of=N] {$[1]$};
  \node (E) [below right of=N] {$[1]$};
  \node (S) [below right of=W] {$[1]$};
  \path (N) edge node[swap] {$\delta_1$} (W);
  \path (N) edge node {$\delta_0$} (E);
  \path (W) edge node[swap] {$L$} (S);
  \path (E) edge node {$R$} (S);
\end{tikzpicture}
\label{eq:pushout}
\end{equation}

We assume that the image of \eqref{eq:pushout} under $H^n_m$ is a pushout as well. This structure will enable us to define composition natively in presheaf models without connections or Kan filling. 
 
Intuitively, we can view \eqref{eq:pushout} as gluing two intervals end-to-end.
Let us write $\delta_{1/2}:[0]\to[1]$ for the morphism $L\circ\delta_1=R\circ\delta_0$. Since $1_{[1]}\circ\delta_1=\delta_1=\delta_1\circ\sigma\circ\delta_0$, by the universality property of pushouts, there is a unique map $L':[1]\to[1]$ such that $L'\circ L=1_{[1]}$ and $L'\circ R=\delta_1\circ\sigma$. Similarly, since $1_{[1]}\circ\delta_0=\delta_0=\delta_0\circ\sigma\circ\delta_1$, there is a unique map $R':[1]\to[1]$ such that $R'\circ R=1_{[1]}$ and $R'\circ L=\delta_0\circ\sigma$. 
\begin{equation}
\begin{tikzpicture}[->, >=stealth', node distance=11mm, inner sep=1.5pt,
                    bend angle=20, auto, baseline=(current bounding box.center)]
\small
  \node (N) {$[0]$};
  \node (W) [below left of=N] {$[1]$};
  \node (E) [below right of=N] {$[1]$};
  \node (S) [below right of=W] {$[1]$};
  \path (N) edge node[swap] {$\delta_1$} (W);
  \path (N) edge node {$\delta_0$} (E);
  \path (W) edge node[swap] {$L$} (S);
  \path (E) edge node {$R$} (S);
  \node (SS) [below of=S] {$[1]$};
  \path (S) edge node[pos=.4] {$L'$} (SS);
  \path (W) edge[bend right] node[swap] {$1_{[1]}$} (SS);
  \path (E) edge[bend left] node {$\delta_1\circ\sigma$} (SS);
\end{tikzpicture}
\hspace{1cm}
\begin{tikzpicture}[->, >=stealth', node distance=11mm, inner sep=1.5pt,
                    bend angle=20, auto, baseline=(current bounding box.center)]
\small
  \node (N) {$[0]$};
  \node (W) [below left of=N] {$[1]$};
  \node (E) [below right of=N] {$[1]$};
  \node (S) [below right of=W] {$[1]$};
  \path (N) edge node[swap] {$\delta_1$} (W);
  \path (N) edge node {$\delta_0$} (E);
  \path (W) edge node[swap] {$L$} (S);
  \path (E) edge node {$R$} (S);
  \node (SS) [below of=S] {$[1]$};
  \path (S) edge node[pos=.4] {$R'$} (SS);
  \path (W) edge[bend right] node[swap] {$\delta_0\circ\sigma$} (SS);
  \path (E) edge[bend left] node {$1_{[1]}$} (SS);
\end{tikzpicture}
\label{eq:pushout2}
\end{equation}
It follows that $L$ and $R$ are split monos and $L'$ and $R'$ are split epis, and that $L$ and $R$ are jointly epic. This implies that $\delta_{1/2}$ cannot be $\delta_0$ or $\delta_1$, otherwise we would have $\delta_0=\delta_1$.
In fact, one can show that $\Hom([0],[1])$ must contain countably many distinct morphisms $\delta_t$, one for each dyadic rational $t\in \set{a2^{-n}}{n\ge 0,\ 0\le a\le 2^n}$, obtained by inductively defining $\delta_{t/2}=L\circ\delta_t$ and $\delta_{(t+1)/2}=R\circ\delta_t$. This set comprises $\Hom([0],[1])$ in the minimal base category.

Concretely, it is helpful to think of the object $[1]$ as the unit real interval $I$, the unit rational interval $I\cap\rationals$, or the set $\set{a2^{-n}}{n\ge 0,\ 0\le a\le 2^n}$ itself, and $L(t) = t/2$ and $R(t)=(t+1)/2$, in which case
\begin{align*}
L'(t) &= \begin{cases}
2t, & \text{if $0\le t\le 1/2$},\\
1, & \text{if $1/2\le t\le 1$}
\end{cases}
&
R'(t) &= \begin{cases}
0, & \text{if $0\le t\le 1/2$},\\
2t-1, & \text{if $1/2\le t\le 1$}.
\end{cases}
\end{align*}

The pushout structure entails the existence of more elements of $\Hom([1],[1])$ than just $L, R, L', R'$. For each pair of maps $U,V:[1]\to[1]$ such that $U\circ\delta_1=V\circ\delta_0$, there exists a unique mediating morphism $U\med V:[1]\to[1]$ such that $U=(U\med V)\circ L$ and $V=(U\med V)\circ R$. Thus $L' = \ident 1\med(\delta_1\circ\sigma)$ and $R' = (\delta_0\circ\sigma)\med\ident 1$. A  characterization of $\boxdot_1$ is given in Appendix \ref{apx:boxdot1}.

\subsection{Soundness of the axioms}

In this section we show that $\theory$ is sound for presheaf models over our base category $\boxdot$ satisfying the boundary preservation condition \eqref{eq:bdrypres}. In particular, these structures are globular, therefore model higher identity types.

Let $F:\boxdot\op\to\Set$ be a pullback-preserving presheaf satisfying the boundary preservation condition \eqref{eq:bdrypres}. The canonical example is
\begin{align*}
F[n] &= \{\text{globular continuous maps $\alpha:I^n\to X$}\}
\end{align*}
as described in \S\ref{sec:homotopymodels}, where $I=[0,1]$ is the real unit interval and $X$ is an arbitrary topological space. The assumption of preservation of pullbacks is needed for the composition structure of \S\ref{sec:composition} to carry over to \Set.

Let $H^n_m$ be the functors defined in \eqref{eq:Hdef}. Define polymorphic source and target maps for $n\ge 1$:
\begin{align*}
\src &= FH^{n-1}_0\delta_0:F[n]\to F[n-1] & \tgt &= FH^{n-1}_0\delta_1:F[n]\to F[n-1].
\end{align*}
Define a polymorphic identity map for $n\ge 0$:
\begin{align*}
\id = FH^{n}_0\sigma:F[n]\to F[n+1].
\end{align*}

There are also partial composition operators $\crct n{n-m}:F[n]\times F[n]\to F[n]$ for $1\le m\le n$ modeling the composition of $n$-cells along their $m$-th $(n-1)$-dimensional faces. We claim that if $1\le m\le n$, $\alpha,\beta\in F[n]$, and $\src^m\beta=\tgt^m\alpha$, then $\beta\crct n{n-m}\alpha$ exists. Since the image of \eqref{eq:pushout} under $H^{n-1}_{m-1}$ is a pushout in $\boxdot$, and since $F$ is assumed to preserve pullbacks, the image of that diagram under $F$ is a pullback in \Set, thus in one-to-one correspondence with
\begin{align*}
\set{(\alpha,\beta)}{FH^{n-1}_{m-1}\delta_1(\alpha) = FH^{n-1}_{m-1}\delta_0(\beta)} = \set{(\alpha,\beta)}{\src^m\beta=\tgt^m\alpha}.
\end{align*}
If $\gamma$ is the unique element of $F[n]$ corresponding to the pair $(\alpha,\beta)$ under this correspondence, then
\begin{align*}
\alpha &= FH^{n-1}_{m-1} L(\gamma) & \beta &= FH^{n-1}_{m-1} R(\gamma)
\end{align*}
and we take $\beta\crct n{n-m}\alpha=\gamma$.

\begin{theorem}
\label{thm:soundness}
The theory $\theory$ is sound for presheaf models over $\boxdot$.
\end{theorem}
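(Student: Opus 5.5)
The plan is to check each axiom of Fig.~\ref{fig:axioms} in turn. Each side will be rewritten as $F$ applied to a morphism of $\boxdot$, or as the element picked out by the pullback correspondence for $\crct n{n-m}$. The tools are functoriality of $F$ (contravariant, so $F(g\circ f)=Ff\circ Fg$), the interchange law of Lemma~\ref{lem:commute}, the equation $\sigma\circ\delta_b=\ident 0$, boundary preservation in the form of Theorem~\ref{thm:bdrypres}, and the uniqueness half of the pullback correspondence. The uniqueness gives a working principle: to show that some $\gamma\in F[n]$ equals $\beta\crct n{n-m}\alpha$, it suffices to show $FH^{n-1}_{m-1}L(\gamma)=\alpha$ and $FH^{n-1}_{m-1}R(\gamma)=\beta$. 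Every axiom involving composition will be reduced to computing these two projections.

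The axioms without composition come first. For (vi) and (vii), $\src\id u=F(H^{n}_0\sigma\circ H^{n}_0\delta_b)(u)=F(H^n_0\ident 0)(u)=u$. For the globular rules (viii) and (ix), $\src\tgt u=F(\delta_1\otimes\ident{n-1}\circ\ldots)$. Written out, the composite morphism $[n-2]\to[n]$ is $\delta_1\otimes\delta_0\otimes\ident{n-2}$ for $\src\tgt$ and $\delta_0\otimes\delta_0\otimes\ident{n-2}$ for $\src\src$; the ordering of the tensor factors is fixed by Lemma~\ref{lem:commute}. Theorem~\ref{thm:bdrypres} (already \eqref{eq:bdrypres}) with $b=0$ then identifies the two, and (ix) is symmetric with $b=1$.

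For (iii) and (iv), let $\gamma=u\crct n{n-1}v$. Then $\src\gamma=F(\delta_0\otimes\ident{n-1})(\gamma)$. Since $\delta_0=L\circ\delta_0$, this factors as $F(\delta_0\otimes\ident{n-1})\,F(L\otimes\ident{n-1})(\gamma)$, which is the source of the $L$-component. Composing via $R$ with $R\circ\delta_1=\delta_1$ gives $\tgt$. I must track which of $u$ and $v$ the $L$-side is under the paper's convention ($\alpha=FH L(\gamma)$ is the right factor $v$), and this makes (iii) and (iv) come out as stated.

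For (i) and (ii), where $k\ge 2$, apply $F(\delta_0\otimes\ident{n-1})$ to $\gamma$. The face map acts on coordinate $1$ and the pushout on coordinate $k$, so Lemma~\ref{lem:commute} commutes $\delta_0\otimes\ident{n-1}$ past $H^{n-1}_{k-1}L$ and $H^{n-1}_{k-1}R$. Hence $\src\gamma$ has $L$- and $R$-projections $\src v$ and $\src u$, and uniqueness gives (i). The same argument handles (v), since $\id$ is $F$ of $\sigma\otimes\ident n$, which commutes past $H L$ and $H R$ with the index shifted by one.

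The exchange law (xi) is handled the same way. Both sides are the unique element whose four "quadrant" projections, $F$ of $L$ or $R$ at coordinate $k$ tensored with $L$ or $R$ at coordinate $m$, are $u,v,x,y$ respectively. Uniqueness comes from applying the pullback correspondence twice: take the coordinate-$k$ projections of each side and recognize them as coordinate-$m$ composites via the commutation already used for (i). Lemma~\ref{lem:commute} makes $H L$ at coordinate $k$ and $H R$ at coordinate $m$ commute.

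The step I expect to be the real obstacle is axiom (x), $\id u\crct n{n-1}\id u=\id u$. By uniqueness I need $F(L\otimes\ident{n-1})(\id u)=\id u$, that is, $F((\sigma\circ L)\otimes\ident{n-1})=F(\sigma\otimes\ident{n-1})$. This holds because $[0]$ is terminal, so $\sigma\circ L=\sigma=\sigma\circ R$. The same holds for $R$, so the fact is easy once noticed. The genuine subtleties throughout are the bookkeeping of tensor positions and index shifts in $H^n_m$, and confirming that the meaningfulness side conditions ($\src^m\beta=\tgt^m\alpha$) hold for the right-hand composites. I would establish these side conditions first in each case, using the face computations above, before invoking the correspondence.
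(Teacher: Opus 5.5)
Your proposal is correct and follows the paper's proof essentially step for step. Each axiom is reduced using functoriality of $F$, Lemma~\ref{lem:commute}, and the identities $\sigma\circ\delta_b=\ident 0$, $L\circ\delta_0=\delta_0$, $R\circ\delta_1=\delta_1$ and $\sigma\circ L=\sigma\circ R=\sigma$, together with boundary preservation for the globular rules and the uniqueness half of the pullback correspondence (applied twice for exchange). Your quadrant formulation of (xi) is the paper's argument, and it also shows that the right-hand composites exist whenever the left-hand ones do, since the projections of any single element automatically form a compatible pair.
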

\begin{proof}
Axioms (i) and (ii) are
\begin{align*}
\src(\beta\crct n{n-k}\alpha) &= \src \beta\crct{n-1}{n-k}\src\alpha
& \tgt(\beta\crct n{n-k}\alpha) &= \tgt \beta\crct{n-1}{n-k}\tgt\alpha
\end{align*}
for $\alpha,\beta\in F[n]$ and $k > 1$. For (i), let us first check that $\beta\crct n{n-k}\alpha$ is defined iff $\src \beta\crct{n-1}{n-k}\src\alpha$ is defined. The conditions are $\src^k\beta=\tgt^k\alpha$ and $\src^{k-1}(\src\beta)=\tgt^{k-1}(\tgt\alpha)$, respectively, which are clearly equivalent. Now $\beta\crct n{n-k}\alpha$ is the unique $\gamma\in F[n]$ such that
\begin{align*}
\alpha &= FH^{n-1}_{k-1} L(\gamma) & \beta &= FH^{n-1}_{k-1} R(\gamma), 
\end{align*}
and $\src\beta\crct{n-1}{n-k}\src\alpha$ is the unique $\gamma'\in F[n-1]$ such that
\begin{align*}
\src\alpha &= FH^{n-2}_{k-2} L(\gamma') & \src\beta &= FH^{n-2}_{k-2} R(\gamma'), 
\end{align*}
and we wish to show that $\src\gamma=\gamma'$. Since $\gamma'$ is unique, it suffices to show
\begin{align*}
\src\alpha &= FH^{n-2}_{k-2} L(\src\gamma) & \src\beta &= FH^{n-2}_{k-2} R(\src\gamma) 
\end{align*}
or in other words
\begin{align*}
FH^{n-1}_0\delta_0(FH^{n-1}_{k-1} L(\gamma)) &= FH^{n-2}_{k-2} L(FH^{n-1}_0\delta_0(\gamma))\\
FH^{n-1}_0\delta_0(FH^{n-1}_{k-1} R(\gamma)) &= FH^{n-2}_{k-2} R(FH^{n-1}_0\delta_0(\gamma)). 
\end{align*}
These equations follow from Lemma \ref{lem:commute}. For the left-hand equation,
\begin{align*}
FH^{n-1}_0\delta_0 \circ FH^{n-1}_{k-1} L 
&= F(H^{n-1}_{k-1} L \circ H^{n-1}_0\delta_0)\\
&= F(H^{n-1}_0\delta_0 \circ H^{n-2}_{k-2} L)
= FH^{n-2}_{k-2} L \circ FH^{n-1}_0\delta_0.
\end{align*}
and similarly for the right-hand equation. The argument for (ii) is symmetric.

Axioms (iii) and (iv) are
\begin{align*}
\src(\beta\crct n{n-1}\alpha) &= \src\alpha
& \tgt(\beta\crct n{n-1}\alpha) = \tgt\beta
\end{align*}
for $\alpha,\beta\in[n]$. For (iii), $\beta\crct n{n-1}\alpha$ is the unique $\gamma$ such that
\begin{align*}
\alpha &= FH^{n-1}_0 L(\gamma) & \beta &= FH^{n-1}_0 R(\gamma).
\end{align*}
We wish to show $\src\gamma = \src\alpha$ and $\tgt\gamma = \tgt\beta$, or in other words,
\begin{align*}
& FH^{n-1}_0\delta_0(\gamma) = FH^{n-1}_0\delta_0(FH^{n-1}_0 L(\gamma))\\
& FH^{n-1}_0\delta_1(\gamma) = FH^{n-1}_0\delta_1(FH^{n-1}_0 R(\gamma))
\end{align*}
But
\begin{align*}
FH^{n-1}_0\delta_0 \circ FH^{n-1}_0 L
&= F(H^{n-1}_0 L \circ H^{n-1}_0\delta_0)
= FH^{n-1}_0(L\circ\delta_0)
= FH^{n-1}_0\delta_0\\
FH^{n-1}_0\delta_1 \circ FH^{n-1}_0 R
&= F(H^{n-1}_0 R \circ H^{n-1}_0\delta_1)
= FH^{n-1}_0(R\circ\delta_1)
= FH^{n-1}_0\delta_1.
\end{align*}
The argument for (iv) is symmetric.

Axiom (v) is $\id(\beta\crct n{n-m}\alpha) = \id\beta\crct{n+1}{n-m}\id\alpha$ for $1 \le m \le n$ and $\alpha,\beta\in[n]$. Now $\beta\crct n{n-m}\alpha$ is the unique $\gamma\in F[n]$ such that
\begin{align*}
\alpha &= FH^{n-1}_{m-1} L(\gamma) & \beta &= FH^{n-1}_{m-1} R(\gamma), 
\end{align*}
and $\id\beta\crct{n+1}{n-m}\id\alpha$ is the unique $\gamma'\in F[n+1]$ such that
\begin{align*}
\id\alpha &= FH^{n}_{m} L(\gamma') & \id\beta &= FH^{n}_{m} R(\gamma').
\end{align*}
To show that $\id\gamma=\gamma'$, it suffices to show that
\begin{align*}
\id\alpha &= FH^{n}_{m} L(\id\gamma) & \id\beta &= FH^{n}_{m} R(\id\gamma).
\end{align*}
or in other words
\begin{align*}
FH^{n}_0\sigma(FH^{n-1}_{m-1} L(\gamma)) &= FH^{n}_{m} L(FH^{n}_0\sigma(\gamma))\\
FH^{n}_0\sigma(FH^{n-1}_{m-1} R(\gamma)) &= FH^{n}_{m} R(FH^{n}_0\sigma(\gamma)).
\end{align*}
Again, these follow from Lemma \ref{lem:commute}. For the left-hand equation, since $H^{n}_0 \sigma\circ H^{n}_{m} L = H^{n-1}_{m-1} L\circ H^{n}_0\sigma:[n+1]\to[n]$,
\begin{align*}
FH^{n}_0\sigma \circ FH^{n-1}_{m-1} L
&= F(H^{n-1}_{m-1} L \circ H^{n}_0\sigma)
= F(H^{n}_0\sigma \circ H^{n}_{m} L)
= FH^{n}_{m} L \circ FH^{n}_0\sigma.
\end{align*}
and similarly for the right-hand equation.

Axioms (vi) and (vii) are $\src\id\alpha = \alpha$ and $\tgt\id\alpha = \alpha$, respectively. For (vi),
\begin{align*}
\src\id\alpha = FH^n_0\delta_0(FH^{n}_0\sigma(\alpha))
= FH^{n}_0(\sigma\circ\delta_0)(\alpha)
= FH^{n}_0\ident 0(\alpha)
= 1_{F[n]}(\alpha)
= \alpha.
\end{align*}
The argument for (vii) is symmetric.

Axioms (viii) and (ix) are the globularity axioms $\src\tgt\alpha = \src\src\alpha$ and $\tgt\src \alpha = \tgt\tgt \alpha$, respectively. For (viii), by \eqref{eq:bdrypres}, $F(\delta_t\otimes\delta_b\otimes\ident m) = F(\delta_s\otimes\delta_b\otimes\ident m)$ for $b\in\{0,1\}$. Then
\begin{align*}
\src\tgt\alpha
&= FH^{n-2}_0\delta_0(FH^{n-1}_0\delta_1(\alpha))
= F(H^{n-1}_0\delta_1 \circ H^{n-2}_0\delta_0)(\alpha)\\
&= F(\delta_1\otimes\delta_0\otimes\ident{n-2})(\alpha)\\
&= F(\delta_0\otimes\delta_0\otimes\ident{n-2})(\alpha)\\
&= F(H^{n-1}_0\delta_0 \circ H^{n-2}_0\delta_0)(\alpha)
= \src\src \alpha.
\end{align*}
The argument for (ix) is similar.

Axiom (x) is $\id\alpha\crct n{n-1}\id\alpha = \id\alpha$ for $\id\alpha\in[n]$. The composition $\id\alpha\crct n{n-1}\id\alpha$ is defined, since $\src\id\alpha = \tgt\id\alpha = \alpha$, and the composition is the unique $\gamma$ such that
\begin{align*}
\id\alpha &= FH^{n-1}_0 L(\gamma) = FH^{n-1}_0 R(\gamma).
\end{align*}
We must show that $\gamma=\id\alpha$. Since $\gamma$ is unique, it suffices to show
\begin{align*}
\id\alpha &= FH^{n-1}_0 L(\id\alpha) = FH^{n-1}_0 R(\id\alpha),
\end{align*}
or in other words,
\begin{align*}
FH^{n-1}_0\sigma(\alpha) &= FH^{n-1}_0 L(FH^{n-1}_0\sigma(\alpha)) = FH^{n-1}_0 R(FH^{n-1}_0\sigma(\alpha)),
\end{align*}
This is true because
\begin{align*}
FH^{n-1}_0 L \circ FH^{n-1}_0\sigma
&= F(H^{n-1}_0\sigma \circ H^{n-1}_0 L)
= FH^{n-1}_0(\sigma \circ L)
= FH^{n-1}_0(\sigma)
\end{align*}
and similarly $FH^{n-1}_0 R \circ FH^{n-1}_0\sigma = FH^{n-1}_0(\sigma)$.

Finally, axiom (xi) is the exchange law
\begin{align*}
(\beta \crct n{n-k} \alpha) \crct n{n-m} (\phi \crct n{n-k} \psi) = (\beta \crct n{n-m} \phi) \crct n{n-k} (\alpha \crct n{n-m} \psi),
\end{align*}
where $\alpha,\beta,\phi,\psi\in[n]$ and $1\le k < m\le n$, provided both sides exist. We have
\begin{align*}
& \alpha = FH^{n-1}_{k-1}L(\beta \crct n{n-k} \alpha) = FH^{n-1}_{m-1}R(\alpha \crct n{n-m} \psi)\\
& \beta = FH^{n-1}_{k-1}R(\beta \crct n{n-k} \alpha) = FH^{n-1}_{m-1}R(\beta \crct n{n-m} \phi)\\
& \phi = FH^{n-1}_{k-1}R(\phi \crct n{n-k} \psi) = FH^{n-1}_{m-1}L(\beta \crct n{n-m} \phi)\\
& \psi = FH^{n-1}_{k-1}L(\phi \crct n{n-k} \psi) = FH^{n-1}_{m-1}L(\alpha \crct n{n-m} \psi).
\end{align*}
Moreover, the outermost compositions are the unique $\gamma,\gamma'$ such that
\begin{align*}
\beta \crct n{n-k} \alpha &= FH^{n-1}_{m-1}R(\gamma)
& \beta \crct n{n-m} \phi &= FH^{n-1}_{k-1}R(\gamma')\\
\phi \crct n{n-k} \psi &= FH^{n-1}_{m-1}L(\gamma)
& \alpha \crct n{n-m} \psi &= FH^{n-1}_{k-1}L(\gamma').
\end{align*}
We need to show that $\gamma=\gamma'$. Since $\gamma'$ is unique, we need only show
\begin{align*}
\beta \crct n{n-m} \phi &= FH^{n-1}_{k-1}R(\gamma)
& \alpha \crct n{n-m} \psi &= FH^{n-1}_{k-1}L(\gamma).
\end{align*}
Since $\beta \crct n{n-m} \phi$ and $\alpha \crct n{n-m} \psi$ are the unique elements such that
\begin{align*}
& \beta = FH^{n-1}_{m-1}R(\beta \crct n{n-m} \phi)
&& \alpha = FH^{n-1}_{m-1}R(\alpha \crct n{n-m} \psi)\\
& \phi = FH^{n-1}_{m-1}L(\beta \crct n{n-m} \phi)
&& \psi = FH^{n-1}_{m-1}L(\alpha \crct n{n-m} \psi),
\end{align*}
it suffices to show that
\begin{align*}
& \beta = FH^{n-1}_{m-1}R(FH^{n-1}_{k-1}R(\gamma)) = F(H^{n-1}_{k-1}R \circ H^{n-1}_{m-1}R)(\gamma)\\
& \phi = FH^{n-1}_{m-1}L(FH^{n-1}_{k-1}R(\gamma)) = F(H^{n-1}_{k-1}R \circ H^{n-1}_{m-1}L)(\gamma)\\
& \alpha = FH^{n-1}_{m-1}R(FH^{n-1}_{k-1}L(\gamma)) = F(H^{n-1}_{k-1}L \circ H^{n-1}_{m-1}R)(\gamma)\\
& \psi = FH^{n-1}_{m-1}L(FH^{n-1}_{k-1}L(\gamma)) = F(H^{n-1}_{k-1}L \circ H^{n-1}_{m-1}L)(\gamma).
\end{align*}
But we have
\begin{align*}
\beta \crct n{n-k} \alpha &= FH^{n-1}_{m-1}R(\gamma)
& \phi \crct n{n-k} \psi &= FH^{n-1}_{m-1}L(\gamma).
\end{align*}
and using Lemma \ref{lem:commute},
\begin{align*}
\beta &= FH^{n-1}_{k-1}R(\beta \crct n{n-k} \alpha) = FH^{n-1}_{k-1}R(FH^{n-1}_{m-1}R(\gamma))\\
&= F(H^{n-1}_{m-1}R \circ H^{n-1}_{k-1}R)(\gamma) = F(H^{n-1}_{k-1}R \circ H^{n-1}_{m-1}R)(\gamma)\\
\phi &= FH^{n-1}_{k-1}R(\phi \crct n{n-k} \psi) = FH^{n-1}_{k-1}R(FH^{n-1}_{m-1}L(\gamma))\\
&= F(H^{n-1}_{m-1}L \circ H^{n-1}_{k-1}R)(\gamma) = F(H^{n-1}_{k-1}R \circ H^{n-1}_{m-1}L)(\gamma)\\
\alpha &= FH^{n-1}_{k-1}L(\beta \crct n{n-k} \alpha) = FH^{n-1}_{k-1}L(FH^{n-1}_{m-1}R(\gamma))\\
&= F(H^{n-1}_{m-1}R \circ H^{n-1}_{k-1}L)(\gamma) = F(H^{n-1}_{k-1}L \circ H^{n-1}_{m-1}R)(\gamma)\\
\psi &= FH^{n-1}_{k-1}L(\phi \crct n{n-k} \psi) = FH^{n-1}_{k-1}L(FH^{n-1}_{m-1}L(\gamma))\\
&= F(H^{n-1}_{m-1}L \circ H^{n-1}_{k-1}L)(\gamma) = F(H^{n-1}_{k-1}L \circ H^{n-1}_{m-1}L)(\gamma).
\extagqed
\end{align*}
\end{proof}

\section{Witnesses for the Weak Laws}
\label{sec:connections}

The pushout structure on $\boxdot$ serves two purposes. In \S\ref{sec:presheaf} it made $\theory$ hold on the nose, with no space-filling operation anywhere in the argument. Here we show that the same structure generates the
laws that do \emph{not} hold on the nose. Associativity and the identity laws hold only up to $\simeq$, so each requires a cell of the next dimension to witness it, and in the de Morgan approach such cells are produced by Kan
filling. 
In our setting they are produced instead by the mediating morphisms of the pushout, and the point of this section is that reasoning about them is \emph{equational}: five equations on the morphisms of $\boxdot$ suffice to build
the witnesses and to verify the coherence conditions they must satisfy, including the pentagon. 
Thus, laws that are equations get axioms and a decision procedure, while laws that are not equations get witnesses, built from the same categorical structure.

\subsection{A Calculus for the Morphisms of $\boxdot$}
\label{sec:medcalculus}

Let $I$ be the real unit interval and let $J$ be any subset of $I$ satisfying the following closure properties:
\begin{itemize}
\item
$0,1/2,1\in J$
\item
if $t\in J$ then $2t\bmod 1\in J$
\item
if $t,u,v\in J$ then $tu + (1-t)v\in J$.
\end{itemize}
Equivalently,
\begin{itemize}
\item
$0,1/2,1\in J$
\item
if $s,t\in J$, then $st\in J$
\item
if $t\in J$ then $1-t\in J$
\item
if $s,t\in J$ then $(s+t)\bmod 1\in J$.
\end{itemize}
The smallest set satisfying these properties is $A = \set{a2^{-n}}{n\ge 0,\ 0\le a\le 2^n}$, the set of dyadic rationals in $I$. 
Let us assume that the index set $J$ of $\Hom([0],[1])$ satisfies these properties. This is the set of indices $t$ that we have been using to write $\delta_t$. This gives us a topology to work with, as the inclusion map $J\to I$ induces a metric $d$ on $J$ with $d(s,t)=\len{s-t}$.

Now, the covariant hom-functor $\Hom([0],-)$ composed with the bijection
$\Hom([0],[n])\cong J^n$, which identifies
$\delta_{t_1}\otimes\cdots\otimes\delta_{t_n}$ with $(\seq t1n)$, gives a
faithful functor $K:\boxdot\to\Set$ with 
\begin{align*}
& K[n] = J^n && f:[n]\to[m] \mapsto Kf:J^n\to J^m 
\end{align*}
with $Kf(\seq t1n) = (\seq s1m),\ \text{where } f\circ(\delta_{t_1}\otimes\cdots\otimes\delta_{t_n}) = \delta_{s_1}\otimes\cdots\otimes\delta_{s_m}$.
Consequently an identity between morphisms of $\boxdot$ can be verified pointwise on $J^n$, which is how every computation below is carried out. Further details are given in~\S\ref{apx:boxdot1}.

\subsection{Connections and Associativity}

A \emph{connection} is built from a morphism $c:[1]\to[1]$ with
$c\circ\delta_0=\delta_0$ and $c\circ\delta_1=\delta_1$, equivalently
$Kc(0)=0$ and $Kc(1)=1$; these conditions are what give boundary preservation.
The connection itself is the morphism $f:[2]\to[1]$ with
$Kf(t,s) = t\kern.5pt Kc(s) + (1-t)s$. The base category should be augmented with
these morphisms to effect the connection.

Associativity uses $c = (L\med RL)\med RR$, whose inverse is
$LL\med(LR\med R)$. Unwinding the definition,
\begin{align*}
K((L\med RL)\med RR)(s)
= \begin{cases}2s & \text{if $s\le 1/4$}\\
s+1/4 & \text{if $1/4\le s\le 1/2$}\\
(s+1)/2 & \text{if $s\ge 1/2$,}\end{cases}
\end{align*}
so this morphism maps $0\mapsto 0$, $1/4\mapsto 1/2$, $1/2\mapsto 3/4$ and
$1\mapsto 1$. The interpolation is shown below: from top to bottom, the left
segment is stretched, the right segment compressed, and the middle band simply shifted.

\begin{center}
\vspace{-1ex}
 \begin{tikzpicture}[-, >=stealth', auto, scale=0.8]
%\draw[help lines] (0,0) grid (4,2);
\small
\fill[top color=blue!30, bottom color=blue!10] (0,0) -- (2,0) -- (1,2) -- (0,2) -- (0,0);
\fill[top color=blue!30, bottom color=blue!30] (2,0) -- (3,0) -- (2,2) -- (1,2) -- (2,0);
\fill[top color=blue!30, bottom color=blue!70] (3,0) -- (4,0) -- (4,2) -- (2,2) -- (3,0);
\draw (0,0) -- (4,0) -- (4,2) -- (0,2) -- (0,0);
\draw (1,2) -- (2,0);
\draw (2,2) -- (3,0);
\node at (0,2.2) {$0$};
\node at (1,2.2) {$1/4$};
\node at (2,2.2) {$1/2$};
\node at (4,2.2) {$1$};
\node at (0,-.2) {$0$};
\node at (2,-.2) {$1/2$};
\node at (3,-.2) {$3/4$};
\node at (4,-.2) {$1$};
\end{tikzpicture}
\vspace{-1ex}
\end{center}
That $(L\med RL)\med RR$ and $LL\med(LR\med R)$ are mutually inverse is a
calculation in Lemma \ref{lem:medeq}. So is the observation that the same term
associated the other way collapses:
$L\med (RL\med RR) = L\med (R\circ(L\med R)) = L\med R = \ident 1$.

Now let $F$ be a presheaf as in \S\ref{sec:presheaf} and let $\alpha,\beta,\gamma\in F[n]$ be such that $\alpha\crct n{n-1}\beta$ and $\beta\crct n{n-1}\gamma$ exist. Unwinding the definitions of $\src$ and $\tgt$
shows that $(\alpha\crct n{n-1}\beta)\crct n{n-1}\gamma$ and $\alpha\crct n{n-1}(\beta\crct n{n-1}\gamma)$ exist as well. Associativity up to $\simeq$ is witnessed by applying $FH^{n-1}_0f$ to $(\alpha\crct n{n-1}\beta)\crct n{n-1}\gamma$, where
$f$ is the connection above.
\begin{theorem}
\label{thm:assoc}
With $f$ and $\alpha,\beta,\gamma$ as above,
\begin{align}
\src(FH^{n-1}_0f((\alpha\crct n{n-1}\beta)\crct n{n-1}\gamma)) &= (\alpha\crct n{n-1}\beta)\crct n{n-1}\gamma\label{eq:arnot1}\\
\tgt(FH^{n-1}_0f((\alpha\crct n{n-1}\beta)\crct n{n-1}\gamma)) &= \alpha\crct n{n-1}(\beta\crct n{n-1}\gamma),\label{eq:arnot2}
\end{align}
so $FH^{n-1}_0f((\alpha\crct n{n-1}\beta)\crct n{n-1}\gamma)$ is an
$(n+1)$-cell witnessing the associativity.
\end{theorem}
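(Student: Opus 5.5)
The plan is to reduce both equations to identities between morphisms of $\boxdot$, using functoriality of $F$ and of $H^{n-1}_0$, and then to close the second equation with the uniqueness property of composites established in \S\ref{sec:composition}. Write $\omega=(\alpha\crct n{n-1}\beta)\crct n{n-1}\gamma$. Since $f:[2]\to[1]$, we have $H^{n-1}_0f:[n+1]\to[n]$. So $FH^{n-1}_0f:F[n]\to F[n+1]$, and its source and target are taken with $FH^n_0\delta_b$.

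\textbf{Reducing to the face composites $f\circ(\delta_b\otimes\ident 1)$.} Contravariance and the monoidal structure give
\[
\src(FH^{n-1}_0f(\omega))=F(H^{n-1}_0f\circ H^n_0\delta_0)(\omega)=F\bigl((f\circ(\delta_0\otimes\ident 1))\otimes\ident{n-1}\bigr)(\omega),
\]
and the analogous formula holds with $\delta_1$. Since $K$ is faithful, it suffices to compute $K(f\circ(\delta_b\otimes\ident 1))(s)=Kf(b,s)$. With $Kf(t,s)=t\,Kc(s)+(1-t)s$, this gives $Kf(0,s)=s$ and $Kf(1,s)=Kc(s)$. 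Hence $f\circ(\delta_0\otimes\ident 1)=\ident 1$, which yields \eqref{eq:arnot1} at once, and $f\circ(\delta_1\otimes\ident 1)=c$. Therefore
\[
\tgt(FH^{n-1}_0f(\omega))=FH^{n-1}_0c(\omega).
\]

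\textbf{Identifying $FH^{n-1}_0c(\omega)$ with the other bracketing.} I would use the defining uniqueness of composites: $\beta'\crct n{n-1}\alpha'$ is the unique $\gamma''$ with $FH^{n-1}_0L(\gamma'')=\alpha'$ and $FH^{n-1}_0R(\gamma'')=\beta'$. Abbreviate $F_g=FH^{n-1}_0g$, so that $F_g\circ F_h=F_{h\circ g}$.

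First unfold $\omega$. From the definition of composition, $F_L(\omega)=\gamma$ and $F_R(\omega)=\alpha\crct n{n-1}\beta$. Applying $F_L$ and $F_R$ to the latter gives
\[
F_{RL}(\omega)=\beta,\qquad F_{RR}(\omega)=\alpha .
\]
Next, the target $\alpha\crct n{n-1}(\beta\crct n{n-1}\gamma)$ is the unique $\delta$ with $F_R(\delta)=\alpha$ and $F_L(\delta)=\beta\crct n{n-1}\gamma$. The latter is in turn the unique element whose $F_L$ is $\gamma$ and whose $F_R$ is $\beta$.

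By the defining property of mediating morphisms, $c=(L\med RL)\med RR$ satisfies
\[
c\circ R=RR,\qquad c\circ L=L\med RL,
\]
and hence $c\circ L\circ L=L$ and $c\circ L\circ R=RL$.

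Now set $\delta=F_c(\omega)$ and check the characterizing equations:
\begin{itemize}
\item $F_R(F_c\omega)=F_{cR}(\omega)=F_{RR}(\omega)=\alpha$;
\item $F_L(F_c\omega)$ has $F_L$-component $F_{cLL}(\omega)=F_L(\omega)=\gamma$;
\item $F_L(F_c\omega)$ has $F_R$-component $F_{cLR}(\omega)=F_{RL}(\omega)=\beta$.
\end{itemize}
By uniqueness applied twice (inner composite first, then outer), $F_c(\omega)=\alpha\crct n{n-1}(\beta\crct n{n-1}\gamma)$, which is \eqref{eq:arnot2}. The same uniqueness argument also confirms that both bracketings exist.

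\textbf{Expected obstacle.} The main risk is bookkeeping rather than depth. It is easy to reverse the order under contravariance, or to swap which of $L,R$ picks out which argument under the convention $\beta\crct n{n-1}\alpha$, where $\alpha$ is the $L$-part. I would fix these conventions explicitly first.

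A second point needs care: whether $FH^{n-1}_0f(\omega)$ is genuinely a cell, that is, whether it satisfies boundary preservation. The natural route is that $c\circ\delta_0=\delta_0$ and $c\circ\delta_1=\delta_1$, so $f$ fixes the $\delta_0,\delta_1$ faces in the second coordinate. I would check this with $K$ and then invoke Theorem~\ref{thm:bdrypres}. I treat it as secondary, since the stated claim is the two boundary equations.
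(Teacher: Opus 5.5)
Your proposal is correct and follows the paper's proof essentially step for step. Faithfulness of $K$ gives $f\circ(\delta_0\otimes\ident 1)=\ident 1$ and $f\circ(\delta_1\otimes\ident 1)=(L\med RL)\med RR$. The target is then identified with $\alpha\crct n{n-1}(\beta\crct n{n-1}\gamma)$ by matching its $R$-, $L$-then-$R$- and $L$-then-$L$-components with $\alpha$, $\beta$, $\gamma$, and using uniqueness of composites twice. The boundary-preservation check you flag at the end is not needed: in a presheaf model every element of $F[n+1]$ already counts as an $(n+1)$-cell.
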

\begin{proof}
First we observe that $K(f \circ (\delta_t\otimes\ident 1)) = \lam s{tKc(s) + (1-t)s}$, which implies that
$K(f \circ (\delta_0\otimes\ident 1)) = \lam ss = K\ident 1$
and
$K(f \circ (\delta_1\otimes\ident 1)) = \lam s{Kc(s)} = Kc = K((L\med RL)\med RR)$.
Since $K$ is faithful,
$f \circ (\delta_0\otimes\ident 1) = \ident 1$
and
$f \circ (\delta_1\otimes\ident 1) = (L\med RL)\med RR$.
Then for \eqref{eq:arnot1},
\begin{align*}
& \src(FH^{n-1}_0f((\alpha\crct n{n-1}\beta)\crct n{n-1}\gamma)\\
&= FH^n_0\delta_0(FH^{n-1}_0f((\alpha\crct n{n-1}\beta)\crct n{n-1}\gamma)
= F(H^{n-1}_0f \circ H^n_0\delta_0)((\alpha\crct n{n-1}\beta)\crct n{n-1}\gamma)\\
&= FH^{n-1}_0(f \circ (\delta_0\otimes\ident 1))((\alpha\crct n{n-1}\beta)\crct n{n-1}\gamma)
= FH^{n-1}_0\ident 1((\alpha\crct n{n-1}\beta)\crct n{n-1}\gamma)\\
&= F\ident n((\alpha\crct n{n-1}\beta)\crct n{n-1}\gamma)
= (\alpha\crct n{n-1}\beta)\crct n{n-1}\gamma.
\end{align*}
The proof of \eqref{eq:arnot2} is a little more involved. Starting out as above,
\begin{align}
& \tgt(FH^{n-1}_0f((\alpha\crct n{n-1}\beta)\crct n{n-1}\gamma)\nonumber\\
&= FH^n_0\delta_1(FH^{n-1}_0f((\alpha\crct n{n-1}\beta)\crct n{n-1}\gamma)
= F(H^{n-1}_0f \circ H^n_0\delta_1)((\alpha\crct n{n-1}\beta)\crct n{n-1}\gamma)\nonumber\\
&= FH^{n-1}_0(f \circ (\delta_1\otimes\ident 1))((\alpha\crct n{n-1}\beta)\crct n{n-1}\gamma)\nonumber\\
&= FH^{n-1}_0((L\med RL)\med RR)((\alpha\crct n{n-1}\beta)\crct n{n-1}\gamma).\label{eq:arnot3}
\end{align}
Abbreviating the quantity \eqref{eq:arnot3} by $\theta$,
since $\alpha\crct n{n-1}(\beta\crct n{n-1}\gamma)$ is the unique element of $F[n]$ such that
\begin{align*}
\beta\crct n{n-1}\gamma &= FH^{n-1}_0 L(\alpha\crct n{n-1}(\beta\crct n{n-1}\gamma)) & \alpha &= FH^{n-1}_0 R(\alpha\crct n{n-1}(\beta\crct n{n-1}\gamma))
\end{align*}
and $\beta\crct n{n-1}\gamma$ is the unique element of $F[n]$ such that
\begin{align*}
\gamma &= FH^{n-1}_0 L(\beta\crct n{n-1}\gamma) & \beta &= FH^{n-1}_0 R(\beta\crct n{n-1}\gamma),
\end{align*}
to show \eqref{eq:arnot2} it suffices to show that
\begin{align*}
\alpha &= FH^{n-1}_0 R(\theta)\\
\beta &= FH^{n-1}_0 R( FH^{n-1}_0 L( \theta ) ) = FH^{n-1}_0(LR)(\theta)\\
\gamma &= FH^{n-1}_0 L( FH^{n-1}_0 L( \theta ) ) = FH^{n-1}_0(LL)(\theta).
\end{align*}
Here are the arguments, using the equations of Lemma \ref{lem:medeq}.
\begin{align*}
& FH^{n-1}_0 R(FH^{n-1}_0((L\med RL)\med RR)((\alpha\crct n{n-1}\beta)\crct n{n-1}\gamma))\\
&= FH^{n-1}_0(((L\med RL)\med RR) \circ R)((\alpha\crct n{n-1}\beta)\crct n{n-1}\gamma)\\
&= FH^{n-1}_0(RR)((\alpha\crct n{n-1}\beta)\crct n{n-1}\gamma)
= FH^{n-1}_0R(\alpha\crct n{n-1}\beta)
= \alpha\\[1em]
& FH^{n-1}_0(LR)(FH^{n-1}_0((L\med RL)\med RR)((\alpha\crct n{n-1}\beta)\crct n{n-1}\gamma))\\
&= FH^{n-1}_0(((L\med RL)\med RR) \circ LR)((\alpha\crct n{n-1}\beta)\crct n{n-1}\gamma)\\
&= FH^{n-1}_0((L\med RL) \circ R)((\alpha\crct n{n-1}\beta)\crct n{n-1}\gamma)
= FH^{n-1}_0(RL)((\alpha\crct n{n-1}\beta)\crct n{n-1}\gamma)\\
&= FH^{n-1}_0L(FH^{n-1}_0R((\alpha\crct n{n-1}\beta)\crct n{n-1}\gamma))
= FH^{n-1}_0L(\alpha\crct n{n-1}\beta)
= \beta\\[1em]
& FH^{n-1}_0(LL)(FH^{n-1}_0((L\med RL)\med RR)((\alpha\crct n{n-1}\beta)\crct n{n-1}\gamma))\\
&= FH^{n-1}_0(((L\med RL)\med RR) \circ LL)((\alpha\crct n{n-1}\beta)\crct n{n-1}\gamma)\\
&= FH^{n-1}_0L((\alpha\crct n{n-1}\beta)\crct n{n-1}\gamma)
= \gamma.
\extagqed
\end{align*}
\end{proof}

For the identity laws, we use $L$ and $R$ in one direction and $L' = \ident 1 \med (\delta_1\circ\sigma)$ and $R' = (\delta_0\circ\sigma)\med\ident 1$ in the other. Interpolating between $\ident 1$ and $L$, we have $KL(s) = s/2$, giving $Kf(t,s) = ts/2 + (1-t)s = s - ts/2$.
\begin{center}
\vspace{-1ex}
 \begin{tikzpicture}[-, >=stealth', auto, scale=0.8]
%\draw[help lines] (0,0) grid (4,2);
\small
\fill[top color=blue!30, bottom color=blue!50] (0,0) -- (2,0) -- (4,2) -- (0,2) -- (0,0);
\fill[top color=blue!100, bottom color=blue!30] (2,0) -- (4,0) -- (4,2) -- (2,0);
\draw (0,0) -- (4,0) -- (4,2) -- (0,2) -- (0,0);
\draw (4,2) -- (2,0);
\node at (0,2.2) {$0$};
\node at (4,2.2) {$1$};
\node at (0,-.2) {$0$};
\node at (2,-.2) {$1/2$};
\node at (4,-.2) {$1$};
\end{tikzpicture}
\hspace{8mm}
\begin{tikzpicture}[-, >=stealth', auto, scale=0.8]
%\draw[help lines] (0,0) grid (4,2);
\small
\fill[top color=blue!100, bottom color=blue!30] (0,0) -- (2,0) -- (0,2) -- (0,0);
\fill[top color=blue!30, bottom color=blue!50] (2,0) -- (4,0) -- (4,2) -- (0,2) -- (2,0);
\draw (0,0) -- (4,0) -- (4,2) -- (0,2) -- (0,0);
\draw (0,2) -- (2,0);
\node at (0,2.2) {$0$};
\node at (4,2.2) {$1$};
\node at (0,-.2) {$0$};
\node at (2,-.2) {$1/2$};
\node at (4,-.2) {$1$};
\end{tikzpicture}
\vspace{-1ex}
\end{center}
 
\subsection{Higher Coherence}
\label{sec:coherence}
The same calculus can be used to verify coherence conditions. For example, the pentagon identity for associativity is the statement
\begin{align*}
& (((LL\med LR)\med RL) \med RR)\\
&\circ ((LLL\med(LLR\med LR)) \med R)\\
&\circ (LL\med((LRL\med LRR)\med R))\\
&\circ (L\med (RLL\med (RLR\med RR)))\\
&\circ ((L\med RL)\med(RRL\med RRR)) = \ident 1,
\end{align*}
each composition factor representing one of the five reassociations around the pentagon
\begin{center}
\begin{tikzpicture}[->, >=stealth', auto]
%\draw[help lines] (-2,-2) grid (2,2);
\small
\node (A) at (90:12mm) {
\begin{tikzpicture}[-, >=stealth', inner sep=0pt, minimum size=0pt, auto]
\fill (0,0) circle (1pt);
\fill (-.2,-.1) circle (1pt);
\fill (.2,-.1) circle (1pt);
\fill (-.3,-.2) circle (1pt);
\fill (-.1,-.2) circle (1pt);
\fill (.1,-.2) circle (1pt);
\fill (.3,-.2) circle (1pt);
\draw (0,0) -- (-.2,-.1);
\draw (0,0) -- (.2,-.1);
\draw (-.2,-.1) -- (-.3,-.2);
\draw (-.2,-.1) -- (-.1,-.2);
\draw (.2,-.1) -- (.1,-.2);
\draw (.2,-.1) -- (.3,-.2);
\end{tikzpicture}
};
\node (B) at (18:12mm) {
\begin{tikzpicture}[-, >=stealth', inner sep=0pt, minimum size=0pt, auto]
\fill (0,0) circle (1pt);
\fill (-.1,-.1) circle (1pt);
\fill (.1,-.1) circle (1pt);
\fill (0,-.2) circle (1pt);
\fill (.2,-.2) circle (1pt);
\fill (.1,-.3) circle (1pt);
\fill (.3,-.3) circle (1pt);
\draw (0,0) -- (-.1,-.1);
\draw (0,0) -- (.1,-.1);
\draw (.1,-.1) -- (0,-.2);
\draw (.1,-.1) -- (.2,-.2);
\draw (.2,-.2) -- (.1,-.3);
\draw (.2,-.2) -- (.3,-.3);
\end{tikzpicture}
};
\node (C) at (-54:12mm) {
\begin{tikzpicture}[-, >=stealth', inner sep=0pt, minimum size=0pt, auto]
\fill (0,0) circle (1pt);
\fill (-.1,-.1) circle (1pt);
\fill (.1,-.1) circle (1pt);
\fill (0,-.2) circle (1pt);
\fill (.2,-.2) circle (1pt);
\fill (-.1,-.3) circle (1pt);
\fill (.1,-.3) circle (1pt);
\draw (0,0) -- (-.1,-.1);
\draw (0,0) -- (.1,-.1);
\draw (.1,-.1) -- (0,-.2);
\draw (.1,-.1) -- (.2,-.2);
\draw (0,-.2) -- (-.1,-.3);
\draw (0,-.2) -- (.1,-.3);
\end{tikzpicture}
};
\node (D) at (-126:12mm) {
\begin{tikzpicture}[-, >=stealth', inner sep=0pt, minimum size=0pt, auto]
\fill (0,0) circle (1pt);
\fill (-.1,-.1) circle (1pt);
\fill (.1,-.1) circle (1pt);
\fill (0,-.2) circle (1pt);
\fill (-.2,-.2) circle (1pt);
\fill (-.1,-.3) circle (1pt);
\fill (.1,-.3) circle (1pt);
\draw (0,0) -- (-.1,-.1);
\draw (0,0) -- (.1,-.1);
\draw (-.1,-.1) -- (0,-.2);
\draw (-.1,-.1) -- (-.2,-.2);
\draw (0,-.2) -- (-.1,-.3);
\draw (0,-.2) -- (.1,-.3);
\end{tikzpicture}
};
\node (E) at (-198:12mm) {
\begin{tikzpicture}[-, >=stealth', inner sep=0pt, minimum size=0pt, auto]
\fill (0,0) circle (1pt);
\fill (-.1,-.1) circle (1pt);
\fill (.1,-.1) circle (1pt);
\fill (0,-.2) circle (1pt);
\fill (-.2,-.2) circle (1pt);
\fill (-.1,-.3) circle (1pt);
\fill (-.3,-.3) circle (1pt);
\draw (0,0) -- (-.1,-.1);
\draw (0,0) -- (.1,-.1);
\draw (-.1,-.1) -- (0,-.2);
\draw (-.1,-.1) -- (-.2,-.2);
\draw (-.2,-.2) -- (-.1,-.3);
\draw (-.2,-.2) -- (-.3,-.3);
\end{tikzpicture}
};
\path (A) edge node[inner sep=2pt] {\scriptsize $(L\med RL)\med(RRL\med RRR)$} (B);
\path (B) edge node[inner sep=2pt] {\scriptsize $L\med (RLL\med (RLR\med RR))$} (C);
\path (C) edge node[yshift=-3mm, inner sep=2pt] {\scriptsize $LL\med((LRL\med LRR)\med R)$} (D);
\path (D) edge node[inner sep=2pt] {\scriptsize $(LLL\med(LLR\med LR)) \med R$} (E);
\path (E) edge node[inner sep=2pt] {\scriptsize $((LL\med LR)\med RL) \med RR$} (A);
\end{tikzpicture}
%\vspace{-1ex}
\end{center}
and this follows by equational reasoning from Lemma \ref{lem:medeq} alone.

These terms are not pulled out of thin air. To write the morphism that reassociates one bracketing of a composite into another, list the leaf paths of the \emph{target} bracketing from left to right as words in $L$ and $R$, then combine that list
with $\med$ according to the shape of the \emph{source} bracketing. For example, to reassociate $(\alpha\crct n{n-1}\beta)\crct n{n-1}\gamma$ to $\alpha\crct n{n-1}(\beta\crct n{n-1}\gamma)$, the target has leaves $L$, $RL$
and $RR$ while the source is bracketed to the left, which gives $(L\med RL)\med RR$. Every factor of the pentagon above is built by the same rule, and so is every higher reassociation.
The construction is therefore quite mechanical, and so is its verification: coherence in this setting is rewriting with five equations rather than diagram chasing and is amenable to mechanization.

\section{Conclusion}
\label{sec:conclusion}

We have given $\theory$, eleven equational axioms  for strict equality between
cells of higher identity types, and shown that they are exactly the equations
that hold in every homotopy model, and indeed already in every finite
CW-complex (\S\ref{sec:completeness}). All but two of them, read left to right,
form a strongly normalizing rewrite system whose normal forms are unique up to
picture equivalence, and adding generators in higher dimensions proves no new
equations below (\S\ref{sec:eqpres}). Thus, the resulting word problem is decidable. 
On the semantic side,  internalizing composition as a pushout on the base category makes all
of $\theory$  hold on the nose in every boundary-preserving, pullback-preserving
presheaf (\S\ref{sec:presheaf}), and the same pushout supplies the witnesses for
the laws that hold only up to $\simeq$ (\S\ref{sec:connections}).

Everything provable in our theory is an equation, and everything else requires a witness. The
calculus of mediating morphisms of \S\ref{sec:connections} produces those
witnesses for associativity and the identity laws, as well as higher coherences. 
A second direction for future work is type theory. The models of \S\ref{sec:presheaf} already
carry composition, boundaries and witnesses, but not the fibration structure that
interpreting dependent types requires. We would like to know which
boundary-preserving presheaves over $\boxdot$ admit such structure, how large
that class is, and whether it is closed under the type constructors. 
Finally, our decision procedure materializes pictures as arrays and gives only a rough exponential upper bound. Comparing succinct representations rather than whole arrays may well do better.

%\newpage
\bibliographystyle{splncs04}
\bibliography{Globular}

\begin{toappendix}
\section{Characterization of $\boxdot_1$}
\label{apx:boxdot1}

\subsection{Set Representation}
\label{apx:setrep}

In this section we develop the properties of the functor $K$ defined in \S\ref{sec:medcalculus}.

The covariant hom-functor $\Hom([0],-):\boxdot\to\Set$ maps each object $[n]$ to the set of morphisms $\Hom([0],[n])$ and each morphism $f:[n]\to[m]$ to the set function $f\circ-:\Hom([0],[n])\to\Hom([0],[m])$ that post-composes with $f$. This functor is faithful by Theorem \ref{thm:decomp}. Specifically, each element of $\Hom([0],[n])$ has a unique representation $\delta_{t_1}\otimes\cdots\otimes\delta_{t_n}$, and morphisms $f:[n]\to[m]$ are uniquely determined by their action on these representations by post-composition. There is a bijection $\Hom([0],[n])\cong J^n$ identifying $\delta_{t_1}\otimes\cdots\otimes\delta_{t_n}$ with the tuple $(\seq t1n)$, and composing this bijection with the hom-functor gives a faithful functor $K:\boxdot\to\Set$ with
\begin{align*}
& K[n] = J^n && f:[n]\to[m] \mapsto Kf:J^n\to J^m
\end{align*}
with
\begin{align*}
Kf(\seq t1n) = (\seq s1m),\ \text{where } f\circ(\delta_{t_1}\otimes\cdots\otimes\delta_{t_n}) = \delta_{s_1}\otimes\cdots\otimes\delta_{s_m}.
\end{align*}
\begin{align}
\begin{array}c
\begin{tikzpicture}[->, >=stealth', auto, node distance=32mm]
\small
\node (NW) {$[n]$};
\node (NE) [right of=NW] {$\Hom([0],[n])$};
\node (SW) [below of=NW, node distance=12mm] {$[m]$};
\node (SE) [below of=NE, node distance=12mm] {$\Hom([0],[m])$};
\path (NW) edge node {$\Hom([0],-)$} (NE);
\path (NW) edge node[swap] {$f$} (SW);
\path (NE) edge node {$f\circ-$} (SE);
\path (SW) edge node[swap] {$\Hom([0],-)$} (SE);
\node (WNE) [right of=NE, node distance=20mm] {$J^n$};
\node (WSE) [right of=SE, node distance=20mm] {$J^m$};
\path (NE) edge node {$\cong$} (WNE);
\path (SE) edge node[swap] {$\cong$} (WSE);
\path (WNE) edge node {$Kf$} (WSE);
\end{tikzpicture}
\end{array}
\label{eq:depcoalg}
\end{align}

In this view, all functions $Kf:J\to J$ representing morphisms $f:[1]\to[1]$ are uniformly continuous in the metric topology on $J$. The maps $KL$ and $KR$ are each contractive with constant of contraction $1/2$: $KL(t) = t/2$ and $KR(t) = (t+1)/2$, so
\begin{align*}
d(KL(s),KL(t)) &= \len{s/2-t/2} = \len{s-t}/2 = d(s,t)/2\\
d(KR(s),KR(t)) &= \len{(s+1)/2-(t+1)/2} = \len{s-t}/2 = d(s,t)/2.
\end{align*}
The function $K(f\med g)$ is Lipschitz with constant $c_{f\med g} = 2\max c_f,c_g$, where $c_f$ and $c_g$ are the Lipschitz constants of $f$ and $g$. To see this, note that
\begin{align*}
K(f\med g)(t) = \begin{cases}
Kf(2t) & \text{if $t\le 1/2$, that is, if $2t\le 1$ and $t=KL(2t)$}\\
Kg(2t-1) & \text{if $t\ge 1/2$, that is, if $2t\ge 1$ and $t=KR(2t-1)$.}
\end{cases}
\end{align*}
Then
\begin{align*}
& d(K(f\med g)(s),K(f\med g)(t))\\
&= \begin{cases}
d(Kf(2s),Kf(2t)) & \text{if $s,t\le 1/2$}\\
d(Kg(2s-1),Kg(2t-1)) & \text{if $s,t\ge 1/2$}\\
d(Kf(2s),Kg(2t-1)) & \text{if $s\le 1/2$ and $t\ge 1/2$.}
\end{cases}
\end{align*}
The first two quantities are bounded by $2c_f d(s,t)$ and $2c_g d(s,t)$, respectively. For the last case, recalling that $Kf(1)=Kg(0)$ and using the triangle inequality,
\begin{align*}
d(Kf(2s),Kg(2t-1)) &\le d(Kf(2s),Kf(1)) + d(Kg(0),Kg(2t-1))\\
&\le c_f(1-2s) + c_g(2t-1)
\le 2(\max c_f,c_g)d(s,t).
\end{align*}

\subsection{Equations for $\boxdot_1$}

\begin{lemma}
\label{lem:medeq}
Let $U,V,W:[1]\to[1]$. The following equations are sound whenever both sides exist. 
\begin{itemize}
\item $(U\med V)\circ L=U$
\item $(U\med V)\circ R=V$
\item $U\circ(V\med W) = (U\circ V)\med(U\circ W)$
\item $(\delta_t\circ\sigma)\med(\delta_t\circ\sigma) = \delta_t\circ\sigma$
\item $L\med R = \ident 1$
\end{itemize}
In all cases, if the left-hand side exists, then so does the right.
\end{lemma}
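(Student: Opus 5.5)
The plan is to verify each equation through the faithful functor $K:\boxdot\to\Set$ of \S\ref{sec:medcalculus}. Two morphisms $[1]\to[1]$ are equal exactly when $Kf=Kg$ as functions $J\to J$, so every identity reduces to a pointwise check on $J$. Two of the equations, however, follow directly from the universal property of the pushout \eqref{eq:pushout}, and I would prove those that way, saving $K$ for the remaining three.

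The first two equations, $(U\med V)\circ L=U$ and $(U\med V)\circ R=V$, are the defining property of the mediating morphism. They hold whenever $U\med V$ exists, that is, whenever $U\circ\delta_1=V\circ\delta_0$. For $L\med R=\ident 1$, the pushout condition $L\circ\delta_1=R\circ\delta_0$ holds, so $L\med R$ exists. Since $\ident 1\circ L=L$ and $\ident 1\circ R=R$, uniqueness of the mediating morphism gives $L\med R=\ident 1$.

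For the idempotence law $(\delta_t\circ\sigma)\med(\delta_t\circ\sigma)=\delta_t\circ\sigma$, the existence condition is $\delta_t\sigma\delta_1=\delta_t\sigma\delta_0$. This holds because $[0]$ is terminal, so $\sigma\circ\delta_b=\ident 0$. Uniqueness then requires $\delta_t\sigma L=\delta_t\sigma$ and $\delta_t\sigma R=\delta_t\sigma$. Both follow from $\sigma\circ L=\sigma=\sigma\circ R$, which again holds because $[0]$ is terminal.

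For distributivity $U\circ(V\med W)=(U\circ V)\med(U\circ W)$, I would first show that the right-hand side exists whenever the left does. Existence of $V\med W$ means $V\delta_1=W\delta_0$, hence $UV\delta_1=UW\delta_0$. By uniqueness it then suffices to check $U\circ(V\med W)\circ L=U\circ V$ and $U\circ(V\med W)\circ R=U\circ W$, which are immediate from the first two equations. The closing claim that the right side exists whenever the left does is settled along the way in each case. No step is a real obstacle. The only point needing care is that ``exists'' refers to the side condition for $\med$, and that each uniqueness argument uses the pushout property of \eqref{eq:pushout} itself, not of its image under some $H^n_m$. If one prefers, each equation can also be confirmed through $K$ using the explicit piecewise formula for $K(f\med g)$ in \S\ref{apx:setrep}.
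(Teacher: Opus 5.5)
Your argument is correct and is essentially the paper's proof. The first two equations are the definition of $U\med V$. The remaining three follow from uniqueness of the mediating morphism, with terminality of $[0]$ supplying $\sigma\circ\delta_b=\ident 0$ and $\sigma\circ L=\sigma\circ R=\sigma$ in the idempotence case, which is exactly the paper's route.

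Drop the opening plan and the closing suggestion to verify equations through $K$, though. In the paper, faithfulness of $K$ comes from Theorem~\ref{thm:decomp}, which rests on Lemma~\ref{lem:boxdot1} and Theorem~\ref{thm:boxdot1}, and those use the present lemma, so that route would be circular. Your actual proof never uses $K$, so nothing is lost.
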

\begin{proof}
The first two are by definition. For the third, a left distributive law, if $V\circ\delta_1=W\circ\delta_0$, then $U\circ V\circ\delta_1=U\circ W\circ\delta_0$, so if the left-hand side exists, then so does the right. The two sides are equal, because $U\circ(V\med W)\circ L = U\circ V = ((U\circ V)\med(U\circ W))\circ L$ and $U\circ(V\med W)\circ R = U\circ W = ((U\circ V)\med(U\circ W))\circ R$, and the mediating morphism is unique. The last two equations can be proved by the same technique.
\exqed
\end{proof}

\begin{lemma}
\label{lem:boxdot1}
Every element of $\Hom([1],[1])$ generated by $\sigma$, $L$, $R$, and the $\delta_t$ under $\circ$ and $\med$ is represented by a term generated by the grammar
\begin{align*}
S\ &::=\ S \med S \mid X \mid \delta_t\circ\sigma & X\ &::=\ R\circ X \mid L\circ X \mid \ident 1
\end{align*}
with start symbol $S$.
\end{lemma}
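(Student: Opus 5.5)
We argue by structural induction on how an element $f\in\Hom([1],[1])$ is generated from $\sigma$, $L$, $R$, the $\delta_t$ under $\circ$ and $\med$, and show that the set $\mathcal S$ of morphisms denoted by $S$-terms contains the generators of type $[1]\to[1]$ and is closed under $\med$ and under composition. Closure under $\med$ is immediate from the first production $S::=S\med S$. For the generators: $\ident 1$, $L=L\circ\ident 1$ and $R=R\circ\ident 1$ are $X$-terms. Every morphism $[1]\to[1]$ that passes through $[0]$ has the form $\delta\circ\sigma$ with $\delta:[0]\to[1]$; since $[0]$ is terminal, any composite $[1]\to[0]$ equals $\sigma$. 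So it suffices to show that every $\delta:[0]\to[1]$ built from the generators is some $\delta_t$. This follows because $\delta_t$ is closed under postcomposition with $L$ and $R$ ($L\circ\delta_t=\delta_{t/2}$, $R\circ\delta_t=\delta_{(t+1)/2}$), and because any $f\circ\delta_t$ is again some $\delta_{t'}$ by definition of the index set of $\Hom([0],[1])$.

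The main step is closure under composition: given $S$-terms $U,V$, we must exhibit an $S$-term equal to $U\circ V$. We first reduce to the case where $V$ is a single "atomic" term by induction on $V$.
\begin{itemize}
\item If $V=V_1\med V_2$, the left distributive law of Lemma~\ref{lem:medeq} gives $U\circ(V_1\med V_2)=(U\circ V_1)\med(U\circ V_2)$, and the right-hand side exists because the left does. Induction on $V$ then applies.
\item If $V=\delta_t\circ\sigma$, then $U\circ\delta_t\circ\sigma=\delta_{t'}\circ\sigma$ for the appropriate $t'$, which is an $S$-term.
\end{itemize}
So it remains to handle $V$ an $X$-term, i.e.\ a word $P_1\circ\cdots\circ P_k$ with $P_i\in\{L,R\}$. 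By associativity, it suffices to compute $U\circ L$ and $U\circ R$ for an $S$-term $U$ and iterate. We do this by induction on $U$.
\begin{itemize}
\item If $U=U_1\med U_2$, then $U\circ L=U_1$ and $U\circ R=U_2$ by the first two equations of Lemma~\ref{lem:medeq}, which are $S$-terms.
\item If $U=\delta_t\circ\sigma$, then $U\circ L=\delta_t\circ\sigma\circ L=\delta_t\circ\sigma$, since $\sigma\circ L:[1]\to[0]$ equals $\sigma$ by terminality.
\item If $U$ is an $X$-term, then $U\circ L$ is $U$ with $L$ appended on the right. Writing it as a word in $L$ and $R$, it is again an $X$-term after reassociating, since $X$-terms are exactly words in $L,R$ (with $\ident 1$ as the empty word).
\end{itemize}

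I expect the only real obstacle to be bookkeeping rather than mathematics. The notion of "represented by a term" must be read modulo the equations of Lemma~\ref{lem:medeq} and associativity of $\circ$, and one has to check at each distributive step that both sides exist. This is guaranteed by the last clause of Lemma~\ref{lem:medeq}. A secondary point is termination of the rewriting $U\circ V\mapsto\cdots$. I would handle it with a lexicographic measure: the size of $V$ first, then the size of $U$. Each case above strictly decreases this measure.
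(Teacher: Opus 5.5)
Your proof is correct and uses the same ingredients as the paper's. Those are the left distributive law to push composition inside $\med$, the equations $(U\med V)\circ L=U$ and $(U\med V)\circ R=V$, and terminality of $[0]$, so that $\sigma\circ L=\sigma\circ R=\sigma$ and every composite through $[0]$ collapses to some $\delta_t\circ\sigma$. The difference is only in packaging: you show the $S$-denotable morphisms are closed under $\circ$ and $\med$, with an explicit lexicographic termination measure, whereas the paper applies the same rewrites globally to an arbitrary term and leaves termination implicit.
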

\begin{proof}
We reduce any term over the stated primitives to the desired form using the equations of Lemma \ref{lem:medeq}. First apply the left-distributive law $U\circ(V\med W) = (U\circ V)\med(U\circ W)$ in the left-to-right direction as far as possible. After this step, there are no subterms of the form $U\circ(V\med W)$. Any remaining compositions $(U\med V)\circ L$ or $(U\med V)\circ R$ can be simplified to $U$ and $V$, respectively. Any remaining compositions $(U\med V)\circ\delta_t$ are equal to $\delta_s$ for some $s$, so can be replaced by $\delta_s$. There are no compositions of the form $(U\med V)\circ\sigma$ because it is not type-correct.

After this step, there are no compositions with any $U\med V$ as a composition factor. Any remaining compositions with an occurrence of $\sigma$ or $\delta_t$ as a composition factor reduces to some $\delta_t\circ\sigma$, because $\sigma\circ L=\sigma\circ R=\sigma$ and $L\circ\delta_t$ and $R\circ\delta_t$ are equal to some $\delta_s$.

At this point the term is of the desired form as specified by the grammar in the statement of the lemma. However further simplifications are possible using the remaining equations. Any subterm of the form $(\delta_s\circ\sigma)\med(\delta_t\circ\sigma)$ must have $s=t$, as $\delta_s = \delta_s\circ\sigma\circ\delta_1=\delta_t\circ\sigma\circ\delta_0 = \delta_t$, thus the subterm can be reduced to $\delta_t\circ\sigma$. The equation is valid, as $((\delta_t\circ\sigma)\med(\delta_t\circ\sigma))\circ L=\delta_t\circ\sigma=\delta_t\circ\sigma\circ L$ and 
$((\delta_t\circ\sigma)\med(\delta_t\circ\sigma))\circ R=\delta_t\circ\sigma=\delta_t\circ\sigma\circ R$. Also, any subterm of the form $(U\circ L)\med(U\circ R)$ can be reduced to $U$, as $(U\circ L)\med(U\circ R) = U\circ(L\med R) = U\circ\ident 1 = U$.
\footnote{Above we applied left distributive law only in the left-to-right direction. Here the reverse direction (right-to-left) is still valid because $L\circ\delta_1 = R\circ\delta_0 = \delta_{1/2}$.}
\exqed
\end{proof}

Now we show that after these simplifications, the representation is unique. We show that any pair of reduced terms that are semantically equal are also syntactically equal.

\begin{theorem}
\label{thm:boxdot1}
Let $f,g:[1]\to[1]$. Let $A = \set{a2^{-n}}{n\ge 0,\ 0\le a\le 2^n}$. The following are equivalent:
\begin{enumerate}[{\upshape(i)}]
\item $f = g$
\item for all $t$, $f\circ\delta_t=g\circ\delta_t$
\item for all $t\in A$, $f\circ\delta_t=g\circ\delta_t$
\item the reduced forms of $f$ and $g$ as described in Lemma \ref{lem:boxdot1} are identical.
\end{enumerate}
\end{theorem}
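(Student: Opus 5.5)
We prove the cycle (i)$\Rightarrow$(ii)$\Rightarrow$(iii)$\Rightarrow$(iv)$\Rightarrow$(i). Two of these implications are immediate. (i)$\Rightarrow$(ii) holds by composing with $\delta_t$, and (ii)$\Rightarrow$(iii) holds since $A\subs J$. For (iv)$\Rightarrow$(i), Lemma \ref{lem:boxdot1} reduces $f$ and $g$ to their reduced forms using only the sound equations of Lemma \ref{lem:medeq}. So $f$ equals its reduced form, $g$ equals its reduced form, and identical reduced forms give $f=g$. The content is therefore (iii)$\Rightarrow$(iv): reduced terms agreeing on all dyadic points are syntactically identical.

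Before that, we record how a reduced term acts on dyadic points, by induction on the grammar. Write $X$-terms as words $w$ in $L,R$. Then $Kw$ is an affine map $s\mapsto 2^{-|w|}s+c_w$ whose image is a dyadic interval of length $2^{-|w|}$. Different words give different maps: the length $|w|$ is read off from the slope, and the word itself from the position of the interval, i.e.\ the binary digits of $c_w$. The term $\delta_t\circ\sigma$ acts as the constant $t$. For $U\med V$ we have $(U\med V)\circ\delta_s=U\circ\delta_{2s}$ when $s\le 1/2$ and $V\circ\delta_{2s-1}$ when $s\ge1/2$, by the pushout equations $(U\med V)\circ L=U$ and $(U\med V)\circ R=V$ applied to $\delta_{s/2}=L\circ\delta_s$ and $\delta_{(s+1)/2}=R\circ\delta_s$. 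So every reduced term acts as a piecewise-affine function on $A$ with dyadic breakpoints.

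We prove (iii)$\Rightarrow$(iv) by induction on the combined size of the reduced terms $f$ and $g$, splitting into cases by their top-level shapes.
\begin{enumerate}[(a)]
\item If both are $\med$-terms, $f=U\med V$ and $g=U'\med V'$, then agreement on $A$ restricts to agreement of $U,U'$ on $A$ (via $s\mapsto s/2$) and of $V,V'$ on $A$ (via $s\mapsto(s+1)/2$). Since $A$ is closed under these maps, the induction hypothesis gives $U=U'$ and $V=V'$ syntactically.
\item If both are atomic ($X$-words or reflexors), they are separated by the previous paragraph. A reflexor is constant while a word is injective, two reflexors agree only if their indices agree, and two words agree only if they are equal.
\item If one is atomic and the other is $U\med V$, we must show this cannot happen for reduced terms. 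This is the main obstacle.
\end{enumerate}

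For case (c), the plan is as follows.
\begin{enumerate}[(1)]
\item Suppose $U\med V$ agrees on $A$ with $\delta_t\circ\sigma$. Then $U$ and $V$ both agree with that constant on $A$. By the induction hypothesis on the halves, $U$ and $V$ are both syntactically $\delta_t\circ\sigma$, but then $(\delta_t\circ\sigma)\med(\delta_t\circ\sigma)$ would have been reduced to $\delta_t\circ\sigma$, a contradiction.
\item Suppose $U\med V$ agrees with a word $w$. If $w=\ident 1$, then $U$ agrees with $L$ and $V$ with $R$, so by induction they are syntactically $L$ and $R$, and $L\med R$ would have been reduced to $\ident 1$.
\item If $w=w'L$ or $w=w'R$ is nonempty, the functions split as $K(w\circ L)$ and $K(w\circ R)$. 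By induction $U=w\circ L$ and $V=w\circ R$ syntactically, written as words. But then $(w L)\med(w R)$ is a redex of the rule $(U\circ L)\med(U\circ R)\to U$, again contradicting reducedness.
\end{enumerate}

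To make this airtight, the induction measure must be the combined term size, so that the halves $U,V$ compared against atomic pieces such as $w\circ L$ are strictly smaller. We must also fix the convention that ``reduced'' means no such $\med$-redex remains anywhere in the term, which is exactly the final simplification step of Lemma \ref{lem:boxdot1}.
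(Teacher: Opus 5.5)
Your proposal is correct and follows the paper's proof essentially step for step. The trivial implications are the same. For (iii)$\Rightarrow$(iv) you use the same induction: $\med$ against $\med$ via restriction along $L$ and $R$; $\med$ against a reflexor or a word, each ending in a contradiction with reducedness via $(\delta_t\circ\sigma)\med(\delta_t\circ\sigma)\to\delta_t\circ\sigma$ or $(W\circ L)\med(W\circ R)\to W$; and word against reflexor or word, settled by evaluating at $\delta_0,\delta_1$. Your explicit induction measure and your separate case $w=\ident 1$ (which the paper covers as the empty composition $W$) only make precise what the paper leaves implicit.
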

\begin{proof}
The implications (i) $\Imp$ (ii), (ii) $\Imp$ (iii), and (iv) $\Imp$ (i) are obvious. For (iii) $\Imp$ (iv), the proof is by induction. Let us write $\equiv$ for syntactic equality of expressions.

Suppose $U_1\med V_1$ and $U_2\med V_2$ are reduced. If for all $t\in A$, $(U_1\med V_1)\circ\delta_t = (U_2\med V_2)\circ\delta_t$, then for all $t\in A$, $U_1\circ\delta_t = (U_1\med V_1)\circ L\circ\delta_t = (U_2\med V_2)\circ L\circ\delta_t = U_2\circ\delta_t$ and $V_1\circ\delta_t = (U_1\med V_1)\circ R\circ\delta_t = (U_2\med V_2)\circ R\circ\delta_t = V_2\circ\delta_t$. By the induction hypothesis, $U_1\equiv U_2$ and $V_1\equiv V_2$, therefore $U_1\med V_1\equiv U_2\med V_2$.

Suppose $U\med V$ is reduced. If for all $t\in A$, $U\med V = \delta_t\circ\sigma$, then for all $t\in A$, $U\circ\delta_t = (U\med V)\circ L\circ\delta_t = \delta_t\circ\sigma\circ L\circ\delta_t = \delta_t\circ\sigma\circ\delta_t$ and $V\circ\delta_t = (U\med V)\circ R\circ\delta_t = \delta_t\circ\sigma\circ R\circ\delta_t = \delta_t\circ\sigma\circ\delta_t$. By the induction hypothesis, $U\equiv\delta_t\circ\sigma$ and $V\equiv\delta_t\circ\sigma$, therefore $U\med V\equiv(\delta_t\circ\sigma)\med(\delta_t\circ\sigma)$. But then $U\med V$ was not fully reduced, as it reduces to $\delta_t\circ\sigma$.

Suppose $U\med V$ is reduced. Let $W$ be a composition of $R$ and $L$. If for all $t\in A$, $(U\med V)\circ\delta_t=W\circ\delta_t$, then for all $t\in A$, $U\circ\delta_t = (U\med V)\circ L\circ\delta_t = W\circ L\circ\delta_t$ and $V\circ\delta_t = (U\med V)\circ R\circ\delta_t = W\circ R\circ\delta_t$. By the induction hypothesis, $U\equiv W\circ L$ and $V\equiv W\circ R$, therefore $U\med V\equiv(W\circ L)\med(W\circ R)$. But then $U\med V$ was not fully reduced, as it reduces to $W$. 

For the remaining cases involving $U$ a composition of $L$ and $R$, let $\#U$ denote the number represented by $U$ in binary, taking $L$ as $0$ and $R$ as $1$, and let $\len U$ be the number of composition factors in $U$. One can show inductively that for any $t\in A$, $U\circ\delta_t = \delta_s$, where $s = (\#U+t)\cdot 2^{-\len U}$. Thus we cannot have $U=\delta_t\circ\sigma$, because $U\circ\delta_0 \ne U\circ\delta_1$, since $\#U\cdot 2^{-\len U} \ne (\#U+1)\cdot 2^{-\len U}$, whereas $\delta_t\circ\sigma\circ\delta_0 = \delta_t\circ\sigma\circ\delta_1 = \delta_t\circ\sigma$.

Moreover, if $U$ and $V$ are compositions of $L$ and $R$ and $U\circ\delta_t=V\circ\delta_t$ for all $t\in A$, then $(\#U+t)\cdot 2^{-\len U} = (\#V+t)\cdot 2^{-\len V}$. In particular, for $t\in\{0,1\}$, we have $\#U\cdot 2^{-\len U} = \#V\cdot 2^{-\len V}$ and $(\#U+1)\cdot 2^{-\len U} = (\#V+1)\cdot 2^{-\len V}$, which implies $\len U=\len V$ and $\#U=\#V$, thus $U\equiv V$.
\exqed
\end{proof}

\section{Properties of $\boxdot$}
\label{apx:boxdot}

\begin{lemma}
\label{lem:ds}
$\delta_t\otimes\sigma = \delta_t\circ\sigma = \sigma\otimes\delta_t$.
\end{lemma}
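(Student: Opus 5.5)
The plan is to derive both equalities from the strict monoidal structure of $\boxdot$, specifically the interchange law $(f\circ g)\otimes(h\circ k)=(f\otimes h)\circ(g\otimes k)$, together with the fact that $[0]$ is the monoidal unit, so that $\ident 0\otimes f=f=f\otimes\ident 0$ for every morphism $f$. Here $\delta_t:[0]\to[1]$ and $\sigma:[1]\to[0]$. The tensor $\delta_t\otimes\sigma$ has type $[0]\otimes[1]=[1]\to[1]\otimes[0]=[1]$, so all three expressions in the statement are morphisms $[1]\to[1]$ and the equations are well typed.

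For the first equality, factor each tensor factor through an identity:
\begin{align*}
\delta_t\otimes\sigma=(\delta_t\circ\ident 0)\otimes(\ident 0\circ\sigma)=(\delta_t\otimes\ident 0)\circ(\ident 0\otimes\sigma).
\end{align*}
This is a valid composite of $[1]\to[0]\otimes[0]=[0]$ followed by $[0]\to[1]$. Since $[0]$ is the unit, $\delta_t\otimes\ident 0=\delta_t$ and $\ident 0\otimes\sigma=\sigma$, which gives $\delta_t\otimes\sigma=\delta_t\circ\sigma$.

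For the second equality, factor the other way:
\begin{align*}
\sigma\otimes\delta_t=(\ident 0\circ\sigma)\otimes(\delta_t\circ\ident 0)=(\ident 0\otimes\delta_t)\circ(\sigma\otimes\ident 0)=\delta_t\circ\sigma.
\end{align*}
Alternatively, apply the same interchange in the opposite order to the first computation: $\delta_t\otimes\sigma=(\ident 1\otimes\sigma)\circ(\delta_t\otimes\ident 1)$, which reduces to $\sigma\circ\ldots$ only when the types match, so I will stick with the $\ident 0$ factorizations.

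The only delicate point is the bookkeeping of types. One must check that each interchange step composes morphisms with matching domains and codomains, and that the unit coherence is strict, which holds because $\boxdot$ is assumed to be a strict monoidal category with unit $[0]$. Once this is confirmed, the lemma is immediate. This also reflects the general fact that in a strict monoidal category, morphisms out of or into the unit "slide past" each other. As a sanity check, under the faithful functor $K$ of \S\ref{sec:medcalculus}, both sides send every $s\in J$ to $t$.
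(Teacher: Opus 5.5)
Your proof is correct and is the paper's argument: factor each tensor factor through $\ident 0$, apply the interchange law, and use that $[0]$ is a strict unit. Both equalities come out this way. Keep the closing $K$-computation as a sanity check only: faithfulness of $K$ is derived from Theorem~\ref{thm:decomp}, which itself depends on this lemma, so it cannot serve as the proof.
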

\begin{proof}
\begin{align*}
\delta_t\otimes\sigma &= (\delta_t\circ\ident 0)\otimes(\ident 0\circ\sigma)
= (\delta_t\otimes\ident 0)\circ(\ident 0\otimes\sigma) = \delta_t\circ\sigma\\
&= (\ident 0\otimes\delta_t)\circ(\sigma\otimes\ident 0)
= (\ident 0\circ\sigma)\otimes(\delta_t\circ\ident 0) = \sigma\otimes\delta_t.
\extagqed
\end{align*}
\end{proof}

\begin{lemma}
\label{lem:fg}
The tensor product in $\boxdot$, as a bifunctor $\otimes:\boxdot\times \boxdot\to \boxdot$, is faithful.
\end{lemma}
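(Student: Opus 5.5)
The plan is to show that both tensor factors can be recovered from $f\otimes g$ by composing with fixed morphisms on each side. Faithfulness means injectivity on hom-sets: if $f,f':[a]\to[b]$, $g,g':[c]\to[d]$ and $f\otimes g=f'\otimes g'$, then $f=f'$ and $g=g'$. I will construct, for each choice of objects, morphisms $P,Q$ depending only on $a,b,c,d$ with $Q\circ(f\otimes g)\circ P=f$, and similar morphisms $P',Q'$ extracting $g$. Applying these to both sides of $f\otimes g=f'\otimes g'$ then gives the result immediately.

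The morphisms are as follows. Let $p:[0]\to[c]$ be any point of $[c]$, for instance $p=\delta_0\otimes\cdots\otimes\delta_0$ ($c$ factors), with $p=\ident 0$ when $c=0$. Let $h:[d]\to[0]$ be the unique morphism, which exists because $[0]$ is terminal; concretely $h=\sigma\otimes\cdots\otimes\sigma$. Put $P=\ident a\otimes p:[a]=[a]\otimes[0]\to[a+c]$ and $Q=\ident b\otimes h:[b+d]\to[b]$. These use the strict monoidal identities $[a]\otimes[0]=[a]$ and $\ident{a}\otimes\ident 0=\ident a$.

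Functoriality of $\otimes$, the interchange law already used in Lemma~\ref{lem:commute} and Theorem~\ref{thm:bdrypres}, then gives
\begin{align*}
Q\circ(f\otimes g)\circ P=(\ident b\circ f\circ\ident a)\otimes(h\circ g\circ p)=f\otimes(h\circ g\circ p).
\end{align*}
The map $h\circ g\circ p$ goes from $[0]$ to $[0]$, so by terminality it equals $\ident 0$. Hence the composite is $f\otimes\ident 0=f$. Symmetrically, with a point $q:[0]\to[a]$ and the terminal map $k:[b]\to[0]$, we get $(k\otimes\ident d)\circ(f\otimes g)\circ(q\otimes\ident c)=\ident 0\otimes g=g$.

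The only step needing care is the bookkeeping of the unit. One must check that $\ident a\otimes p$ really has domain $[a]$ by strictness, and that $f\otimes\ident 0=f$ holds on the nose, which is part of the strict monoidal structure. One must also make sure a point $[0]\to[c]$ exists, which it does since $\delta_0$ is available. I expect no genuine obstacle beyond this, since terminality of $[0]$ forces every endomorphism of $[0]$ to be the identity and so does all the real work.
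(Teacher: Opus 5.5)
Your proof is correct and is the paper's own argument. The paper also recovers $f$ as $(\ident b\otimes\sigma^{\otimes d})\circ(f\otimes g)\circ(\ident a\otimes\delta_0^{\otimes c})=f\otimes\ident 0=f$ by interchange, and $g$ symmetrically. You make explicit the terminality step $h\circ g\circ p=\ident 0$, which the paper leaves implicit.
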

\begin{proof}
%\cite{nlab:semicartesian}
We need to show that if $f,f':[n]\to[m]$, $g,g':[k]\to[\ell]$, and $f\otimes g=f'\otimes g'$, then $f=f'$ and $g=g'$. Let $h^{\otimes n} = \underbrace{h\otimes\cdots\otimes h}_{n}$ and $h^{\otimes 0} = \ident 0$. Then $f$ and $g$ can be extracted from $f\otimes g$:
\begin{align*}
& (\ident m\otimes\sigma^{\otimes\ell})\circ(f\otimes g)\circ(\ident n\otimes\delta_0^{\otimes k})\\
&= (\ident m\circ f\circ\ident n)\otimes(\sigma^{\otimes\ell}\circ g\circ\delta_0^{\otimes k})
= f\otimes\ident 0 = f\\
& (\sigma^{\otimes m}\otimes\ident\ell)\circ(f\otimes g)\circ(\delta_0^{\otimes n}\otimes\ident k)\\
&= (\sigma^{\otimes m}\circ f\circ\delta_0^{\otimes n})\otimes(\ident\ell\circ g\circ\ident k)
= \ident 0\otimes g = g,
\end{align*}
hence
\begin{align*}
f &= (\ident m\otimes\sigma^{\otimes\ell})\circ(f\otimes g)\circ(\ident n\otimes\delta_0^{\otimes k})\\
&= (\ident m\otimes\sigma^{\otimes\ell})\circ(f'\otimes g')\circ(\ident n\otimes\delta_0^{\otimes k}) = f'\\
g &= (\sigma^{\otimes m}\otimes\ident\ell)\circ(f\otimes g)\circ(\delta_0^{\otimes n}\otimes\ident k)\\
&= (\sigma^{\otimes m}\otimes\ident\ell)\circ(f'\otimes g')\circ(\delta_0^{\otimes n}\otimes\ident k) = g'.
\extagqed
\end{align*}
\end{proof}

Let $\boxdot_1$ be the induced subcategory of $\boxdot$ on objects $[0]$ and $[1]$. Let $\sum_n \boxdot_1^n$ be the free strict monoidal category on $\boxdot_1$. The objects of $\sum_n \boxdot_1^n$ are sequences $[d_1]\cmp\cdots\cmp[d_n]$, where $d_i\in\{0,1\}$, and the morphisms are sequences $f_1\cmp\cdots\cmp f_n:[d_1]\cmp\cdots\cmp[d_n]\to[e_1]\cmp\cdots\cmp[e_n]$, where $f_i:[d_i]\to[e_i]$. Tensor product for both objects and morphisms is concatenation, denoted by $\cmp$.

The tensor product $\otimes$ in $\boxdot$ gives a monoidal functor $\Otimes:\sum_n \boxdot_1^n\to \boxdot$ in which $\Otimes([d_1]\cmp\cdots\cmp[d_n])=[d_1]\otimes\cdots\otimes[d_n]=[\sum_{i=1}^n d_i]$ and $\Otimes(f_1\cmp\cdots\cmp f_n) = f_1\otimes\cdots\otimes f_n$. It follows from Lemma \ref{lem:fg} that $\Otimes$ is faithful.

A \emph{decomposition} of $f:[n]\to[m]$ is a morphism $\hat f$ of $\sum_n \boxdot_1^n$ such that $\Otimes\hat f= f$.

\begin{lemma}
\label{lem:decomp}
Every morphism of $\boxdot$ has a decomposition $f_1\cmp\cdots\cmp f_n$ such that each $f_i$ is either
\begin{itemize}
\item some $\delta_t:[0]\to[1]$,
\item $\sigma:[1]\to[0]$, or
\item a morphism $[1]\to[1]$ in the reduced form of Lemma \ref{lem:boxdot1}, but not $\delta_t\circ\sigma$,
\end{itemize}
and there is no adjacent pair $f_i\cmp f_{i+1}$ with $f_i=\delta_t$ and $f_{i+1}=\sigma$.
\end{lemma}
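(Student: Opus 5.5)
We prove Lemma~\ref{lem:decomp} by obtaining some decomposition first and then normalizing it.

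\textbf{Existence.} We use the standing assumption that every morphism of $\boxdot$ is generated from morphisms of $\boxdot_1$ under $\circ$ and $\otimes$, and induct on a generating expression.
\begin{itemize}
\item A morphism $g$ of $\boxdot_1$ is its own decomposition. If $g=\ident 0$, the decomposition is the empty sequence.
\item If $f=g\otimes h$ and $\hat g,\hat h$ decompose $g,h$, then $\hat g\cmp\hat h$ decomposes $f$.
\item For $f=g\circ h$, the decompositions $\hat g$ and $\hat h$ need not be composable in $\sum_n\boxdot_1^n$, because their factor boundaries may not line up. We fix this by refining both to a common object sequence. Any $[0]$ can be inserted or deleted freely, since $\ident 0$ is the unit and $[0]\otimes[1]=[1]$. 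Every object of $\sum_n\boxdot_1^n$ has entries $[0]$ or $[1]$, so $\hat g$'s source and $\hat h$'s target both flatten to $[k]$. After padding with $\ident 0$ factors, each becomes a length-$k$ string of $[1]$'s interleaved with $[0]$'s.
\end{itemize}
The delicate part of the composition case is a factor $\delta_t:[0]\to[1]$ in $\hat h$ whose target $[1]$ must meet a factor of $\hat g$ with source $[1]$. Since the flattened objects agree, the positions of the $[1]$'s correspond one to one, and we regroup by the interchange law $(a\otimes b)\circ(c\otimes d)=(a\circ c)\otimes(b\circ d)$. Concretely, we pad so that both sequences have the same length with matching entries, then compose factorwise. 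A mismatch such as $[0]$ against $[1]$ would be handled by splitting, using $\ident 1=\ident 0\otimes\ident 1$. The result is a decomposition of $g\circ h$. Faithfulness of $\Otimes$ (Lemma~\ref{lem:fg}) is not needed for existence, only the functoriality of $\Otimes$.

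\textbf{Normalization.} Given a decomposition $f_1\cmp\cdots\cmp f_n$, we rewrite each factor.
\begin{itemize}
\item Factors $[0]\to[0]$ equal $\ident 0$, since $[0]$ is terminal, and are deleted.
\item Factors $[0]\to[1]$ are some $\delta_t$, factors $[1]\to[0]$ equal $\sigma$ by terminality, and factors $[1]\to[1]$ are put in the reduced form of Lemma~\ref{lem:boxdot1}.
\item A factor $[1]\to[1]$ that reduces to $\delta_t\circ\sigma$ is replaced by the two factors $\sigma\cmp\delta_t$, using Lemma~\ref{lem:ds}, $\delta_t\circ\sigma=\sigma\otimes\delta_t$.
\item An adjacent pair $\delta_t\cmp\sigma$ equals $\delta_t\otimes\sigma=\delta_t\circ\sigma=\sigma\otimes\delta_t$, again by Lemma~\ref{lem:ds}, so we swap it to $\sigma\cmp\delta_t$.
\end{itemize}

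\textbf{Termination (main obstacle).} The swaps move every $\sigma$ to the left of every $\delta_t$, like a bubble sort. Each swap strictly decreases the number of inversions (pairs with a $\delta$ before a $\sigma$), and the other rewrites do not create new $\delta\cmp\sigma$ adjacencies except by splitting a $\delta_t\circ\sigma$ factor. That split can be done once, before any swapping. Hence the process terminates in the required form: no $\delta_t\circ\sigma$ factors and no adjacent $\delta_t\cmp\sigma$ pair.

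I expect the main difficulty to be making the composition case of existence rigorous, namely the regrouping of factors when the two object sequences differ by $[0]$ entries. I would handle it by a lemma that every decomposition can be padded to any prescribed interleaving of $[0]$'s consistent with its flattened source and target.
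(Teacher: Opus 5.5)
Your proposal is correct and follows the paper's proof essentially step for step. Existence goes by induction on the generating expression, padding with $\ident 0$ factors to align the decompositions in the composition case. Normalization then deletes null identities, reduces the $[1]\to[1]$ factors via Lemma~\ref{lem:boxdot1}, and uses Lemma~\ref{lem:ds} to split $\delta_t\circ\sigma$ and swap each adjacent $\delta_t\cmp\sigma$ to $\sigma\cmp\delta_t$. Your inversion-count termination argument is a small addition the paper leaves implicit. Note that the swaps only rule out \emph{adjacent} $\delta_t\cmp\sigma$ pairs; they do not move every $\sigma$ left of every $\delta_t$, since an intervening $[1]\to[1]$ factor blocks the swap.
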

\begin{proof}
First we show that every morphism has a decomposition. Recall that the morphisms of $\boxdot$ are generated by morphisms of $\boxdot_1$ under composition and tensor product. The morphisms of $\boxdot_1$ are already decomposed. Preservation of decomposability under $\otimes$ is also obvious.

For composition, suppose $\hat f$ and $\hat g$ are decompositions of $f:[k]\to[\ell]$ and $g:[\ell]\to[m]$, respectively. By adding in extra null identities $\ident 0:[0]\to[0]$ as necessary, we can assume without loss of generality that $\hat f = f_1\cmp\cdots\cmp f_n$ and $\hat g = g_1\cmp\cdots\cmp g_n$, and for all $i$, $f_i:[c_i]\to[d_i]$ and $g_i:[d_i]\to[e_i]$, where $c_i,d_i,e_i\in\{0,1\}$. Then $g_i\circ f_i:[c_i]\to[e_i]$ and $(g_1\circ f_1)\cmp\cdots\cmp(g_n\circ f_n)$ is a decomposition of $g\circ f$:
\begin{align*}
\textstyle\Otimes_i (g_i\circ f_i) = (\Otimes_i g_i)\circ(\Otimes_i f_i) = g\circ f.
\end{align*}

We have shown that every morphism has a decomposition. Now we show how to reduce it to the form described in the statement of the lemma. Null identities $\ident 0:[0]\to[0]$ can be deleted. Any factor $[1]\to[0]$ must be $\sigma$. Any factor $[0]\to[1]$ must be some $\delta_t$. The remaining factors $[1]\to[1]$ can be reduced using Lemma \ref{lem:boxdot1}. Any factor $\delta_t\circ\sigma$ can be replaced with $\sigma\cmp\delta_t:[1]\cmp[0]\to[0]\cmp[1]$, and any adjacent pair $\delta_t\cmp\sigma:[0]\cmp[1]\to[1]\cmp[0]$ can be replaced with $\sigma\cmp\delta_t:[1]\cmp[0]\to[0]\cmp[1]$. These steps are justified by Lemma \ref{lem:ds}.
\exqed
\end{proof}

The following theorem shows that decompositions are unique up to the equations of Lemma \ref{lem:ds} and the addition/deletion of null identities $\ident 0$.
\begin{theorem}
\label{thm:decomp}
Let $f,g:[n]\to[m]$. The following are equivalent:
\begin{enumerate}[\upshape(i)]
\item
$f = g$
\item
for all $h:[0]\to[n]$, $f\circ h=g\circ h$
\item
for all $\seq t1n\in A$, $f\circ(\delta_{t_1}\otimes\cdots\otimes\delta_{t_n})=g\circ(\delta_{t_1}\otimes\cdots\otimes\delta_{t_n})$
\item
the decompositions of $f$ and $g$ in the reduced form described in Lemma \ref{lem:decomp} are identical.
\end{enumerate}
\end{theorem}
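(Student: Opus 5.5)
We prove the cycle (i)$\Rightarrow$(ii)$\Rightarrow$(iii)$\Rightarrow$(iv)$\Rightarrow$(i). Three of the arrows are immediate. (i)$\Rightarrow$(ii) is trivial. (ii)$\Rightarrow$(iii) is a specialization: every $\delta_{t_1}\otimes\cdots\otimes\delta_{t_n}$ is a morphism $[0]\to[n]$. (iv)$\Rightarrow$(i) holds because $\Otimes$ sends a decomposition to the morphism it decomposes, so identical decompositions give $f=\Otimes\hat f=\Otimes\hat g=g$. All the work is in (iii)$\Rightarrow$(iv).

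For (iii)$\Rightarrow$(iv), take the reduced decompositions $\hat f=f_1\cmp\cdots\cmp f_p$ and $\hat g=g_1\cmp\cdots\cmp g_q$ from Lemma \ref{lem:decomp}. Their sources are words over $\{[0],[1]\}$ with no $[0]$ factors, since null identities were deleted; these words are therefore the same string $[1]\cmp\cdots\cmp[1]$ of length $n$ at the source, and likewise at the target. It is convenient to think of the reduced form as a normal form: we rewrite each factor $\delta_t\circ\sigma$ and each adjacent pair $\delta_t\cmp\sigma$ as $\sigma\cmp\delta_t$, so that all $\sigma$'s move left of all $\delta$'s among the degenerate parts.

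Next, read the action of $f$ on points. For $\bar t=(\seq t1n)$, $f\circ(\delta_{t_1}\otimes\cdots\otimes\delta_{t_n})$ is computed factorwise. A factor $\sigma$ consumes a coordinate. A factor $\delta_s$ emits the constant $s$. A factor $u:[1]\to[1]$ applies $Ku$ to its coordinate. Thus $Kf$ on $A^n$ is a product map whose output coordinates are each either a constant, or $Ku_i$ of a single input coordinate. From this function we recover the decomposition in three steps. First, an input coordinate is consumed by $\sigma$ iff no output depends on it. Second, an output coordinate comes from $\delta_s$ iff it is constant, with value $s$; the factors $[1]\to[1]$ are not of the form $\delta_t\circ\sigma$, and the proof of Theorem \ref{thm:boxdot1} shows that reduced non-constant terms are non-constant on $A$. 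Third, the remaining outputs are matched to inputs monotonically, order being preserved because tensor concatenation preserves position. Each resulting unary function $Ku_i$ on $A$ determines $u_i$ syntactically by Theorem \ref{thm:boxdot1} (iii)$\Rightarrow$(iv). Since (iii) gives $Kf=Kg$ on $A^n$, the recovered data for $f$ and $g$ coincide.

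The main obstacle is the interleaving order of $\sigma$'s and $\delta$'s. For example, $\sigma\cmp\delta_t$ and $\delta_t\cmp\sigma$ have the same action, and this is exactly why the reduced form forbids $\delta_t$ immediately followed by $\sigma$. To show the reduced word is determined, I would argue that in a reduced decomposition the relative order of consumed inputs and emitted constants is fixed by the positions of the surviving non-degenerate factors. Between two consecutive non-degenerate factors, the reducedness condition forces the block to be all $\sigma$'s followed by all $\delta$'s. The number of $\sigma$'s in the block equals the number of dropped inputs between the corresponding input positions, and the $\delta$'s are the constants between the corresponding output positions. I expect this bookkeeping to be the step needing the most care.
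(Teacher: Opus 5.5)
Your proof is correct, but it is organized differently from the paper's.

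\textbf{The paper's route.} It inducts on the length of the reduced decomposition. It first shows that the leftmost factors of $\hat f$ and $\hat g$ have the same type ($\delta_t$, $\sigma$, or a non-degenerate $U:[1]\to[1]$). In each mismatched case it produces a dyadic point at which $f$ and $g$ differ after projecting onto the first output with $\ident 1\otimes\sigma^{\otimes(m-1)}$. It then strips off the first factor via Lemma~\ref{lem:fg} and recurses, with Theorem~\ref{thm:boxdot1} settling the base case.

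\textbf{Your route.} You recover the whole reduced word in one pass from the dependency structure of the product map $Kf$ on $A^n$. That structure consists of the dropped inputs, the constant outputs and their values, the monotone matching of the remaining coordinates, and the unary maps $Ku_i$. Reducedness is used only to force each maximal degenerate block into the shape $\sigma\cmp\cdots\cmp\sigma\cmp\delta_{s_1}\cmp\cdots\cmp\delta_{s_b}$.

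\textbf{What each buys.} The paper's argument needs only a few explicit projections and points. Yours makes it transparent why the $\sigma$-before-$\delta$ convention yields a normal form. It also covers a case the paper's analysis passes over. For $\hat f=\delta_t\cmp f'$ and $\hat g=\sigma\cmp g'$, the paper asserts that $\sigma\cmp\delta_s$ is disallowed, but Lemma~\ref{lem:decomp} actually forbids $\delta_t\cmp\sigma$. So $\hat g$ may continue with $\delta_t$, and the first outputs of $f$ and $g$ can agree. Your observation settles this: input $1$ is read by a non-degenerate factor of $\hat f$, because the initial run of $\delta$'s cannot be followed by $\sigma$, while input $1$ is dropped by $\hat g$.

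\textbf{Two small points.}
\begin{itemize}
\item Translating (iii) into $Kf=Kg$ on $A^n$ uses only the uniqueness of the tuple representing $\delta_{s_1}\otimes\cdots\otimes\delta_{s_m}$. That follows from Lemma~\ref{lem:fg} and the distinctness of the $\delta_s$. It does not use faithfulness of $K$, which the paper derives from this very theorem, so there is no circularity.
\item Your remark that the source words are the all-$[1]$ string is false. Each $\delta_t$ contributes a $[0]$ to the source word and each $\sigma$ a $[0]$ to the target word. Nothing later in your argument uses this remark, so simply drop it.
\end{itemize}
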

\begin{proof}
The implications (i) $\Imp$ (ii), (ii) $\Imp$ (iii), and (iv) $\Imp$ (i) are obvious. For (iii) $\Imp$ (iv), the proof is by induction. Let $\hat f$ and $\hat g$ be decompositions of $f$ and $g$, respectively, of the reduced form described in Lemma \ref{lem:decomp}.

First we show that if $f\circ(\delta_{t_1}\otimes\cdots\otimes\delta_{t_n})=g\circ(\delta_{t_1}\otimes\cdots\otimes\delta_{t_n})$ for all $\seq t1n\in A$, then the leftmost decomposition factors of $\hat f$ and $\hat g$ must be of the same type.

Suppose $\hat f = \delta_t\cmp f'$ and $\hat g=\sigma\cmp g'$. Since $m\ge 1$, there must be a leftmost factor of $\hat g$ with codomain $1$, say $U$, which cannot be any $\delta_s$ because the pattern $\sigma\cmp\delta_s$ is disallowed. Then $U$ must be of type $[1]\to[1]$, but not $\delta_s\circ\sigma$ for any $s$, as that is also disallowed. By Theorem \ref{thm:boxdot1}, there must exist $u,v\in A$ such that $U\circ\delta_u=\delta_v\ne\delta_t$. Post-composing and precomposing $f$ with $\ident 1\otimes\sigma^{\otimes(m-1)}$ and $\delta_u^{\otimes n}$, respectively, yields $\delta_t$. Doing the same with $g$ yields $\delta_v$. As $\delta_t\ne\delta_v$, we have $f\circ\delta_u^{\otimes n}\ne g\circ\delta_u^{\otimes n}$.

Similarly, suppose $\hat f=\delta_t\cmp f'$ and $\hat g = U\cmp g'$, where $U:[1]\to[1]$ is a reduced term of Lemma \ref{lem:boxdot1}, but not $\delta_s\circ\sigma$ for any $s$. As before, there must exist $u,v\in A$ such that $U\circ\delta_u=\delta_v\ne\delta_t$. Post-composing and precomposing $f$ and $g$ with $\ident 1\otimes\sigma^{\otimes(m-1)}$ and $\delta_u^{\otimes n}$ yields $\delta_t\ne\delta_v$, therefore $f\circ\delta_u^{\otimes n}\ne g\circ\delta_u^{\otimes n}$.

Suppose $\hat f = \sigma\cmp f'$ and $\hat g = U\cmp g'$, where $U:[1]\to[1]$ is a reduced term of Lemma \ref{lem:boxdot1}, but not $\delta_s\circ\sigma$ for any $s$. Post-composing and precomposing $f$ and $g$ with $\ident 1\otimes\sigma^{\otimes(m-1)}$ and $\ident 1\otimes\delta_0^{\otimes(n-1)}$ yields some $\sigma\otimes\delta_t$ for some $t$ and $U$. As before, there must exist $u,v\in A$ such that $U\circ\delta_u=\delta_v\ne\delta_t$. Now post-composing and precomposing $f$ and $g$ with $\ident 1\otimes\sigma^{\otimes(m-1)}$ and $\delta_u\otimes\delta_0^{\otimes(n-1)}$ yields $\delta_t\ne\delta_v$, therefore $f\circ(\delta_u\otimes\delta_0^{\otimes(n-1)})\ne g\circ(\delta_u\otimes\delta_0^{\otimes(n-1)})$.

We have shown that if $f\circ(\delta_{t_1}\otimes\cdots\otimes\delta_{t_n})=g\circ(\delta_{t_1}\otimes\cdots\otimes\delta_{t_n})$ for all $\seq t1n\in A$, then the leftmost decomposition factors of $\hat f$ and $\hat g$ must be of the same type. If one of $f,g$ contains only one decomposition factor, then so does the other. In that case the decompositions are identical by Theorem \ref{thm:boxdot1}. Otherwise, if $\hat f = f_1\cmp f'$ and $\hat g = g_1\cmp g'$, then $\Otimes f'$ and $\Otimes g'$ also have the same type. By Lemma \ref{lem:fg}, $f_1=g_1$ and $\Otimes f'=\Otimes g'$, therefore $f'$ and $g'$ also satisfy (iii). By the induction hypothesis, $f'=g'$, therefore $\hat f=\hat g$.
\exqed
\end{proof}

It follows that closure under composition and tensor product do not add any new morphisms to $\boxdot_1$ than already existed from closure of $\delta_t$, $\sigma$, $L$, and $R$ under $\circ$ and $\med$.

\section{Higher Identity Types as Cells and Iterated Path Spaces}
\label{apx:ident}

In traditional HoTT, equations at a type $A$ are witnessed by \emph{paths}, where $A$ is a topological space and a \emph{path} is a continuous functions $I\to A$. The higher identity types over a base type $X$ are generated inductively as spaces of paths:
\begin{itemize}
\item Level 0: points in $X$,
\item Level 1: $\Id Xab = \{\text{paths $p:I\to X$ such that $p(0)=a$ and $p(1)=b$}\}$,
\item Level 2: $\Id{\Id Xab}pq = \{\text{paths $\alpha:I\to\Id Xab$ such that $\alpha(0)=p$ and $\alpha(1)=q$}\}$,
\item Level 3: \parbox[t]{10cm}{$\Id{\Id{\Id Xab}pq}\alpha\beta$\\
$= \{\text{paths $\phi:I\to\Id{\Id Xab}pq$ such that $\phi(0)=\alpha$ and $\phi(1)=\beta$}\}$,}
\end{itemize}
etc. These ``path spaces'' are themselves topological spaces when equipped with the \emph{compact-open topology}, defined as follows. For $X$ and $K$ topological spaces with $K$ compact and Hausdorff ($I$ in our exanple), let $X^K$ denote the set of continuous functions $K\to X$. A subbasic open set of the compact-open topology on $X^K$ is a set of the form $U^C = \set{f\in X^K}{C\subs f^{-1}(U)}$ for $C\subs K$ closed (thus compact) and $U\subs X$ open.

Although the category $\Top$ of topological spaces and continuous functions is famously not cartesian closed, it is cartesian (has finite products), and if $K$ is compact and Hausdorff, then the functor $(-)^K:\Top\to\Top$ is right adjoint to $(-)\times K$. For an object $A$ of $\Top$, the space $A^K$ with the compact-open topology is the internal hom corresponding to the hom-set $K\to A$. The action of $(-)^K$ on morphisms $g:A\to B$ post-composes with $g$; that is, $g^K = g\circ- : A^K\to B^K$. One must verify that $g^K$ is continuous with respect to the compact-open topology on $A^K$ and $B^K$, but this is easy to check: for a subbasic open set $U^C$ of $B^K$ where $C\subs K$ is compact and $U\subs B$ is open,
\begin{align*}
(g^K)^{-1}(U^C)
&= (g\circ-)^{-1}(\set f{C\subs f^{-1}(U)})
= \set h{g\circ h \in \set f{C\subs f^{-1}(U)})}\\
&= \set h{C\subs (g\circ h)^{-1}(U)}
= \set h{C\subs h^{-1}(g^{-1}(U))}
= g^{-1}(U)^C,
\end{align*}
a subbasic open set of $A^K$.

That $(-)\times K$ is left adjoint to $(-)^K$ is a consequence of the following lemma.
\begin{lemma}
\label{lem:compactopen}
Let $A,B$, and $K$ be topological spaces with $K$ compact and Hausdorff.
Let $h:A\times K\to B$ be a function and let $h':A\to K\to B$ be its curried form $h'(a)(x)=h(a,x)$. The following are equivalent:
\begin{enumerate}[{\upshape(i)}]
\item
$h$ is continuous with respect to the product topology on $A\times K$;
\item
for all $a\in A$, $h'(a):K\to B$ is continuous (that is, $h':A\to B^K$) and $h'$ is continuous with respect to the compact-open topology on $B^K$.
\end{enumerate}
\end{lemma}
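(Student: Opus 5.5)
We prove the two implications separately, working directly with the definition of the compact-open topology by subbasic sets $U^C$. The hypothesis that $K$ is compact Hausdorff will be used only in (ii)$\Imp$(i); there we use it in the form of local compactness: every neighborhood of a point of $K$ contains a compact (closed) neighborhood of that point.

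\textbf{(i)$\Imp$(ii).} Assume $h$ is continuous. For fixed $a$, the map $h'(a)$ is $h$ composed with the continuous inclusion $x\mapsto(a,x)$, so it is continuous. For continuity of $h'$, fix a subbasic open set $U^C$ and a point $a$ with $h'(a)\in U^C$. Then $\{a\}\times C\subs h^{-1}(U)$, and $h^{-1}(U)$ is open in $A\times K$. Because $C$ is closed in the compact space $K$, it is compact. The tube lemma therefore gives an open $V\ni a$ with $V\times C\subs h^{-1}(U)$. This says $h'(V)\subs U^C$, so $h'^{-1}(U^C)$ is open. This direction needs only compactness of $C$.

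\textbf{(ii)$\Imp$(i).} Assume the conditions of (ii), and let $(a,x)$ be a point with $h(a,x)\in U$, where $U$ is open. By continuity of $h'(a)$, the set $h'(a)^{-1}(U)$ is an open neighborhood of $x$. A compact Hausdorff space is regular, so there is an open $W\ni x$ whose closure $C=\overline W$ satisfies $C\subs h'(a)^{-1}(U)$. This $C$ is closed, so $U^C$ is a subbasic open set of $B^K$, and it contains $h'(a)$. By continuity of $h'$, there is an open $V\ni a$ with $h'(V)\subs U^C$. Then $V\times W\subs V\times C\subs h^{-1}(U)$, so $h^{-1}(U)$ is open and $h$ is continuous.

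\textbf{Main obstacle.} The only delicate step is producing the closed (compact) neighborhood $C$ of $x$ inside $h'(a)^{-1}(U)$. This is exactly where Hausdorffness enters: compact Hausdorff implies regular, which is what allows $C=\overline W$ to be chosen inside $h'(a)^{-1}(U)$. The paper's subbasis uses closed sets $C$, and in a compact Hausdorff space closed and compact subsets coincide, so $U^C$ is a genuine subbasic open set. The remaining steps are routine.
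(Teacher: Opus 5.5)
Your proof is correct and follows the paper's argument. In (i)$\Imp$(ii) you cite the tube lemma, which the paper proves inline with a finite subcover. In (ii)$\Imp$(i) you use regularity of $K$ to get $C=\overline W$ exactly as the paper does; the paper merely packages that step as continuity of the evaluation map $\eps_B$ and then writes $h=\eps_B\circ(h'\times 1_K)$.

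One small slip: your opening remark that compactness of $K$ is used only in (ii)$\Imp$(i) contradicts your own (i)$\Imp$(ii). There, compactness of $K$ is what makes the closed set $C$ compact.
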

\begin{proof}
(i) $\Imp$ (ii). Suppose $h:A\times K\to B$ is continuous. We first show that for all $a\in A$, $h'(a):K\to B$ is continuous. For any open set $U\subs B$, by elementary set theory, $h'(a)^{-1}(U) = \set{x\in K}{(a,x)\in h^{-1}(U)}$. This is a section of an open set in a product space, therefore is open.

To show $h':A\to B^K$ is continuous, let $U^C$ be a subbasic open set in $B^K$.
By elementary set theory, $a\in{h'}^{-1}(U^C)$ iff $\{a\}\times C\subs h^{-1}(U)$. Since $h^{-1}(U)$ is open in $A\times K$, for all $x\in C$ there exist basic open sets $U_x\times V_x\subs h^{-1}(U)$ such that $a\in U_x$ and $x\in V_x$. By compactness, there is a finite set $F\subs C$ such that $C\subs\bigcup_{x\in F} V_x$ and $a\in\bigcap_{y\in F} U_x$, thus
\begin{align*}
\{a\}\times C \subs \bigcap_{y\in F} U_y\times\bigcup_{x\in F} V_x = \bigcup_{x\in F}\bigcap_{y\in F} U_y\times V_x \subs h^{-1}(U).
\end{align*}
Again by elementary set theory, $a\in\bigcap_{y\in F} U_y\subs {h'}^{-1}(U^C)$.

(ii) $\Imp$ (i). Suppose $h':A\to B^K$ is continuous. Let $\eps_B:B^K\times K\to B$ be the application function $\eps_B(f,x)=f(x)$. We first show that $\eps_B$ is continuous. For any $U\subs B$ open, $\eps_B^{-1}(U) = \set{(f,x)}{f(x)\in U}$. If $(f,x)\in \eps_B^{-1}(U)$, then $f:K\to B$ is continuous, so $f^{-1}(U)$ is open and $x\in f^{-1}(U)$. Since $K$ is regular, there exists an open set $V$ with closure $C$ such that $x\in V\subs C\subs f^{-1}(U)$. Then $f\in U^C$ since $C\subs f^{-1}(U)$, so $(f,x)\in U^C\times V\subs U^C\times C\subs \eps_B^{-1}(U)$, and $U^C\times V$ is open, therefore $\eps_B$ is continuous.

Now for all $(a,x)\in A\times K$,
\begin{align*}
(\eps_B\circ(h'\times 1_K))(a,x) &= \eps_B(h'(a),x)
= h'(a)(x)
= h(a,x),
\end{align*}
so $h = \eps_B\circ(h'\times 1_K)$. This is a composition of continuous functions, so $h$ is continuous.
\exqed
\end{proof}

The adjunction says that there is a bijection $\curry^K_{AB}$, natural in $A$ and $B$, that takes a continuous map $h:A\times K\to B$ to its curried form $\curry^K_{AB}(h):A\to B^K$. The inverse of $\curry^K_{AB}$ is denoted $\uncurry^K_{AB}$. The unit and counit are $\eta^K_A(a)(x) = (a,x)$ and $\eps^K_B(f,x) = f(x)$, respectively.

But one can say more. In case $A$ is also compact and Hausdorff, say $A=K_1$ and $K=K_2$, then $K_1\times K_2$ is compact and Hausdorff, therefore $K_1\times K_2\to B$ and $K_1\to B^{K_2}$ with the compact-open topology become $B^{K_1\times K_2}$ and $(B^{K_2})^{K_1}$, and we might ask whether the bijection $\curry^{K_2}_{K_1B}$ is a homeomorphism $B^{K_1\times K_2}\to (B^{K_2})^{K_1}$. It turns out that it is:

\begin{lemma}
\label{lem:curry}
Let $K_1,K_2$ be compact and Hausdorff and $B$ an arbitrary space. Then $\curry^{K_2}_{K_1B}:B^{K_1\times K_2}\to (B^{K_2})^{K_1}$ is a homeomorphism.
\end{lemma}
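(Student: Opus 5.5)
We already know from Lemma~\ref{lem:compactopen} that $\curry=\curry^{K_2}_{K_1B}$ is a bijection between $B^{K_1\times K_2}$ and $(B^{K_2})^{K_1}$. Indeed, a function $h:K_1\times K_2\to B$ is continuous iff $h'$ lands in $B^{K_2}$ and is continuous. So what remains is to show that $\curry$ and $\uncurry$ are both continuous for the compact-open topologies. The plan is to compare subbases directly.

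The source space $B^{K_1\times K_2}$ has subbasic opens $U^C$ with $C\subs K_1\times K_2$ compact and $U\subs B$ open. The target $(B^{K_2})^{K_1}$ has subbasic opens $W^{D}$ with $D\subs K_1$ compact and $W$ open in $B^{K_2}$. Since preimages commute with finite intersections and arbitrary unions, it suffices to check $W$ ranging over subbasic opens $U^E$, with $E\subs K_2$ compact.

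\emph{Continuity of $\curry$.} Fix $D,E,U$. The key step is the identity
\[
\curry^{-1}\bigl((U^E)^D\bigr)=U^{D\times E}.
\]
This holds because $h'(a)\in U^E$ for all $a\in D$ iff $h(D\times E)\subs U$. The set $D\times E$ is compact, being a product of compacts, so the right-hand side is subbasic open. This direction is purely set-theoretic.

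\emph{Continuity of $\uncurry$.} This is the main obstacle, because a compact $C\subs K_1\times K_2$ need not be a rectangle. Fix $h$ with $h(C)\subs U$; we seek an open neighbourhood of $h'$ in $(B^{K_2})^{K_1}$ contained in $\curry(U^C)$. Since $h$ is continuous, $h^{-1}(U)$ is an open set containing $C$. For each point $(a,x)\in C$, regularity of the compact Hausdorff space $K_1\times K_2$ gives compact neighbourhoods $D_i\ni a$ and $E_i\ni x$ with $D_i\times E_i\subs h^{-1}(U)$, whose interiors cover $C$. By compactness of $C$, finitely many $D_1\times E_1,\dots,D_r\times E_r$ suffice.

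Then the set
\[
\bigcap_{i\le r}(U^{E_i\cap\pi_2(C\cap(D_i\times E_i))})^{D_i}
\]
is an open neighbourhood of $h'$. To avoid the awkward projections, it is simpler to use $(U^{E_i})^{D_i}$ directly. This neighbourhood contains $h'$, since $h(D_i\times E_i)\subs U$. Moreover, any $g'$ in it satisfies $g(D_i\times E_i)\subs U$ for each $i$, hence $g(C)\subs U$. So $\uncurry$ of this neighbourhood lies in $U^C$, and $\uncurry$ is continuous.

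The points that need care are the existence of compact rectangular neighbourhoods inside $h^{-1}(U)$. This follows from local compactness and Hausdorffness of $K_1$ and $K_2$: first take an open rectangle inside $h^{-1}(U)$, then shrink each factor to a closed, hence compact, neighbourhood using regularity. The Hausdorff hypothesis also ensures that "compact" and "closed" agree for the subbasis convention used in the paper.
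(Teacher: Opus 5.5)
Your $\uncurry$ direction is correct, and so is the identity $\curry^{-1}((U^E)^D)=U^{D\times E}$. The gap is the reduction you make before that identity. To show $\curry$ is continuous you need $\curry^{-1}(W^D)$ open for \emph{every} open $W\subs B^{K_2}$, not only for subbasic $W=U^E$. The fact that preimages commute with unions and finite intersections would justify restricting to subbasic $W$ only if the operation $W\mapsto W^D$ also commuted with them. It does commute with finite intersections, $(W_1\cap W_2)^D=W_1^D\cap W_2^D$, but not with unions. In general $(\bigcup_i W_i)^D$ is strictly larger than $\bigcup_i W_i^D$, because a map can send the compact set $D$ into $\bigcup_i W_i$ without sending it into any single $W_i$.

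What you actually need is the lemma that the sets $(U^E)^D$ form a subbasis of $(B^{K_2})^{K_1}$, and this takes an argument. Given $k$ with $k(D)\subs W$, proceed as follows. For each $a\in D$ choose a basic open $N_a=\bigcap_j U_{aj}^{E_{aj}}$ with $k(a)\in N_a\subs W$. Using continuity of $k$ and regularity of $K_1$, pick a compact neighbourhood $D_a$ of $a$ with $k(D_a)\subs N_a$. Extract finitely many $D_a$ whose interiors cover $D$. Then $k\in\bigcap_{a,j}(U_{aj}^{E_{aj}})^{D_a\cap D}\subs W^D$. This is the same finite-cover and regularity pattern you already use for $\uncurry$, and it genuinely uses the Hausdorff hypothesis on $K_1$. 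Once you supply it, your proof is complete.

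With that repair, your subbasis comparison is a valid alternative to the paper's proof, which never manipulates subbases here. The paper writes $\curry^{K_2}_{K_1B}$ and $\uncurry^{K_2}_{K_1B}$ as iterated $\curry$ and $\uncurry$ applied to identity maps, i.e.\ to evaluation maps. It then lets Lemma~\ref{lem:compactopen} supply continuity at each step. That lemma already handles arbitrary open sets of the codomain $B^K$, so the reduction issue above never arises.

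Your route is elementary and self-contained, and it shows exactly where regularity and non-rectangular compact sets enter. The cost is redoing compactness arguments that the paper gets for free from the one-variable exponential law.
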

\begin{proof}
We already know that $\curry^{K_2}_{K_1B}$ is bijective on the hom-sets with inverse $\uncurry^{K_2}_{K_1B}$, so we need only show that both maps are continuous with respect to the compact-open topology.
This follows from the observation that
\begin{align*}
\curry^{K_2}_{K_1B} &= \curry^{K_1}_{B^{K_1\times K_2}B^{K_2}}(\curry^{K_2}_{B^{K_1\times K_2}\times K_1,B}(\uncurry^{K_1\times K_2}_{B^{K_1\times K_2}B}(1_{B^{K_1\times K_2}})))\\
&\quad : B^{K_1\times K_2}\to(B^{K_2})^{K_1}\\
\uncurry^{K_2}_{K_1B} &= \curry^{K_1\times K_2}_{(B^{K_2})^{K_1}B}(\uncurry^{K_2}_{(B^{K_2})^{K_1}\times K_1B}(\uncurry^{K_1}_{(B^{K_2})^{K_1}B^{K_2}}(1_{(B^{K_2})^{K_1}})))\\
&\quad : (B^{K_2})^{K_1}\to B^{K_1\times K_2}
\end{align*}
and the fact that $\curry$ and $\uncurry$ map continuous functions to continuous functions.
\exqed
\end{proof}

The adjunction leads to a homeomorphism between the iterated path and cubical views of higher identity types. 
Given any topological space $B$ and $a,b\in B$, we have defined $\Id Bab$ to be the space of continuous paths $p:I\to B$ with $p(0)=a$ and $p(1)=b$ with the compact-open topology. The family of all HoTT path spaces representing higher identity types relative to an ambient space $A$ is defined inductively. The identity types at level $n$ will be denoted $P_n$.
\begin{itemize}
\item
$P_0 = \{A\}$.
\item
If $B\in P_n$ and $\alpha,\beta\in B$, then $\Id B\alpha\beta\in P_{n+1}$.
\end{itemize}
Note that the elements of $P_n$ are pairwise disjoint: For $\Id Bab$ and $\Id{B'}{a'}{b'}$, if $B\ne B'$ then $B\cap B'=\emptyset$ inductively, therefore $\Id Bab\cap\Id{B'}{a'}{b'}=\emptyset$, and if $B=B'$ but $(a,b)\ne(a',b')$, then $\Id Bab\cap\Id B{a'}{b'}=\emptyset$ as well.
 
Each element of $P_n$ is a subspace of a larger space $V_n$, also defined inductively:
\begin{itemize}
\item
$V_0=A$
\item
$V_{n+1} = \{\text{continuous paths $I\to V_n$}\}$ with the compact-open topology.
\end{itemize}
Thus $V_n = \underbrace{I \to I \to \cdots \to I}_n \to A$. A continuous path in the space $V_n$ is a point in the space $V_{n+1}$.

Now we claim that each element of $P_n$ is a subspace of $V_n$. This is shown by induction. The only space in $P_0$ is $A = V_0$, and $A$ is a subspace of itself. An element of $P_{n+1}$ is a space of the form $\Id Bpq$ consisting of certain paths $\alpha:I\to B$, where $B\in P_n$, and these $\alpha$ are continuous in the topology on $B$. By the induction hypothesis, $B$ is a subspace of $V_n$, therefore the open sets of $B$ are exactly the open sets of $V_n$ intersected with $B$. It follows that $\alpha:I\to V_n$ and is also continuous in the topology on $V_n$, thus $\alpha\in V_{n+1}$.

We have shown that $\Id Bpq\subs V_{n+1}$, but to show that it is a subspace, we need to argue that the open sets of $\Id Bpq$ are exactly the open sets of $V_{n+1}$ intersected with $\Id Bpq$. By naive set theory, it suffices to show this for subbasic open sets of $\Id Bpq$ and $V_{n+1}$. A subbasic open set of $\Id Bpq$ is of the form $U^C = \set{f\in\Id Bpq}{C\subs f^{-1}(U)}$, where $U\subs B$ is open and $C\subs I$ is compact. Since $B$ is a subspace of $V_n$, $U = B\cap W$ for some open $W\subs V_n$. Since elements of $\Id Bpq$ only take values in $B$,
\begin{align*}
W^C\cap\Id Bpq &= \set{f\in\Id Bpq}{C\subs f^{-1}(W)}\\
&= \set{f\in\Id Bpq}{C\subs f^{-1}(W\cap B)}\\
&= \set{f\in\Id Bpq}{C\subs f^{-1}(U)} = U^C.
\end{align*}
Thus the subbasic open sets of $\Id Bpq$ are exactly the subbasic open set of $V_{n+1}$ intersected with $\Id Bpq$.

Let $C_n$ be the set of $n$-cells in our sense over the ambient space $A$. The formal definition is
\begin{itemize}
\item
$C_0 = A$
\item
$C_1$ is the set of continuous functions $I\to A$
\item
For $n\ge 2$, $C_n$ consists of all continuous functions $\alpha:I^n \to A$ such that
\begin{itemize}
\item
for all all $t\in I$, $\alpha\s1t\in C_{n-1}$; and
\item
$\alpha\s20 = \id\src^2\alpha$ and $\alpha\s21 = \id\tgt^2\alpha$.
\end{itemize}
\end{itemize}
Let $H_0 = A$ and for $n\ge 1$, let $H_n = A^{I^n}$, the space of all continuous functions $I^n\to A$ with the compact-open topology. Each $C_n$ is a subset of $H_n$, and we topologize it with the relative topology.

\begin{lemma}
For all $n\ge 0$, $H_n$ and $V_n$ are homeomorphic.
\end{lemma}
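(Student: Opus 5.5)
We argue by induction on $n$, using Lemma~\ref{lem:curry} as the inductive step. For $n=0$ both spaces are $A$ and the identity map is a homeomorphism. For $n=1$ both are $A^I$ with the compact-open topology, so again there is nothing to prove.

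For the inductive step, assume a homeomorphism $\phi_n:H_n\to V_n$ has been constructed, and build $\phi_{n+1}$ in three stages.

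\begin{enumerate}[(1)]
\item Identify $I^{n+1}$ with $I\times I^n$, with the first coordinate being time as in \S\ref{sec:homotopymodels}. Both factors are compact Hausdorff, so Lemma~\ref{lem:curry} with $K_1=I$ and $K_2=I^n$ shows that
\begin{align*}
\curry^{I^n}_{I,A}:A^{I\times I^n}\to (A^{I^n})^{I}
\end{align*}
is a homeomorphism. Its domain is $H_{n+1}$ and its codomain is $H_n^I$. In terms of the paper's notation it sends $\alpha$ to $\alpha^{(1)}=\lam t{\alpha\s1t}$.
\item Apply the functor $(-)^I$ to $\phi_n$. Post-composition $\phi_n^I=\phi_n\circ-$ is continuous for the compact-open topology; this was checked above when we verified that $(-)^K$ acts on morphisms. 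Functoriality gives $(\phi_n^{-1})^I\circ\phi_n^I=1$ and $\phi_n^I\circ(\phi_n^{-1})^I=1$. Hence $\phi_n^I:H_n^I\to V_n^I$ is a homeomorphism.
\item By definition, $V_n^I$ is the space of continuous paths $I\to V_n$ with the compact-open topology, which is exactly $V_{n+1}$. Set
\begin{align*}
\phi_{n+1}=\phi_n^I\circ\curry^{I^n}_{I,A},
\end{align*}
a composite of homeomorphisms and therefore a homeomorphism.
\end{enumerate}

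Explicitly, $\phi_{n+1}(\alpha)(t)=\phi_n(\alpha\s1t)$. Unwinding the recursion, $\phi_n$ is the iterated currying $\alpha\mapsto\lam{x_1}{\lam{x_2}{\cdots\lam{x_n}{\alpha(\seq x1n)}}}$. We should record this formula, because the next step (restricting $\phi_n$ to cells and matching them with $P_n$) will need it.

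The only delicate point is the bookkeeping of product identifications. We must fix the homeomorphism $I^{n+1}\cong I\times I^n$ and note that $A^{(-)}$ carries homeomorphic exponents to homeomorphic function spaces; this follows from the same post-composition continuity argument, applied contravariantly to precomposition. Beyond this, the only hypothesis to check is compact Hausdorffness of $I^n$, and a finite product of compact Hausdorff spaces is compact Hausdorff. No other step should pose difficulty.
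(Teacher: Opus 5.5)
Your proof is correct and is essentially the paper's argument: the paper defines $h_n = h_{n-1}^{I}\circ\curry^{I^{n-1}}_{I,A}$, that is, $h_n(\alpha)=h_{n-1}\circ\alpha^{(1)}$, and concludes using Lemma~\ref{lem:curry}, the fact that $(-)^I$ preserves homeomorphisms, and $V_{n-1}^I=V_n$. Your extra care only fills in details the paper leaves implicit: treating $n=1$ as a separate base case sidesteps the $I^0$ identification, and you justify via functoriality why $\phi_n^I$ is a homeomorphism.
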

\begin{proof}
We define homeomorphisms $h_n:H_n\to V_n$ at each level.
\begin{itemize}
\item
$h_0:H_0\to V_0$ is the identity function on $A$.
\item
For $n\ge 1$, $h_n = h_{n-1}^{I}\circ\curry^{I^{n-1}}_{I,A} = \lambda\alpha.h_{n-1}\circ\alpha^{(1)}:H_n \to V_n$.
\end{itemize}
Let us check the inductive case. For $n\ge 1$, any function $\alpha\in H_n$ is of type $I^n\to A$, which is the same as $I\times I^{n-1}\to A$. Currying on the first argument, we get $\curry^{I^{n-1}}_{I,A}(\alpha) = \alpha^{(1)}:I \to I^{n-1}\to A$. By Lemma \ref{lem:curry}, $\curry$ is a homeomorphism
\begin{align*}
\curry^{I^{n-1}}_{I,A}:A^{I^n}\to(A^{I^{n-1}})^{I} = H_n\to H_{n-1}^{I}.
\end{align*}
By the induction hypothesis, $h_{n-1}:H_{n-1}\to V_{n-1}$ is a homeomorphism, therefore so is $h_{n-1}^{I}:H_{n-1}^{I}\to V_{n-1}^{I}$. But $V_{n-1}^{I}$ is just $V_n$. Composing these, we get a homeomorphism
\begin{align*}
h_{n-1}^{I}\circ\curry^{I^{n-1}}_{I,A}:H_n\to V_n.
\extagqed
\end{align*}
\end{proof}

Let us call $\alpha,\beta\in C_n$ \emph{boundary equivalent} and write $\alpha \bdry n \beta$ if $\alpha\s ib = \beta\s ib$ for all $1\le i\le n$ and $b\in\{0,1\}$. Let $[\alpha]$ denote the $\bdry n$-equivalence class of $\alpha$. Let $\Pinh_n$ denote the set of inhabited identity types of $P_n$.

\begin{theorem}
The equivalence classes of $\sim_n$ correspond bijectively to $\Pinh_n$. 
Concretely, there is a bijection $g_n : C_n/{\bdry n} \to \Pinh_n$ such that for every $\alpha\in C_n$, the map $h_n\rest[\alpha] : [\alpha] \;\longrightarrow\; g_n([\alpha])$ is a homeomorphism. In particular, the image of $C_n$ under $h_n$ is $\bigcup P_n$, and $h_n$ maps each boundary equivalence class homeomorphically to its corresponding identity type.
\end{theorem}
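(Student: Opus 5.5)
I will prove the statement by induction on $n$, constructing $g_n$ and verifying the homeomorphism property simultaneously. The natural candidate is to let $g_n([\alpha])$ be the identity type of $P_n$ that contains $h_n(\alpha)$. So the induction hypothesis at level $n$ reads:

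(a) $h_n(C_n)=\bigcup\Pinh_n$;

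(b) for $\alpha,\beta\in C_n$, $h_n(\alpha)$ and $h_n(\beta)$ lie in the same member of $P_n$ iff $\alpha\bdry n\beta$.

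Disjointness of the members of $P_n$, noted above, makes $g_n$ well defined and injective. Inhabitedness makes it surjective. Since $h_n$ is a homeomorphism $H_n\to V_n$ and each member of $P_n$ is a subspace of $V_n$, the restriction $h_n\rest[\alpha]$ is automatically a homeomorphism onto its image. The whole content is therefore the set-theoretic statements (a) and (b).

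The base cases are immediate. For $n=0$, $C_0=A$, $P_0=\{A\}$, and $\bdry 0$ is trivial. For $n=1$, $C_1=H_1$, $h_1(\alpha)=\alpha\in\Id A{\alpha(0)}{\alpha(1)}$, and boundary equivalence is exactly agreement of endpoints.

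For the inductive step I will use $h_{n+1}(\alpha)=h_n\circ\alpha^{(1)}$ and characterize cells via Lemma~\ref{lem:cellprops}(iii). Let $\alpha\in C_{n+1}$. Each section $\alpha\s1t$ lies in $C_n$, and $\alpha\s1t\bdry n\alpha\s10$ for all $t$: Lemma~\ref{lem:cellprops}(iii) says the faces $\alpha\s{i+1}b$ equal $\id^i\src^{i+1}\alpha$ or $\id^i\tgt^{i+1}\alpha$, which are independent of $t$. By (b), the path $t\mapsto h_n(\alpha\s1t)$ stays in the single type $B=g_n([\src\alpha])$. It is continuous into $B$ because $B$ carries the subspace topology. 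Hence $h_{n+1}(\alpha)\in\Id B{h_n(\src\alpha)}{h_n(\tgt\alpha)}$, which gives one inclusion of (a).

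Conversely, let $p\in\Id Bpq$ with $B\in\Pinh_n$. Set $\alpha=h_{n+1}^{-1}(p)$; its sections $\alpha\s1t=h_n^{-1}(p(t))$ all lie in one $\bdry n$-class by (b) and (a). I then need $\alpha\in C_{n+1}$, namely that $\alpha\s2b$ is an identity. This holds because $\alpha\s2b\s1t=(\alpha\s1t)\s1b$ is a face of a member of a fixed boundary class, hence independent of $t$.

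For (b) at level $n+1$: two cells give paths in the same $\Id B{a}{b}$ iff their sources are boundary equivalent (same $B$) and their sources and targets agree. I must match this with $\alpha\bdry{n+1}\beta$. The faces $\alpha\s10,\alpha\s11$ are exactly $\src\alpha,\tgt\alpha$. The remaining faces $\alpha\s{i+1}b=\id\,(\src\alpha)\s ib$ are determined by the boundary of $\src\alpha$, by the computation above. So $\alpha\bdry{n+1}\beta$ iff $\src\alpha=\src\beta$ and $\tgt\alpha=\tgt\beta$. Equal sources then make $B$ coincide automatically.

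The main obstacle I expect is this face bookkeeping: the reindexing $\alpha\s{i+1}b\s1t=\alpha\s1t\s ib$ and its consistency with Lemma~\ref{lem:cellprops}(iii). A secondary subtlety is continuity of $p$ into $B$ versus into $V_n$, which the subspace argument preceding the lemma already handles.
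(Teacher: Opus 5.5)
Your proof is correct and follows essentially the paper's route. Both argue by induction on $n$ through $h_{n+1}(\alpha)=h_n\circ\alpha^{(1)}$, use Lemma~\ref{lem:cellprops}(iii) to place every section $\alpha\s1t$ in the boundary class of $\src\alpha$ so that $h_{n+1}(\alpha)$ lands in $\Id B{h_n(\src\alpha)}{h_n(\tgt\alpha)}$, and conversely pull $p$ back along $h_{n+1}^{-1}$ and check that $\alpha\s2b\s1t=(\alpha\s1t)\s1b$ is independent of $t$.

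The one substantive difference is that you carry the biconditional (b) through the induction, resting on the observation that $\alpha\bdry{n+1}\beta$ iff $\src\alpha=\src\beta$ and $\tgt\alpha=\tgt\beta$. This gives injectivity of $g_n$ and surjectivity of each $h_n\rest[\alpha]$ directly. The paper instead defines $g_n([\alpha])=\Id{g_{n-1}([\src\alpha])}{h_{n-1}(\src\alpha)}{h_{n-1}(\tgt\alpha)}$ and infers injectivity from injectivity of $h_n$. By itself that only shows the sets $h_n([\alpha])$ are disjoint, so your bookkeeping is the more careful of the two.
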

\begin{proof}
The maps $g_n$ are defined inductively. For $n=0$, all elements of $A$ are vacuously equivalent under $\bdry 0$, so $A$ is the unique equivalence class of $\bdry 0$, and there is only one identity type $A$ at level 0, thus we define $g_0(A)=A$.

Now given $\alpha\in C_n$, we have by results of our paper that $\alpha\s k0 = \id^{k-1}\src^k\alpha$ and 
$\alpha\s k1 = \id^{k-1}\tgt^k\alpha$. For $k\ge 2$, these are identities. Thus for all $1\le i\le n-1$ and $t\in I$,
\begin{align*}
\alpha\s 1t\s i0
&= \alpha\s{i+1}0\s 1t
= \alpha\s{i+1}0\s 10\\
\alpha\s 1t\s i1
&= \alpha\s{i+1}1\s 1t
= \alpha\s{i+1}1\s 10.
\end{align*}
As these are independent of $t$, we have for all $s,t\in I$, $1\le i\le n-1$, and $b\in\{0,1\}$,
\begin{align*}
\alpha\s 1s\s ib
= \alpha\s 1t\s ib,
\end{align*}
therefore $\alpha\s 1s \bdry{n-1} = \alpha\s 1t$. In particular, for all $t\in I$,
\begin{align}
\alpha\s 1t \bdry{n-1} \src\alpha.
\label{eq:slam}
\end{align}
Moreover, if $\alpha \bdry n\beta$, then
\begin{align*}
\src\alpha &= \alpha\s10 = \beta\s10 = \src\beta &
\tgt\alpha &= \alpha\s11 = \beta\s11 = \tgt\beta.
\end{align*}
We therefore define 
\begin{align*}
g_n([\alpha]) = \Id{g_{n-1}([\src\alpha])}{h_{n-1}(\src\alpha)}{h_{n-1}(\tgt\alpha)},
\end{align*}
and this map is well defined.

Now we show by induction that $h_n$ maps $[\alpha]$ as a subspace of $H_n$ homeomorphically to $g_n([\alpha])$ as a subspace of $V_n$. For $n=0$, for all $a\in A$, $h_0(a) = a \in A = g_0(A)$. For $n\ge 1$, by definition, $h_n(\alpha) = h_{n-1}\circ\alpha^{(1)}$, thus for all $t\in I$,
\begin{align*}
h_n(\alpha)(t)
&= h_{n-1}(\alpha^{(1)}(t))\\
&\in g_{n-1}([\alpha^{(1)}(t)]) && \text{inductive hypothesis}\\
&= g_{n-1}([\src\alpha]) && \text{by \eqref{eq:slam}.}
\end{align*}
Also,
\begin{align*}
h_n(\alpha)(0) &= h_{n-1}(\alpha^{(1)}(0)) = h_{n-1}(\src\alpha) & 
h_n(\alpha)(1) &= h_{n-1}(\alpha^{(1)}(1)) = h_{n-1}(\tgt\alpha).
\end{align*}
Combining these facts, we have
\begin{align}
h_n(\alpha) \in \Id{g_{n-1}([\src\alpha])}{h_{n-1}(\src\alpha)}{h_{n-1}(\tgt\alpha)} = g_n([\alpha]).
\label{eq:slam2}
\end{align}

It remains to show that $g_n:C_n/{\bdry n}\to\Pinh_n$ and each $h_n\rest[\alpha]:[\alpha]\to g_n([\alpha])$ are surjective. We need only show that the inverse image $\alpha = h_n^{-1}(p)$ of any element $p$ of a HoTT identity type $B$ in $\Pinh_n$ is a cell in $C_n$, as then $h_n(\alpha) = p\in B$ and $g_n([\alpha]) = B$ by \eqref{eq:slam2}. Since $B$ and $p$ were arbitrary, $h_n\rest[\alpha]$ and $g_n$ are surjective.

We proceed by induction. The claim is immediate for $n=0$. Now for any inhabited type $\Id Bqr$ at level $n\ge 1$ and any $p\in\Id Bqr$, let $\alpha = h_n^{-1}(p)\in H_n$. To show that $\alpha\in C_n$, by our definition, we need to show two things:
\begin{enumerate}[(i)]
\item for all $t\in I$, $\alpha\s1t\in C_{n-1}$,
\item if $n\ge 2$, then for all $t\in I$ and $b\in\{0,1\}$, $\alpha\s2b^{(1)}$ is an identity, that is, the value of $\alpha\s2b\s1t$ is independent of $t$.
\end{enumerate}

For (i), since $h_n(\alpha) = h_{n-1}\circ\alpha^{(1)}$, we have
\begin{align*}
h_{n-1}(\alpha\s1t) = h_n(\alpha)(t) = p(t)\in B.
\end{align*}
By the inductive hypothesis, $\alpha\s1t\in C_{n-1}$.

For (ii), since $h_{n-1}(\alpha\s1t) = h_{n-2}\circ\alpha\s1t^{(1)}$, we have
\begin{align*}
h_{n-2}(\alpha\s20\s1t) &= h_{n-2}(\alpha\s1t\s10)\\
&= h_{n-1}(\alpha\s1t)(0) = h_n(\alpha)(t)(0) = p(t)(0) = q\\
h_{n-2}(\alpha\s21\s1t) &= h_{n-2}(\alpha\s1t\s11)\\
&= h_{n-1}(\alpha\s1t)(1) = h_n(\alpha)(t)(1) = p(t)(1) = r,
\end{align*}
thus $\alpha\s20\s1t = h_{n-2}^{-1}(q)$ and $\alpha\s21\s1t = h_{n-2}^{-1}(r)$, which are independent of $t$.

Since $h_n$ is injective and the classes $[\alpha]$ partition $C_n$, the images $g_n([\alpha])$ are pairwise distinct. Hence $g_n : C_n/{\bdry n} \to \Pinh_n$ is bijective. Because $h_n$ is a homeomorphism, each restriction $h_n\rest[\alpha] : [\alpha] \to g_n([\alpha])$
is a homeomorphism.
\exqed
\end{proof}
\end{toappendix}

\appendix
\end{document}
\endinput